\documentclass[12pt,a4paper]{amsart}
\usepackage[english,french]{babel}
\usepackage[latin1]{inputenc}
\usepackage{color}
\usepackage{amsfonts}
\usepackage{amssymb}
\usepackage{newlfont}
\usepackage{mathrsfs}
\usepackage{float}
\usepackage{enumerate}
\newcommand\blfootnote[1]{%
  \begingroup
  \renewcommand\thefootnote{}\footnote{#1}%
  \addtocounter{footnote}{-1}%
  \endgroup
}

\definecolor{liens}{rgb}{1,0,0}
\usepackage[colorlinks=true, linkcolor=blue, 
hyperfootnotes=true,citecolor=blue,urlcolor=black]{hyperref}

\usepackage{tikz}
\usetikzlibrary{shapes.geometric}

\newtheorem{theo}{Theorem}

\newtheorem{prop}[theo]{Proposition}

\newtheorem{example}[theo]{Example}
\newtheorem{lem}[theo]{Lemma}

\newtheorem{remark}[theo]{Remark}

\newcommand{\Z}{\mathbb Z}
\newcommand{\Q}{\mathbb{Q}}

\newcommand{\R}{\mathbb R}
\newcommand{\C}{\mathbb C}

\newcommand{\ev}[1]{\operatorname{ev}_{\lambda=#1}}

\newcommand{\val}{\operatorname{val}_{z}}

\newcommand{\Rla}{\mathscr{R}_{\lambda}}
\newcommand{\Rspe}{\mathscr{R}}

\newcommand{\New}{\mathcal{N}}

\newcommand{\ellmahl}{p}

\newcommand{\malop}[1]{\phi_{#1}}

\newcommand{\logm}{\ell}

\newcommand{\cld}{\operatorname{cld_{z}}}

\newcommand{\dqhahn}{\mathcal{D}_{\Hahn}}

\newcommand{\univ}[1]{\mathscr{U}}

\newcommand{\Hahn}{\mathscr{H}}
\newcommand{\Hahnlamb}{\Hahn_{\C(\lambda)}}
\newcommand{\supp}{\operatorname{supp}}

\setcounter{tocdepth}{2}

\usepackage[all]{xy}

\title[Frobenius method]{Frobenius method for Mahler equations}
\author{Julien Roques}
\address{Universite Claude Bernard Lyon 1, CNRS, Ecole Centrale de Lyon, INSA Lyon, Universit\'e Jean Monnet, ICJ UMR5208, 69622 Villeurbanne, France.}
\email{Julien.Roques@univ-lyon1.fr}
\thanks{}

\date{\today}

\begin{document}
\selectlanguage{english}
\sloppy

\blfootnote{This is the accepted version of the following article : {\it Julien Roques. Frobenius method for Mahler equations. J. Math. Soc. Japan 76 (1) 229 - 268, January, 2024},  which has been published in final form at \url{https://doi.org/10.2969/jmsj/89258925}.}

\begin{abstract}
Using Hahn series, one can attach to any linear Mahler equation a basis of solutions at $0$ reminiscent of the solutions of linear differential equations at a regular singularity. We show that such a basis of solutions can be produced by using a variant of Frobenius method. 
\end{abstract}

\subjclass[2010]{39A06,12H10}

\keywords{Linear difference equations}

\maketitle

\section{Introduction}

A linear Mahler equation with coefficients in $\C(z)$ is a functional equation of the form
\begin{equation}\label{eq mahl intro}
 a_{n}(z) y(z^{\ellmahl^{n}}) + a_{n-1}(z) y(z^{\ellmahl^{n-1}}) + \cdots + a_{0}(z) y(z) = 0
\end{equation}
for some $\ellmahl \in \Z_{\geq 2}$, some $n \in \Z_{\geq 0}$ and some $a_{0}(z),\ldots,a_{n}(z) \in \C(z)$ such that $a_{0}(z)a_{n}(z) \neq 0$.  These equations appear for instance in the theory of automatic sequences~: the generating series of any automatic sequence satisfies a nontrivial linear Mahler equation with coefficients in $\C(z)$. The following list of references gives an idea of the many facets of the theory of Mahler equations \cite{MahlerK1929Arith,MahlerUber1930,Kubota,LoxtVanderPort,MasserVanishTheoPowSeries,RandeThese,DumasThese,BeckerKReg,NishiokaLNM1631,DF96,ZannierOFE,CorvajaZannierSNA,AS03,pellarinAIMM,NG,NGT,PhilipponGaloisNA,SchafkeSinger,AlgIndMahlFunc,BorisAboutMahler,BorisFaverjonMethodeMahler17,BorisFaverjonMethodeMahler,DHRMahler,CompSolMahlEq,BeckerConj,FernandesMMCarNN,RoquesLSMS,PouletDensity}. 

In \cite{RoquesLSMS}, we have shown the relevance of Hahn series for the study of Mahler equations. It follows from \cite[Theorem 2]{RoquesLSMS} that \eqref{eq mahl intro} has $n$ $\C$-linearly independent solutions $y_{1}(z),\ldots,y_{n}(z)$ of the form 
\begin{equation}\label{form yi intro}
 y_{i}(z) = \sum_{(c,j) \in \C^{\times} \times \Z_{\geq 0}}  f_{i,c,j}(z) e_{c}(z) \logm(z)^{j} 
\end{equation}
where the terms involved in this finite sum have the following properties~:
\begin{itemize}
\item the $f_{i,c,j}(z)$ belong to the field $\Hahn$ of Hahn series with coefficients in $\C$ and value group $\Q$; 
 \item $e_{c}(z)$ satisfies $e_{c}(z^{\ellmahl})=ce_{c}(z)$;
\item $\logm(z)$ satisfies $\logm(z^{\ellmahl})=\logm(z)+1$. 
\end{itemize}

\begin{remark}
If $c\in\C^{\times}\setminus\{1\}$, then the equation $y(z^{\ellmahl})=cy(z)$ has no nonzero solution in $\Hahn$. Similarly, the equation $y(z^{\ellmahl})=y(z)+1$ has no solution in $\Hahn$. 
In this paper, the $e_{c}(z)$ and $\ell(z)$ will be ``symbols'', {\it i.e.}, they will be constructed algebraically as elements of a certain difference field extension of $\Hahn$; see Section~\ref{sec hahn}. However, note that one can find ``true functions'' solutions of these equations; for instance, $(\log z)^{\frac{\log c}{\log \ellmahl}}$ is a solution of $y(z^{\ellmahl})=cy(z)$ and $\frac{\log \log z}{\log \ellmahl}$ is a solution of $y(z^{\ellmahl})=y(z)+1$. 
\end{remark}

The shape of the $y_{i}(z)$ is reminiscent of the shape of the solutions of a linear differential equation at a {\it regular singularity}. Indeed, we recall that  
any linear differential equation of order $n$ with coefficients in $\C(z)$ having at worst a regular singularity at $0$ has $n$ $\C$-linearly independent solutions $\widetilde{y}_{1}(z),\ldots,\widetilde{y}_{n}(z)$ of the form 
\begin{equation}\label{form yi diffel intro}
 \widetilde{y}_{i}(z) = \sum_{(\alpha,j) \in \C \times \Z_{\geq 0}}  \widetilde{f}_{i,\alpha,j}(z) z^{\alpha} \log(z)^{j} 
\end{equation}
where the $\widetilde{f}_{i,\alpha,j}(z)$ involved in this finite sum belong to the field of formal Laurent series $\C((z))$.  
It is thus natural to wonder wether a basis of solutions of the form \eqref{form yi intro} can be derived by using a variant of the celebrated Frobenius method \cite{FrobMethodFrob}. 
The aim of the present paper is to bring a positive answer to this question. 

\begin{remark}
The original Frobenius method produces solutions of linear differential equations at a regular singular point. We remind that the fact that a given point is regular singular can be read from a certain Newton polygon: it corresponds to the case where this polygon has only one (finite) slope and that this slope is equal to $0$. See Section~\ref{classical frob method} for references. 
It turns out that such restrictions are not necessary in the Mahler case: the method introduced in the present paper works for any Mahler equation with coefficients in the field $\Hahn$ of Hahn series. 
\end{remark}

\subsection{Organization of the paper}
Section \ref{classical frob method} provides an overview of the classical Frobenius method for linear differential equations. An extension of Frobenius method to Mahler equations is presented in Section~\ref{sec: statement frob}. Its statement requires notions and notations introduced in 
Section \ref{sec hahn} and Section \ref{sec: newt pol} (Section \ref{sec hahn} introduces rings where the calculations of the method will take place, Section \ref{sec: newt pol} introduces the notions of Newton polygons, slopes and exponents for Mahler equations). A simple but non trivial example of application of the method is given in Section \ref{sec: main example}. The proofs of the unjustified statements of Section~\ref{sec: statement frob} occupy Sections \ref{sec: fact} and \ref{sec:frob method}. 

\vskip 10 pt 
\noindent {\bf Acknowledgement.} I would like to thank the anonymous referee for its useful comments. This work was supported by the ANR De rerum natura project, grant ANR-19-CE40-0018 of the French Agence Nationale de la Recherche.

\section{Frobenius method for regular singular linear differential equations}\label{classical frob method} In the theory of differential equations, Frobenius left his name to an important method for computing a basis of solutions of any linear differential equation at a regular singularity. Here is an outline of this method; we refer to  \cite{CoddingtonLevinsonTODE,fromgausstopainleve,FrobMethEncMath,FrobMethodFrob} for more details. 

Consider a (formal) linear differential operator 
\begin{equation}\label{op diffal intro}
\mathcal{L} = \delta^{n} + b_{n-1}(z) \delta^{n-1} + \cdots +b_{0}(z)
\end{equation}
with coefficients $b_{0}(z),\ldots,b_{n-1}(z)$ in the field $\C((z))$ of formal Laurent series written in terms of Euler operator $\delta=z\frac{\mathrm{d} \ }{\mathrm{d} z}$.
As mentioned above, we will focus our attention on the regular singular case, {\it i.e.}, from now on, we assume that $b_{0}(z),\ldots,b_{n-1}(z)$ belong to the ring $\C[[z]]$ of formal power series (see \cite[Definition~3.14]{VdPS}). 

\begin{remark}\label{rem: reg sing and newton diffal eq}
One classically attaches to $\mathcal{L}$ a Newton polygon $\mathcal{N}(\mathcal{L}) \subset \R^{2}$ and a (finite) set of slopes $\mathcal{S}(\mathcal{L}) \subset \Q$; see \cite{PerronULD,HsiehSibuya} and \cite[$\S$3.3]{VdPS}. The following properties are equivalent: 
\begin{itemize}
 \item $\mathcal{L}$ is regular singular;
 \item $b_{0}(z),\ldots,b_{n-1}(z)$ belong to $\C[[z]]$;
 \item  $\mathcal{S}(\mathcal{L})=\{0\}$.
\end{itemize}
 See \cite[Definition~3.14]{VdPS} and \cite[Exercise~3.47]{VdPS}.  
\end{remark}

The indicial polynomial of $\mathcal{L}$ at $0$ is the polynomial of degree $n$ given by   
 $$
 \chi(\mathcal{L};X) = X^{n} + b_{n-1}(0)X^{n-1}+\cdots+b_{0}(0).
 $$ 
The complex roots of $\chi(\mathcal{L};X)$ are called the exponents of $\mathcal{L}$ at $0$. The multiplicity of such an exponent $\alpha$ is by definition its multiplicity as a root of $\chi(\mathcal{L};X)$ and will be denoted by $m_{\alpha}$. If $\alpha \in \C$ is not an exponent of $\mathcal{L}$ at $0$ then we set $m_{\alpha}=0$. 

The Frobenius method attaches $m_{\alpha}$ solutions of $\mathcal{L}$ to any exponent $\alpha$  as follows.  
The fundamental idea is to introduce a parameter $\lambda$.  One can prove that:
\begin{itemize}
 \item there exists a unique $g_{\alpha}(\lambda,z) \in 1+z\C(\lambda)[[z]]$ such that 
$$
\mathcal{L}(g_{\alpha}(\lambda,z)z^{\lambda}) = (\lambda-\alpha)^{s_{\alpha}+m_{\alpha}}z^{\lambda} 
$$
where 
$$
s_{\alpha} = \sum_{\beta \in \alpha + \Z_{\geq 1}} m_{\beta};
$$
\item the coefficients of $g_{\alpha}(z,\lambda)$ have no pole at $\lambda=\alpha$;
\item the derivatives 
\begin{eqnarray*}
 &\partial_{\lambda}^{s_{\alpha}} (g_{\alpha}(\lambda,z)z^{\lambda})_{\vert \lambda=\alpha},& \\ 
 &\partial_{\lambda}^{s_{\alpha}+1} (g_{\alpha}(\lambda,z)z^{\lambda})_{\vert \lambda=\alpha},& \\
  &\ldots& \\
& \partial_{\lambda}^{s_{\alpha} + m_{\alpha}-1} (g_{\alpha}(\lambda,z)z^{\lambda})_{\vert \lambda=\alpha}&
\end{eqnarray*}
are $m_{\alpha}$ $\C$-linearly independent solutions of $\mathcal{L}$, where we have used the notation $\partial_{\lambda}=\frac{\partial \ }{\partial \lambda}$. 
\end{itemize}
 We then have~: 

\begin{theo}[{Frobenius \cite{FrobMethodFrob}}]\label{theo:theo frob intro}
 We have attached to any exponent $\alpha$  and to any $m \in \{0,\ldots,m_{\alpha}-1\}$ a solution 
$$
y_{\alpha,m}=\partial_{\lambda}^{s_{\alpha}+m} (g_{\alpha}(\lambda,z)z^{\lambda})_{\vert \lambda=\alpha}
$$ 
of $\mathcal{L}$.   
We obtain in this way a family of $n$ $\C$-linearly independent solutions of $\mathcal{L}$. 
\end{theo}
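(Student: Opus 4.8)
The plan is to prove Theorem~\ref{theo:theo frob intro} by taking the three bulleted assertions that precede it as given, and then organizing the resulting solutions by exponent classes modulo the relation $\alpha \sim \alpha + \Z_{\geq 1}$. First I would fix an exponent $\alpha$. The bulleted facts tell me that $g_{\alpha}(\lambda,z) \in 1 + z\C(\lambda)[[z]]$ has coefficients regular at $\lambda = \alpha$, so that $g_{\alpha}(\lambda,z)z^{\lambda}$ is $\lambda$-analytic near $\alpha$ in the appropriate sense (each coefficient of $z^{\nu}$ is a holomorphic function of $\lambda$ near $\alpha$, and $z^{\lambda} = z^{\alpha}e^{(\lambda-\alpha)\log z}$ expands as a finite-in-each-degree power series in $(\lambda-\alpha)$ with coefficients involving $\log(z)$). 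The key computational identity is $\mathcal{L}(g_{\alpha}(\lambda,z)z^{\lambda}) = (\lambda-\alpha)^{s_{\alpha}+m_{\alpha}}z^{\lambda}$. Differentiating $j$ times with respect to $\lambda$ and evaluating at $\lambda = \alpha$, the product/Leibniz rule together with the fact that $\partial_{\lambda}^{j}[(\lambda-\alpha)^{s_{\alpha}+m_{\alpha}}]_{\vert\lambda=\alpha}=0$ for all $j < s_{\alpha}+m_{\alpha}$ shows that $\mathcal{L}\bigl(\partial_{\lambda}^{j}(g_{\alpha}(\lambda,z)z^{\lambda})_{\vert\lambda=\alpha}\bigr)=0$ precisely for $j \in \{s_{\alpha},\ldots,s_{\alpha}+m_{\alpha}-1\}$; this is exactly why the derivatives listed in the third bullet are solutions, and it accounts for the shift by $s_{\alpha}$.

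Next I would establish that, across all exponents, the total count is correct and that the family is linearly independent. The count is immediate: for each exponent $\alpha$ we produce $m_{\alpha}$ solutions, and $\sum_{\alpha} m_{\alpha}=n$ since $\chi(\mathcal{L};X)$ has degree $n$ and its roots (with multiplicity) are exactly the exponents. The delicate point is that different exponents may be congruent modulo $\Z_{\geq 1}$, so one must check that the solutions attached to distinct exponents in the same class do not collide. The device $s_{\alpha}=\sum_{\beta\in\alpha+\Z_{\geq 1}}m_{\beta}$ is engineered precisely for this: within a fixed congruence class, ordering the exponents $\alpha_1, \alpha_2, \ldots$ by decreasing real part (or more precisely so that $\alpha_{k+1}\in\alpha_k+\Z_{\geq 1}$), the shift $s_{\alpha}$ guarantees that the leading $z$-exponent appearing in $y_{\alpha,m}$ is genuinely $z^{\alpha}$ and not absorbed into the tail of a solution attached to a congruent exponent of larger real part.

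For linear independence I would argue via the associated graded structure with respect to the filtration by powers of $\log(z)$ combined with the leading-exponent analysis. Concretely, each $y_{\alpha,m}=\partial_{\lambda}^{s_{\alpha}+m}(g_{\alpha}(\lambda,z)z^{\lambda})_{\vert\lambda=\alpha}$ has, by Leibniz expansion of $\partial_{\lambda}^{s_{\alpha}+m}(g_{\alpha}\cdot z^{\alpha}e^{(\lambda-\alpha)\log z})_{\vert\lambda=\alpha}$, a top $\log$-power equal to $\log(z)^{s_{\alpha}+m}$ with coefficient a nonzero element of $z^{\alpha}\C[[z]]$ whose lowest-degree term is $z^{\alpha}/(s_{\alpha}+m)!$. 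A standard triangularity argument then shows that a vanishing $\C$-linear combination $\sum_{\alpha,m}c_{\alpha,m}y_{\alpha,m}=0$ forces all $c_{\alpha,m}=0$: one reads off coefficients starting from the highest $\log$-power and, within each $\log$-power, from the lowest valuation in $z$, using that elements $z^{\alpha}\log(z)^{k}$ for distinct $(\alpha \bmod \Z, k)$ and distinct representatives are linearly independent over $\C[[z]]$-combinations of lower terms.

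The main obstacle I expect is exactly this bookkeeping across a single congruence class of exponents: making the triangularity precise requires carefully separating the contribution of $y_{\alpha,m}$ from those $y_{\alpha',m'}$ with $\alpha'\in\alpha+\Z_{\geq 1}$, since their $z$-expansions overlap. The role of $s_{\alpha}$ is to push the $\log$-power filtration so that, class by class, the pairs $(s_{\alpha}+m)$ are all distinct as $\alpha$ ranges over the class and $m$ over $\{0,\ldots,m_{\alpha}-1\}$ — indeed they fill out the interval $\{0,1,\ldots,S-1\}$ where $S=\sum_{\beta\in\text{class}}m_{\beta}$ — so that the top $\log$-power alone already separates all solutions within one class. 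I would record this combinatorial identity as the crux and verify it directly from the definition of $s_{\alpha}$; once it is in hand, the triangularity and hence linear independence follow cleanly, completing the proof.
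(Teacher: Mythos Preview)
The paper does not give its own proof of this theorem. Theorem~\ref{theo:theo frob intro} is stated in Section~\ref{classical frob method} purely as background: it is attributed to Frobenius, and the reader is referred to \cite{CoddingtonLevinsonTODE,fromgausstopainleve,FrobMethEncMath,FrobMethodFrob} for details. There is therefore nothing in the paper to compare your argument against.

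That said, your sketch is essentially correct and, notably, parallels the strategy the paper does carry out in full for the Mahler analogue (Theorem~\ref{theo:theo pas du tout intro}, proved in Section~\ref{subsec: preuve theo principal}): there too one separates the solutions first by the ``character'' $e_{c}$ (your congruence classes $\alpha \bmod \Z$), then within each class uses triangularity in the $\logm$-filtration (your $\log(z)$-filtration) combined with a valuation argument. Your combinatorial observation that the values $s_{\alpha}+m$, as $\alpha$ runs over a congruence class and $m$ over $\{0,\ldots,m_{\alpha}-1\}$, fill out an interval $\{0,\ldots,S-1\}$ without repetition is exactly the mechanism that makes the triangularity work, and is the differential-equations counterpart of what the paper exploits via the formulas \eqref{eq:val des hcjlu}.

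One small correction: in your Leibniz expansion, the coefficient of the top power $\log(z)^{s_{\alpha}+m}$ in $y_{\alpha,m}$ is $g_{\alpha}(\alpha,z)z^{\alpha}=z^{\alpha}(1+O(z))$, not $z^{\alpha}/(s_{\alpha}+m)!$; the binomial coefficient for the top term is $1$ and no factorial appears. This does not affect the argument.
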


Using the Leibniz rule, we see that 
\begin{eqnarray*}
 y_{\alpha,m} & \in & \operatorname{Span}_{\C((z))} (\partial_{\lambda}^{0} (z^{\lambda})_{\vert \lambda=\alpha},\partial_{\lambda}^{1} (z^{\lambda})_{\vert \lambda=\alpha},\ldots,\partial_{\lambda}^{s_{\alpha}+m} (z^{\lambda})_{\vert \lambda=\alpha}) \\ 
 &=& \operatorname{Span}_{\C((z))} (z^{\alpha},\log(z)z^{\alpha},\ldots,(\log(z))^{s_{\alpha}+m}z^{\alpha}). 
\end{eqnarray*}

\begin{remark}
The solutions attached to $\alpha$ involve $\log(z)$ if $\alpha$ is an exponent of $\mathcal{L}$ of multiplicity $\geq 2$ ({\it i.e.}, $m_{\alpha} \geq 2$). But, this is not the only case : logarithms may also appear in the solutions attached to $\alpha$ if $\alpha + \Z_{\geq 1}$ contains at least one exponent of $\mathcal{L}$ ({\it i.e.}, $s_{\alpha} \geq 1$). 
\end{remark}

\section{Hahn series and other useful rings} \label{sec hahn}

The calculations involved in the classical Frobenius method outlined in Section \ref{classical frob method} take place in the differential ring 
$$
\C(\lambda)((z))[z^{\lambda},\partial_{\lambda} (z^{\lambda}),\partial_{\lambda}^{2}(z^{\lambda}),\ldots]
$$ 
equipped with the derivatives $\partial_{z}=\frac{\partial \ }{\partial z}$ and $\partial_{\lambda}=\frac{\partial \ }{\partial \lambda}$. 
This section aims at introducing an avatar of this differential ring in which the calculations of our variant of  Frobenius method for Mahler equations will take place. 

\subsection{The ring of Hahn series} The first fundamental fact is that the classical power series will not be sufficient for our purpose; we will need Hahn series \cite{HahnSeriesHahn}.

Let $R$ be a ring. We denote by 
$
\Hahn_{R}
$ 
the ring of Hahn series with coefficients in $R$ and with value group $\Q$. An element of $\Hahn_{R}$ is a sequence $(f_{\gamma})_{\gamma \in \Q} \in R^{\Q}$ whose support 
$$
\supp((f_{\gamma})_{\gamma \in \Q})=\{\gamma\in \Q \ \vert \ f_{\gamma}\neq 0\}
$$ 
is well-ordered, {\it i.e.}, any nonempty subset of $\supp(f)$ has a least element. An element $(f_{\gamma})_{\gamma \in \Q}$ of $
\Hahn_{R}
$
is usually (and will be) denoted by 
 $$
 f(z)=\sum _{{\gamma \in \Q }}f_{\gamma}z^{\gamma}. 
 $$
The sum and product of two elements 
$f(z)=\sum _{{\gamma\in \Q }}f_{\gamma}z^{\gamma}$ and 
$g(z)=\sum _{{\gamma\in \Q }}g_{\gamma}z^{\gamma}$ of $\Hahn_{R}$ 
are respectively defined by
$$
f(z)+g(z)=\sum _{{\gamma\in \Q }}(f_{\gamma}+g_{\gamma})z^{\gamma}
$$
and
$$
f(z)g(z)=\sum _{{\gamma\in \Q }}\left(\sum _{{\gamma'+\gamma''=\gamma}}f_{{\gamma'}}g_{{\gamma''}}\right)z^{\gamma}.
$$
(Note that there are only finitely many $(\gamma',\gamma'') \in \Q \times \Q$ such that $\gamma'+\gamma''=\gamma$ and $f_{{\gamma'}}g_{{\gamma''}}\neq 0$.)

If $R$ is an integral domain (resp. a field), then $\Hahn_{R}$ is an integral domain (resp. a field) as well.  

For $R=\C$, we will use the shorthand notation 
$$
\Hahn=\Hahn_{\C}. 
$$

The $z$-adic valuation on $\Hahn_{R}$ will be denoted by
$$
\val : \Hahn_{R} \rightarrow \Q \cup \{+\infty\}.
$$
It is given, for any $f(z) \in \Hahn_{R}$, by 
$$
\val(f(z))=\min \supp(f(z))
$$
with the convention $\min \emptyset = +\infty$. 

For $f(z)=\sum _{{\gamma\in \Q }}f_{\gamma}z^{\gamma} \in \Hahn_{R} \setminus \{0\}$, we will denote by $\cld(f(z))$ the coefficient of $z^{\val(f(z))}$ in $f(z)$~: 
$$\cld(f(z)) = f_{\val(f(z))} \in R \setminus \{0\}.$$ 
We will say that $f(z) \in \Hahn_{R}\setminus \{0\}$ is tangent to the identity if $\val(f(z))=0$ and $\cld(f(z)) = 1$. 

\subsection{Structure of difference ring on the ring of Hahn series $\Hahn_{R}$}

One can endow $\Hahn_{R}$ with the ring automorphism 
\begin{eqnarray}\label{phisurHR}
 \malop{\ellmahl} : \ \ \ \ \ \ \ \ \ \Hahn_{R} \ \ \ \ \ \ &\rightarrow& \ \ \ \Hahn_{R}\\
f(z)=\sum _{{\gamma\in \Q }}f_{\gamma}z^{\gamma} &\mapsto& \sum _{{\gamma\in \Q }}f_{\gamma}z^{\ellmahl\gamma}.\nonumber
\end{eqnarray}
We obtain in this way a structure of difference field on $\Hahn_{R}$.

\subsection{The differential-difference ring $\Hahn_{\C(\lambda)}$}

We will make an essential use of the field $\Hahn_{\C(\lambda)}$ of Hahn series with coefficients in the field of rational functions $\C(\lambda)$ and with value group $\Q$. 

One can endow $\Hahn_{\C(\lambda)}$ with the field automorphism 
\begin{equation}\label{phisurHClambda}
 \malop{\ellmahl} : \Hahn_{\C(\lambda)}\rightarrow \Hahn_{\C(\lambda)}
 \end{equation}
defined by \eqref{phisurHR}
and with the derivation 
\begin{eqnarray}\label{partialsurHClambda}
 \partial_{\lambda} =  \frac{\partial \  }{\partial \lambda} : \ \ \ \ \ \ \ \ \ \Hahn_{\C(\lambda)} \ \ \ \ \ \ \ &\rightarrow &  \ \ \  \Hahn_{\C(\lambda)}\\
 f(\lambda,z)=\sum _{{\gamma\in \Q }}f_{\gamma}(\lambda)z^{\gamma} &\mapsto& \sum _{{\gamma\in \Q }}\frac{\mathrm{d}   f_{\gamma}}{\mathrm{d} \lambda} (\lambda)z^{\gamma}.\nonumber
\end{eqnarray}
We obtain in this way a structure of differential-difference field on $\Hahn_{\C(\lambda)}$. We emphasize that $\partial_{\lambda}$ and $\malop{\ellmahl}$ commute on $\Hahn_{\C(\lambda)}$:
$$
\partial_{\lambda} \circ \malop{\ellmahl} = \malop{\ellmahl} \circ \partial_{\lambda}. 
$$

The role played by the differential-difference field $\Hahn_{\C(\lambda)}$ in the present paper is similar to the role played by the differential field $\C(\lambda)((z))$  in the classical Frobenius method outlined in Section \ref{classical frob method}. 

\subsection{The differential-difference ring $\Rla$} We shall now introduce a differential-difference ring $\Rla$ that will play the same role as the differential ring 
$$
\C(\lambda)((z))[z^{\lambda},\partial_{\lambda} (z^{\lambda}),\partial_{\lambda}^{2}(z^{\lambda}),\ldots]
$$
(equipped with the derivatives $\partial_{z}$ and $\partial_{\lambda}$) in the classical Frobenius method. 

We consider the differential ring of differential polynomials  
$$
\Rla=\Hahn_{\C(\lambda)}[X,\partial_{\lambda}X,\partial_{\lambda}^{2}X,\ldots]. 
$$  
As a ring, this is simply the ring of polynomials with coefficients in $\Hahn_{\C(\lambda)}$ and in the indeterminates $X,\partial_{\lambda}X,\partial_{\lambda}^{2}X,\ldots$
This ring is equipped with the unique derivation
$$
\partial_{\lambda}: \Rla \rightarrow \Rla
$$
extending \eqref{partialsurHClambda} and such that 
$
\partial_{\lambda}(\partial_{\lambda}^{i}X)=\partial_{\lambda}^{i+1}X.
$
We also equip $\Rla$ with the unique ring automorphism 
$$
\malop{\ellmahl}: \Rla \rightarrow \Rla
$$ 
extending \eqref{phisurHClambda} and such that, 
 for all $i \geq 0$, $\malop{\ellmahl}(\partial_{\lambda}^{i}X)=\partial_{\lambda}^{i}(\lambda X)$. 
This turns $\Rla$ into a differential-difference ring extension of $\Hahn_{\C(\lambda)}$. We emphasize that $\partial_{\lambda}$ and $\malop{\ellmahl}$ commute on $\Rla$:
$$
\partial_{\lambda} \circ \malop{\ellmahl} = \malop{\ellmahl} \circ \partial_{\lambda}. 
$$

It  will be convenient to introduce the following notations : 
\begin{itemize}
 \item $
e_{\lambda} = X \in \Rla
$;
\item for all $i \geq 0$, 
$
\logm_{\lambda,i} = \frac{\partial_{\lambda}^{i}e_{\lambda}}{i!}=\frac{\partial_{\lambda}^{i}X}{i!} \in \Rla; 
$
\end{itemize}
so that $\Rla$ can be rewritten as follows :
\begin{eqnarray*}
\Rla &=& \Hahn_{\C(\lambda)}[e_{\lambda},\partial_{\lambda}(e_{\lambda}),\partial_{\lambda}^{2}(e_{\lambda}),\ldots] \\
&=&  \Hahn_{\C(\lambda)}[e_{\lambda},\logm_{\lambda,1},\logm_{\lambda,2},\ldots]. 
\end{eqnarray*}
It is of fundamental importance in what follows to keep in mind that  $e_{\lambda}$ satisfies the following equation 
\begin{equation}\label{eq de elambda}
\malop{\ellmahl}(e_{\lambda})=\lambda e_{\lambda}.  
\end{equation}
More generally, for all $i \geq 0$, we have 
$$
\malop{\ellmahl}(\logm_{\lambda,i}) 
= 
\lambda \logm_{\lambda,i} +   \logm_{\lambda,i-1}
$$ 
where $\logm_{\lambda,i-1}=\logm_{\lambda,-1}=0$ for $i=0$. 

\begin{remark}
 The difference equation \eqref{eq de elambda} satisfied by $e_{\lambda}$ has to be compared with the differential equation 
$$
z\partial_{z}(z^{\lambda})=\lambda z^{\lambda}
$$ satisfied by $z^{\lambda}$; this  $e_{\lambda}$ will play in the present paper a role similar to that of $z^{\lambda}$ in Section~\ref{classical frob method}. More generally, the  $\logm_{\lambda,i}$ will play a role similar to that of the $\partial_{\lambda}^{i}(z^{\lambda})$ in Section \ref{classical frob method}. 
\end{remark}

\subsection{The difference ring $\Rspe$}\label{sec : def Rspe}

Throughout this paper, we denote by $\Rspe$ a difference ring extension of $\Hahn$ such that, for any $c \in \C^{\times}$ and any integer $i\geq 0$, there exists a nonzero $\logm_{c,i} \in \Rspe$ such that 
\begin{equation}\label{eq logmcj}
 \malop{\ellmahl}(\logm_{c,i} )=c\logm_{c,i} +\logm_{c,i-1}, 
\end{equation}
where $\logm_{c,i-1}=\logm_{c,-1}=0$ for $i=0$. For $i=0$, we use the notation 
$$
e_{c}=\logm_{c,0}.
$$ 
Of course, equation \eqref{eq logmcj} gives 
$$
\malop{\ellmahl}(e_{c})=c e_{c}.
$$
The solutions produced by the Frobenius method presented in this paper will belong to $\Rspe$. 

\begin{remark}\label{rem: il suffit l}
 It is sufficient to require that $\Rspe$ satisfies the following properties:
 \begin{itemize}
 \item for any $c \in \C^{\times}$, there exists $e_{c} \in \Rspe$ which is not a zero divisor such that $\malop{\ellmahl} (e_{c}) = c e_{c}$; 
 \item there exists $\logm \in \Rspe$ such that $\malop{\ellmahl}(\logm)=\logm+1$. 
\end{itemize}
 Indeed, a straightforward calculation shows that 
$$
\logm_{c,i} = c^{-i}e_{c} \binom{\logm}{i} 
$$
is nonzero and satisfies \eqref{eq logmcj}. 
\end{remark}

A possible choice is the polynomial ring 
$$
\Rspe=\Hahn[(X_{c})_{c \in \C^{\times}},Y]
$$
endowed with the unique automorphism 
$$
\malop{\ellmahl} : \Rspe \rightarrow \Rspe
$$
extending \eqref{phisurHR} and such that $\malop{\ellmahl} (X_{c})=cX_{c}$ and $\malop{\ellmahl}(Y)=Y+1$. Of course, $e_{c}=X_{c}$ and $\logm=Y$ have the properties required in Remark \ref{rem: il suffit l}. 

However, we emphasize that the specific choices of $e_{c}$ and $\logm_{c,i}$ is not important; the important thing are the functional equations \eqref{eq logmcj} they satisfy.

\subsection{Evaluating elements of $\Rla$ at $\lambda = c \in \C^{\times}$} As in the classical Frobenius method (see Section \ref{classical frob method}), we will need to ``specialize elements of $\Rla$ at $\lambda=c \in \C^{\times}$''. Here is what we mean by ``specialize'' in this context.

We let $\C[\lambda]_{(\lambda-c)}$ be the ring of rational fractions regular at $c \in \C^{\times}$. Then, 
$\Hahn_{\C[\lambda]_{(\lambda-c)}}
$ 
is a differential-difference subring of $\Hahn_{\C(\lambda)}$ 
and  
\begin{equation}\label{Rlamc}
\Rspe_{\lambda,c} = \Hahn_{\C[\lambda]_{(\lambda-c)}}[e_{\lambda},\logm_{\lambda,1},\logm_{\lambda,2},\ldots]
\end{equation}
is a differential-difference subring of $\Rla$. 
There is a unique morphism of rings 
$$
\ev{c} : \Rspe_{\lambda,c} \rightarrow \Rspe
$$ 
such that, for all $f(\lambda,z) \in \Hahn_{\C[\lambda]_{(\lambda-c)}}$,
$$
\ev{c}(f(\lambda,z))=f(c,z)
$$
and, for all $i \geq 0$,
$$
 \ev{c}(\logm_{\lambda,i})=\logm_{c,i}.
$$
This is a morphism of difference rings. In the context of this paper, ``specilizing at $\lambda=c$'' means taking the image by $\ev{c}$.

\section{Newton polygons, slopes and exponents}\label{sec: newt pol}

In this section, we consider a Mahler equation 
\begin{equation}\label{eq mahl slopes}
 a_{n}(z) y(z^{\ellmahl^{n}}) + a_{n-1}(z) y(z^{\ellmahl^{n-1}}) + \cdots + a_{0}(z) y(z) = 0
\end{equation}
with coefficients $a_{0}(z),\ldots,a_{n}(z) \in \Hahn$. We can and will assume that $a_{0}(z)a_{n}(z)\neq 0$.

The aim of this Section is to attach to \eqref{eq mahl slopes} certain slopes and to attach to each slope certain exponents. 

\subsection{Mahler equations and Mahler operators}

The Mahler equation \eqref{eq mahl slopes} can be written as follows: 
$$
L (y(z)) = 0
$$
where 
\begin{equation}\label{mahl op assoc}
 L=a_{n}(z)\malop{\ellmahl}^{n} + a_{n-1}(z) \malop{\ellmahl}^{n-1} + \cdots + a_{0}(z). 
\end{equation}
This is an element of the Ore algebra 
$$
\mathcal{D}_{\Hahn}=\Hahn\langle \malop{\ellmahl},\malop{\ellmahl}^{-1} \rangle
$$ 
of noncommutative Laurent polynomials with coefficients in $\Hahn$ such that, for all $f\in \Hahn$, 
$
\malop{\ellmahl} f(z) = \malop{\ellmahl}(f(z)) \malop{\ellmahl}.
$ 
An element of $\mathcal{D}_{\Hahn}$ will be called a Mahler operator (with coefficients in $\Hahn$). The Mahler operator $L$ given by \eqref{mahl op assoc} will be called the Mahler operator associated with \eqref{eq mahl slopes}.

\subsection{Newton polygons}\label{subsec:slopes}
We define the Newton polygon $\mathcal N(L)$ of $L$ 
as the convex hull in $\R^2$ of 
$$
\{(\ellmahl^{i},j) \ \vert \ i,j \in \Z, \ j \geq \val(a_{i}(z))\} \subset \R^{2}
$$
where $\val : \Hahn \rightarrow \Q \cup \{+\infty\}$ denotes the $z$-adic valuation (this is for instance the Newton polygon used in \cite{CompSolMahlEq}). 

\subsection{Slopes}\label{sec : def slopes} Let 
\begin{multline*}
 v_{0}=(\ellmahl^{\alpha_{0}},\val(a_{\alpha_{0}}(z)))=(p^{0},\val(a_{0}(z))),\ldots\\
\ldots,v_{i}=(\ellmahl^{\alpha_{i}},\val(a_{\alpha_{i}}(z))),\ldots\\
\ldots,v_{k}=(\ellmahl^{\alpha_{k}},\val(a_{\alpha_{k}}(z)))=(\ellmahl^{n},\val(a_{n}(z)))
\end{multline*}
be the vertices, ordered by increasing abscissa, of the polygon $\mathcal N(L)$. This polygon is delimited by  two vertical half lines and by $k$ vectors 
$$
w_{1}=v_{1}-v_{0}, \ldots, w_{k}=v_{k}-v_{k-1} \in \Z_{>0} \times \Q
$$ 
having pairwise distinct slopes denoted by 
$$
\mu_{1} < \cdots < \mu_{k}
$$ 
and called the slopes of $L$. The set of slopes of $L$ will be denoted by $\mathcal{S}(L)$. For any $i\in \{1,\ldots,k\}$, the integer 
$$
r(\mu_{i},L)= \alpha_{i}-\alpha_{i-1}
$$ 
is called the multiplicity of $\mu_{i}$ as a slope of $L$; in what follows, we will also use the shorthand notation 
$$
r_{i}=r(\mu_{i},L).
$$ 

\begin{example}\label{example order 2}
The Newton polygon of the Mahler operator 
\begin{eqnarray}\label{eq:un exemple bis}
 L&=&(\malop{\ellmahl}-z^{\nu}) h(z)^{-1} (\malop{\ellmahl}-1)\\
 &=& \frac{1}{1+ z^{-\frac{\ellmahl\nu}{\ellmahl-1}}}\malop{\ellmahl}^{2} - \left(\frac{1}{1+ z^{-\frac{\ellmahl\nu}{\ellmahl-1}}}+z^{\nu}\right) \malop{\ellmahl} + \frac{z^{\nu}}{1+ z^{-\frac{\nu}{\ellmahl-1}}} \nonumber
\end{eqnarray}
with $\nu \in \Q_{<0}$ and $h(z)=1+z^{-\frac{\nu}{\ellmahl-1}}$ is represented in Figure \ref{fig:Newt bis}.

This operator has two slopes, namely  
$$
\mu_{1}=0 < \mu_{2}=-\frac{\nu}{(\ellmahl-1)\ellmahl},
$$ 
with multiplicities $r_{1}=r_{2}=1$.
\end{example}

 \begin{figure}
  \begin{tikzpicture}[inner sep=0,x=35pt,y=35pt,font=\footnotesize]
  \fill[fill=black!5!white] (1,2)--(1,-1)--(4,-1)--(8,0)--(8,2);
   \filldraw (0,0) circle (1pt);
   \filldraw (1,0) circle (1pt);
   \filldraw (4,0) circle (1pt);
   \filldraw (8,0) circle (1pt);
   \filldraw (0,-1) circle (1pt);
   \filldraw (0,2) circle (1pt);
   \filldraw (0,1) circle (1pt);
   \filldraw (0,2) circle (1pt);
   \node at (-0.15,0) {$0$};
   \node at (-0.15,1) {$1$};
   \node at (-0.15,2) {$2$};
   \node at (-0.25,-1) {$\nu$};
   \node at (0.9,-0.2) {$1$};
   \node at (4.0,-0.2) {$\ellmahl$};
   \node at (8.0,-0.2) {$\ellmahl^{2}$};
   \filldraw (1,-1) circle (2pt);
   \filldraw (4,-1) circle (2pt);
   \filldraw (8,0) circle (2pt);
   \node at (0.75,-0.85) {$v_{0}$};
   \node at (2.4,1) {};
   \node at (4.2,-1.15) {$v_{1}$};
   \node at (8.2,0.15) {$v_{2}$};
  \draw[-, thick]  (1,2) -- (1,-1) -- (1.5,-1) ;
   \draw[dashed, thick]   (1.5,-1) -- (3.5,-1) ;
    \draw[-, thick]   (3.5,-1) -- (4,-1)-- (4.5,-0.875) ;
     \draw[dashed, thick]   (1.5,-1) -- (3.5,-1) ;
    \draw[dashed, thick]   (4.5,-0.875)-- (7.5,-0.125) ;
    \draw[-, thick]    (7.5,-0.125) -- (8,0) ;
    \draw[-, thick]    (8,0) -- (8,2) ;
 \draw[-] (0,0) -- (1.5,0) node[below=5pt] {};
 \draw[dashed] (1.5,0) -- (3.5,0) node[below=5pt] {};
 \draw[-] (3.5,0) -- (4.5,0) node[below=5pt] {};
  \draw[dashed] (4.5,0) -- (7.5,0) node[below=5pt] {};
   \draw[->] (7.5,0) -- (9,0) node[below=5pt] {};
   \draw[->] (0,-0.2) -- (0,2.3) node[left=10pt] {};
   \draw[dashed] (0,-0.8) -- (0,-0.2) node[below=5pt] {};
    \draw[-] (0,-1.2) -- (0,-0.8) node[below=5pt] {};
   \node[font=\normalsize] at (6.0,1.5) {
   };
  \end{tikzpicture}
  \caption{The Newton polygon of the Mahler operator  \eqref{eq:un exemple bis}. 
The $\bullet$ represent the points $(\ellmahl^{i},\val(a_{i}))$ for $i \in \{0,1,2\}$.  }
\label{fig:Newt bis}
  \end{figure}
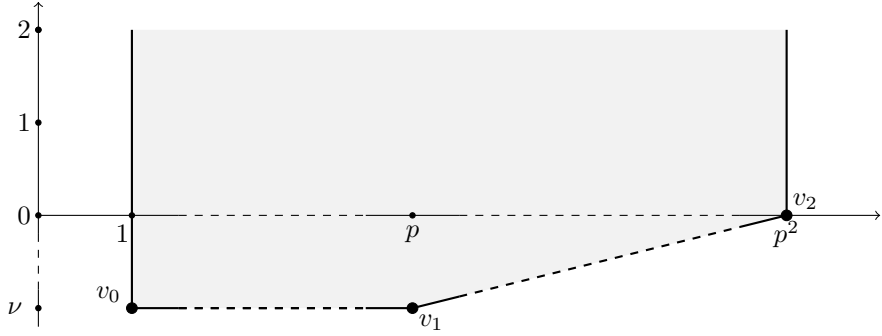

\begin{remark}
We have extracted two informations from $\mathcal{N}(L)$ : its slopes and their multiplicities. Of course, $\mathcal{N}(L)$ is characterized by these data up to vertical translation. 
\end{remark}

For any $\mu \in \Q$, we consider $$
\theta_{\mu}(z)=z^{\frac{\mu}{\ellmahl-1}}.
$$ 
It satisfies  
$$
\malop{\ellmahl} (\theta_{\mu}(z)) = z^{\mu} \theta_{\mu}(z).
$$ 
We set   
\begin{multline*}
 L^{[\theta_{\mu}(z)]}=\theta_{\mu}(z)^{-1}L\theta_{\mu}(z) \\ 
 =\sum_{i=0}^{n} z^{(1+\ellmahl+\cdots+\ellmahl^{i-1})\mu} a_i(z) \malop{\ellmahl}^i
=
\sum_{i=0}^{n} z^{\frac{\ellmahl^{i}-1}{\ellmahl-1}\mu} a_i(z) \malop{\ellmahl}^i.
\end{multline*}

\begin{prop}\label{prop: slopes l theta mu}
The slopes of $L^{[\theta_{\mu}(z)]}$ are 
$$
\mu_{1} + \frac{\mu}{\ellmahl-1} < \cdots < \mu_{k}+\frac{\mu}{\ellmahl-1}
$$ 
with respective multiplicities $r_{1},\ldots,r_{k}$. 
\end{prop}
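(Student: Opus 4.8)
The plan is to realize the conjugation $L\mapsto L^{[\theta_{\mu}(z)]}$, at the level of Newton polygons, as a single affine shear of $\R^{2}$, and then to read off how the slopes and their multiplicities transform under it.

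First I would record the effect of the conjugation on valuations. By the displayed formula for $L^{[\theta_{\mu}(z)]}$, its $i$-th coefficient is $b_{i}(z)=z^{\frac{\ellmahl^{i}-1}{\ellmahl-1}\mu}a_{i}(z)$, so additivity of the valuation gives
\[
\val(b_{i}(z))=\val(a_{i}(z))+\frac{\mu}{\ellmahl-1}(\ellmahl^{i}-1),
\]
with the convention that both sides equal $+\infty$ when $a_{i}(z)=0$ (such an $i$ contributes no point to either polygon). Introducing the affine map $\sigma\colon\R^{2}\to\R^{2}$ defined by $\sigma(x,y)=\bigl(x,\,y+\frac{\mu}{\ellmahl-1}(x-1)\bigr)$, this says exactly that $\sigma\bigl(\ellmahl^{i},\val(a_{i}(z))\bigr)=\bigl(\ellmahl^{i},\val(b_{i}(z))\bigr)$ for every $i$.

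Next I would argue that $\sigma$ carries $\mathcal{N}(L)$ onto $\mathcal{N}(L^{[\theta_{\mu}(z)]})$. The map $\sigma$ is an affine bijection (its linear part is unipotent of determinant $1$), it fixes every abscissa, and it sends the upward vertical direction $(0,1)$ to itself; hence it sends upward vertical rays to upward vertical rays and preserves the lower boundary that carries the slopes. The slopes and their multiplicities depend only on that lower boundary, whose vertices are $v_{0},\ldots,v_{k}$, and by the previous step $\sigma(v_{j})=\bigl(\ellmahl^{\alpha_{j}},\val(b_{\alpha_{j}}(z))\bigr)$; therefore the vertices of the lower boundary of $\mathcal{N}(L^{[\theta_{\mu}(z)]})$ are exactly $\sigma(v_{0}),\ldots,\sigma(v_{k})$. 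Since $\sigma$ fixes abscissas, these vertices retain the abscissa-exponents $\alpha_{0},\ldots,\alpha_{k}$, so the multiplicities $r_{i}=\alpha_{i}-\alpha_{i-1}$ are unchanged.

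It then remains to compute the new slopes. The edge joining $\sigma(v_{i-1})$ to $\sigma(v_{i})$ has slope
\[
\frac{\val(b_{\alpha_{i}}(z))-\val(b_{\alpha_{i-1}}(z))}{\ellmahl^{\alpha_{i}}-\ellmahl^{\alpha_{i-1}}}
=\frac{\val(a_{\alpha_{i}}(z))-\val(a_{\alpha_{i-1}}(z))}{\ellmahl^{\alpha_{i}}-\ellmahl^{\alpha_{i-1}}}+\frac{\mu}{\ellmahl-1}
=\mu_{i}+\frac{\mu}{\ellmahl-1},
\]
where the middle equality uses the valuation formula above together with the telescoping $\frac{\mu}{\ellmahl-1}(\ellmahl^{\alpha_{i}}-1)-\frac{\mu}{\ellmahl-1}(\ellmahl^{\alpha_{i-1}}-1)=\frac{\mu}{\ellmahl-1}(\ellmahl^{\alpha_{i}}-\ellmahl^{\alpha_{i-1}})$. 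Since $\sigma$ adds the same constant $\frac{\mu}{\ellmahl-1}$ to every slope, it preserves their strict ordering, which gives the slopes $\mu_{1}+\frac{\mu}{\ellmahl-1}<\cdots<\mu_{k}+\frac{\mu}{\ellmahl-1}$ with multiplicities $r_{1},\ldots,r_{k}$, as claimed. The one point requiring care is the identification $\sigma(\mathcal{N}(L))=\mathcal{N}(L^{[\theta_{\mu}(z)]})$, namely checking that a shear fixing the vertical direction is compatible with the convex-hull description of the Newton polygon and hence commutes with forming its lower boundary; everything else is a direct computation.
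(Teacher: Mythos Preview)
Your proof is correct and follows essentially the same approach as the paper: both compute the new coefficient valuations $\val(b_{i}(z))=\val(a_{i}(z))+\frac{\mu}{\ellmahl-1}(\ellmahl^{i}-1)$, identify the new vertices as vertical translates $v_{i}'=v_{i}+(0,\frac{\mu}{\ellmahl-1}(\ellmahl^{\alpha_{i}}-1))$ of the old ones, and read off the shifted slopes from the resulting edge vectors. Your packaging via the explicit affine shear $\sigma$ is a slightly cleaner way of saying why the $v_{i}'$ are indeed the vertices of the new polygon (affine bijections preserve convex hulls and $\sigma$ fixes the vertical direction), but the underlying argument is the same as the paper's.
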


\begin{proof}
The Newton polygon $\mathcal N(L^{[\theta_{\mu}(z)]})$ is the convex hull of  
\begin{equation*}
 \left\{(\ellmahl^{i},j) \ \vert \ i,j \in \Z, \ j \geq \val(a_{i}(z))+\mu \frac{\ellmahl^{i}-1}{\ellmahl-1}\right\}. 
\end{equation*}
Its vertices are 
\begin{multline*}
 v'_{0}=v_{0}+(0,\mu \frac{\ellmahl^{\alpha_{0}}-1}{\ellmahl-1}),\ldots \\
 \ldots, v_{i}'=v_{i}+(0,\mu \frac{\ellmahl^{\alpha_{i}}-1}{\ellmahl-1}),\ldots \\\ldots,v_{k}'=v_{k}+(0,\mu \frac{\ellmahl^{\alpha_{k}}-1}{\ellmahl-1}),
\end{multline*}
and, hence, $\mathcal N(L^{[\theta_{\mu}]})$
is delimited by  two vertical half lines and by the $k$ vectors 
\begin{multline*}
w_{1}'=v_{1}'-v_{0}'=w_{1}+(0,\mu \frac{\ellmahl^{\alpha_{1}}-\ellmahl^{\alpha_{0}}}{\ellmahl-1}), \ldots \\
 \ldots,
  w_{i}'=v_{i}'-v_{i-1}'=w_{i} +(0,\mu \frac{\ellmahl^{\alpha_{i}}-\ellmahl^{\alpha_{i-1}}}{\ellmahl-1}),
  \ldots\\
\ldots , w_{k}'=v_{k}'-v_{k-1}'=w_{k} +(0,\mu \frac{\ellmahl^{\alpha_{k}}-\ellmahl^{\alpha_{k-1}}}{\ellmahl-1}).
\end{multline*} 
\end{proof}

In particular, the $j$-th slope of $L^{[\theta_{-(\ellmahl-1)\mu_{j}}(z)]}$ is equal to $0$. This will be used in the next section in order to attach to the slope $\mu_{j}$ a certain characteristic polynomial and certain exponents.

\subsection{Characteristic equations and exponents}\label{subsec:exponents}

Consider a slope  $\mu_{j}$ of $L$ and set 
$$
L^{[\theta_{-(\ellmahl-1)\mu_{j}}(z)]}=\sum_{i=0}^{n} b_{i}(z) \malop{\ellmahl}^{i}.
$$ 
Let 
$$\val(L^{[\theta_{-(\ellmahl-1)\mu_{j}}]}) = \min \{\val(b_{0}(z)),\ldots,\val(b_{n}(z))\}.
$$ 
The characteristic polynomial $\chi(\mu_{j},L;X)$ associated 
to the slope $\mu_{j}$ of $L$ is  
\begin{eqnarray*}
 \chi(\mu_{j},L;X)&=&\left(z^{-\val(L^{[\theta_{-(\ellmahl-1)\mu_{j}}(z)]})} \sum_{i=0}^{n} b_{i}(z)X^{i}\right)_{\vert z=0} \\
 &=&\left(z^{-\val(L^{[\theta_{-(\ellmahl-1)\mu_{j}}(z)]})}\sum_{i=\alpha_{j-1}}^{\alpha_{j}} b_{i}(z) X^{i} \right)_{\vert z=0}. 
\end{eqnarray*}

In what follows, we will see these characteristic polynomials as defined modulo a multiplicative factor in $\C^{\times} X^{\Z}$; in particular, the equalities involving the characteristic polynomials have to be interpreted modulo $\C^{\times} X^{\Z}$.  
The characteristic polynomial $\chi(\mu_{j},L;X)$ has $r_{j}$ roots (counted with multiplicities) in $\C^{\times}$  called the exponents of $L$ attached to the slope $\mu_{j}$. The multiplicity of such an exponent is its multiplicity as a root of $\chi(\mu_{j},L;X)$.
 
 \begin{example}\label{ex:2}
 We have already seen that the Mahler operator $L$ given by  
\eqref{eq:un exemple bis} has two slopes 
$$
\mu_{1}=0 < \mu_{2}=-\frac{\nu}{(\ellmahl-1)\ellmahl},
$$ 
with multiplicities $r_{1}=r_{2}=1$.

Let us compute the characteristic polynomial of the Mahler operator~\eqref{eq:un exemple bis} associated to the slope $\mu_{1}$. We have $$
L^{[\theta_{-(\ellmahl-1)\mu_{1}}(z)]}=L=\frac{1}{1+ z^{-\frac{\ellmahl\nu}{\ellmahl-1}}}\malop{\ellmahl}^{2} - \left(\frac{1}{1+ z^{-\frac{\ellmahl\nu}{\ellmahl-1}}}+z^{\nu}\right) \malop{\ellmahl} + \frac{z^{\nu}}{1+ z^{-\frac{\nu}{\ellmahl-1}}} 
$$ 
and 
$$\val(L^{[\theta_{-(\ellmahl-1)\mu_{1}}(z)]}) = \nu.
$$ 
Thus, the characteristic polynomial $\chi(\mu_{1},L;X)$ associated 
to the slope $\mu_{1}$ of $L$ is  
\begin{multline*}
 \chi(\mu_{1},L;X)=\\ 
 \left(z^{-\nu}\left(  \frac{1}{1+ z^{-\frac{\ellmahl\nu}{\ellmahl-1}}}X^{2} - \left(\frac{1}{1+ z^{-\frac{\ellmahl\nu}{\ellmahl-1}}}+z^{\nu}\right) X + \frac{z^{\nu}}{1+ z^{-\frac{\nu}{\ellmahl-1}}} \right) \right)_{\vert z=0} \\
 = -X+1.
\end{multline*}
So, $1$ is the unique exponent of $L$ attached to the slope $\mu_{1}$ and it has multiplicity $1$.

Let us compute the characteristic polynomial of the Mahler operator~\eqref{eq:un exemple bis} associated to the slope $\mu_{2}$. We have 
\begin{eqnarray}
 L^{[\theta_{-(\ellmahl-1)\mu_{2}}(z)]}
 &=& \frac{z^{(\ellmahl+1)\frac{\nu}{\ellmahl}}}{1+ z^{-\frac{\ellmahl\nu}{\ellmahl-1}}}\malop{\ellmahl}^{2} - z^{\frac{\nu}{\ellmahl}}\left(\frac{1}{1+ z^{-\frac{\ellmahl\nu}{\ellmahl-1}}}+z^{\nu}\right) \malop{\ellmahl} + \frac{z^{\nu}}{1+ z^{-\frac{\nu}{\ellmahl-1}}} \nonumber
\end{eqnarray}
and 
$$\val(L^{[\theta_{-(\ellmahl-1)\mu_{2}}(z)]}) = (\ellmahl+1)\frac{\nu}{\ellmahl}.
$$ 
Thus, the characteristic polynomial $\chi(\mu_{2},L;X)$ associated 
to the slope $\mu_{2}$ of $L$ is  
\begin{multline*}
 \chi(\mu_{2},L;X)=\\ 
 \left(z^{-(\ellmahl+1)\frac{\nu}{\ellmahl}}\left(   \frac{z^{(\ellmahl+1)\frac{\nu}{\ellmahl}}}{1+ z^{-\frac{\ellmahl\nu}{\ellmahl-1}}}X^{2} - z^{\frac{\nu}{\ellmahl}}\left(\frac{1}{1+ z^{-\frac{\ellmahl\nu}{\ellmahl-1}}}+z^{\nu}\right) X + \frac{z^{\nu}}{1+ z^{-\frac{\nu}{\ellmahl-1}}} \right) \right)_{\vert z=0} \\
 = X^{2}-X. 
\end{multline*}
So, $1$ is the unique exponent of $L$ attached to the slope $\mu_{2}$ and it has multiplicity $1$.
\end{example}

 \section{An extension of Frobenius method to Mahler equations}\label{sec: statement frob}
 
 We consider a Mahler equation 
 $$
 a_{n}(z) y(z^{\ellmahl^{n}}) + a_{n-1}(z) y(z^{\ellmahl^{n-1}}) + \cdots + a_{0}(z) y(z) = 0
 $$
 with $a_{0}(z),\ldots,a_{n}(z) \in \Hahn$ and $a_{0}(z)a_{n}(z)\neq 0$.
It can be rewritten as 
 \begin{equation}\label{eq mahl slopes sec meth frob}
 L (y(z))=0
\end{equation}
where 
\begin{equation*}\label{op mahl descr frob meth}
L= a_{n}(z) \malop{\ellmahl}^{n} + a_{n-1}(z) \malop{\ellmahl}^{n-1} + \cdots + a_{0}(z) \in \mathcal{D}_{\Hahn}.
\end{equation*}

We denote by 
$$
\mu_{1} < \cdots < \mu_{k}
$$  
the slopes of $L$ and by $r_{1},\ldots,r_{k}$ their respective multiplicites. The multiplicity of an exponent $c$ of $L$ attached to the slope $\mu_{j}$ will be denoted by $m_{c,j} \in \Z_{\geq 1}$. If $c$ is a nonzero complex number that is not an exponent attached to the slope $\mu_{j}$, we set $m_{c,j}=0$. We refer the reader to Section~\ref{sec: newt pol} for details about the notions of slopes and  exponents. 

The extension of Frobenius method to Mahler equations presented below attaches $m_{c,j}$ solutions of \eqref{eq mahl slopes sec meth frob} to any exponent $c$  associated to the slope $\mu_{j}$ of $L$ as follows. We will prove that 
\begin{itemize}
 \item there exists a unique $g_{c,j}(\lambda,z)\in\Hahnlamb$ such that 
\begin{equation}\label{eq frob bis bis bis bis bis}
 L(g_{c,j}(\lambda,z)e_{\lambda}) = z^{\val(a_{0}(z))-\frac{\nu_{j}}{\ellmahl-1}}(\lambda-c)^{s_{c,j}+m_{c,j}} e_{\lambda}
\end{equation}
where 
$$
s_{c,j} = m_{c,1}+\cdots+m_{c,j-1}
$$
and 
$$
\nu_{j} =(\ellmahl-1) (\ellmahl^{r_{1}+\cdots+r_{j-1}}(\mu_{j}-\mu_{j-1})+\cdots+\ellmahl^{r_{1}}(\mu_{2}-\mu_{1})+\mu_{1}); 
$$
\item the $z$-adic valuation of $g_{c,j}(\lambda,z)$ is equal to $-\mu_{j}$;
 \item  the coefficients of $g_{c,j}(\lambda,z)$ have no pole at $\lambda=c$; 
 \item  the specializations 
 \begin{eqnarray*}
 &\ev{c}(\partial_{\lambda}^{s_{c,j}} (g_{c,j}(\lambda,z)e_{\lambda})),& \\ 
 &\ev{c}(\partial_{\lambda}^{s_{c,j}+1} (g_{c,j}(\lambda,z)e_{\lambda})),& \\
  &\ldots& \\
& \ev{c}(\partial_{\lambda}^{s_{c,j}+m_{c,j}-1} (g_{c,j}(\lambda,z)e_{\lambda}))&
\end{eqnarray*}    
are $m_{c,j}$ solutions of \eqref{eq mahl slopes sec meth frob}. 
\end{itemize}
For the justifications of the first three properties above, see Proposition \ref{resol homogene pour frob} in Section \ref{sec:frob method}. For the last one, see Proposition \ref{prop: les ycjm sol} in Section \ref{sec: frob meth last just}. 

We can now state the main result of this paper. 

\begin{theo}\label{theo:theo pas intro}
 We have attached to any slope $\mu_{j}$,  to any exponent $c$ attached to the slope $\mu_{j}$ and to any $m \in \{0,\ldots,m_{c,j}-1\}$, a solution 
$$
y_{c,j,m}=\ev{c}(\partial_{\lambda}^{s_{c,j}+m} (g_{c,j}(\lambda,z)e_{\lambda}))
$$ 
of~\eqref{eq mahl slopes sec meth frob}.   
We obtain in this way a family of $n$ $\C$-linearly independent solutions of~\eqref{eq mahl slopes sec meth frob}. 
\end{theo}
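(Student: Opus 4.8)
The plan is to combine three ingredients: that the $y_{c,j,m}$ are genuine solutions (Proposition \ref{prop: les ycjm sol}), a counting argument giving exactly $n$ of them, and a triangularity argument for their $\C$-linear independence. First I would count. For a fixed slope $\mu_{j}$ the characteristic polynomial $\chi(\mu_{j},L;X)$ has exactly $r_{j}$ roots in $\C^{\times}$ counted with multiplicity, so $\sum_{c}m_{c,j}=r_{j}$. Summing over slopes and using $\sum_{j=1}^{k}r_{j}=\alpha_{k}-\alpha_{0}=n-0=n$ (recall $v_{0}$ has abscissa $\ellmahl^{\alpha_{0}}=\ellmahl^{0}$ and $v_{k}$ has abscissa $\ellmahl^{\alpha_{k}}=\ellmahl^{n}$), the total number of admissible triples $(c,j,m)$, with $m\in\{0,\dots,m_{c,j}-1\}$, equals $\sum_{j}\sum_{c}m_{c,j}=\sum_{j}r_{j}=n$. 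That each $y_{c,j,m}$ solves \eqref{eq mahl slopes sec meth frob} is exactly Proposition \ref{prop: les ycjm sol}. The remaining content is therefore linear independence.

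For independence I would first expand each solution along the symbols $\logm_{c,i}$. Applying the Leibniz rule to $\partial_{\lambda}^{t}(g_{c,j}(\lambda,z)e_{\lambda})$ with $t=s_{c,j}+m$, using $\partial_{\lambda}^{i}e_{\lambda}=i!\,\logm_{\lambda,i}$ and specializing via $\ev{c}$ (legitimate since the coefficients of $g_{c,j}$, hence of its $\lambda$-derivatives, have no pole at $\lambda=c$ by Proposition \ref{resol homogene pour frob}), I obtain
\begin{equation*}
 y_{c,j,m}=\sum_{i=0}^{t}\binom{t}{i} i!\;\ev{c}\!\big(\partial_{\lambda}^{t-i}g_{c,j}\big)\,\logm_{c,i},\qquad t=s_{c,j}+m.
\end{equation*}
The crucial point is the top coefficient, that of $\logm_{c,t}$: it equals $t!\,\ev{c}(g_{c,j})$, and $\ev{c}(g_{c,j})$ is a nonzero element of $\Hahn$ of valuation $-\mu_{j}$ by the normalization in Proposition \ref{resol homogene pour frob}. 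Thus $y_{c,j,m}$ is a $\Hahn$-linear combination of $\logm_{c,0},\dots,\logm_{c,t}$ with nonzero highest term.

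I would then record the combinatorial fact that, for fixed $c$, the top index $t=s_{c,j}+m$ is a bijection between the admissible pairs $(j,m)$ (those with $m_{c,j}\geq 1$ and $0\le m\le m_{c,j}-1$) and $\{0,1,\dots,S_{c}-1\}$, where $S_{c}=\sum_{j}m_{c,j}$; indeed, since $s_{c,j}$ is the total multiplicity of $c$ among the slopes preceding $\mu_{j}$, the intervals $[s_{c,j},s_{c,j}+m_{c,j}-1]$ tile $\{0,\dots,S_{c}-1\}$. Now suppose $\sum_{c,j,m}\kappa_{c,j,m}y_{c,j,m}=0$ with $\kappa_{c,j,m}\in\C$. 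Since the family $\{\logm_{c,i}\}_{c\in\C^{\times},\,i\ge 0}$ is $\Hahn$-linearly independent in $\Rspe$ (clear in the explicit model $\Rspe=\Hahn[(X_{c})_{c},Y]$ via $\logm_{c,i}=c^{-i}e_{c}\binom{\logm}{i}$, and in general a consequence of the distinct multipliers $c$ in \eqref{eq logmcj}), extracting the coefficient of each $\logm_{c,i}$ yields, for every $c$ and $i$,
\begin{equation*}
 \sum_{(j,m):\,s_{c,j}+m\ge i}\kappa_{c,j,m}\,\binom{s_{c,j}+m}{i} i!\;\ev{c}\!\big(\partial_{\lambda}^{s_{c,j}+m-i}g_{c,j}\big)=0\quad\text{in }\Hahn.
\end{equation*}
Fixing $c$ and running a downward induction on $i$ from $S_{c}-1$ to $0$ finishes the proof: at step $i$ the unique surviving admissible pair is the one with $s_{c,j}+m=i$ (all pairs with larger top index already carry $\kappa=0$), whose contribution is $\kappa\cdot i!\,\ev{c}(g_{c,j})$; as $\ev{c}(g_{c,j})\neq 0$ and $\Hahn$ is a field, $\kappa=0$.

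The main obstacle, and the step deserving the most care, is the nonvanishing of the specialized leading coefficient $\ev{c}(g_{c,j})$: one must know not merely that $g_{c,j}$ has $z$-valuation $-\mu_{j}$ over $\C(\lambda)$, but that its leading ($z^{-\mu_{j}}$) coefficient does not vanish upon evaluation at $\lambda=c$; this is precisely where the normalization of $g_{c,j}$ furnished by Proposition \ref{resol homogene pour frob} is indispensable. The $\Hahn$-independence of the symbols $\logm_{c,i}$ is the other point to pin down, but it is immediate in the explicit realization of $\Rspe$.
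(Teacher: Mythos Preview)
Your triangularity argument hinges on the claim that $\ev{c}(g_{c,j})$ is a nonzero element of $\Hahn$ of valuation $-\mu_{j}$. This claim is false as soon as $s_{c,j}\geq 1$. Indeed, formula~\eqref{eq: form cld g} in Proposition~\ref{resol homogene pour frob} gives
\[
\cld g_{c,j}(\lambda,z)=\lambda^{-r_{1}-\cdots-r_{j-1}}\cdot(\text{nonzero constant})\cdot\frac{(\lambda-c)^{s_{c,j}+m_{c,j}}}{\prod_{l=1}^{r_{j}}(\lambda-c_{j,l})},
\]
and since exactly $m_{c,j}$ of the $c_{j,l}$ equal $c$, the $(\lambda-c)$-adic valuation of $\cld g_{c,j}$ is $s_{c,j}$, not $0$. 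Hence the coefficient of $z^{-\mu_{j}}$ in $\ev{c}(g_{c,j})$ vanishes whenever $s_{c,j}\geq 1$; the example in Section~\ref{sec: main example} already illustrates this, where $g_{2}(\lambda,z)=(\lambda-1)\sum_{k\leq -1}\lambda^{k}z^{\ellmahl^{k}\nu/(\ellmahl-1)}+1$ has $\ev{1}(g_{2})=1$, of valuation $0\neq -\mu_{2}$. Your downward induction on the top index $t=s_{c,j}+m$ then needs, at each step, the bare nonvanishing $\ev{c}(g_{c,j})\neq 0$, and nothing in Proposition~\ref{resol homogene pour frob} gives you this; you would have to prove it separately, and you have not.

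The paper's proof avoids this issue by pivoting on a different index. Instead of the top coefficient $h_{c,j,m,t}=t!\,\ev{c}(g_{c,j})$, it singles out $h_{c,j,m,m}$, proportional to $\ev{c}\bigl(\partial_{\lambda}^{s_{c,j}}g_{c,j}\bigr)$, whose leading $z^{-\mu_{j}}$-term is $\ev{c}\bigl(\partial_{\lambda}^{s_{c,j}}\cld g_{c,j}\bigr)\neq 0$ precisely because $\cld g_{c,j}$ vanishes to order \emph{exactly} $s_{c,j}$ at $c$. The resulting triangularity is two-layered: one first isolates the slope $\mu_{k}$ by looking at the $z^{-\mu_{k}}$-coefficient (all contributions from $j<k$ have valuation $\geq -\mu_{j}>-\mu_{k}$, and for $j=k$ the terms with $u>m$ have valuation $>-\mu_{k}$), then peels off $m=m_{c,k}-1,\,m_{c,k}-2,\dots$ within that slope, and iterates over $j$. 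Your single-index descent on $t$ collapses this structure and loses exactly the control you need.
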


The proof of this result will be given in Section \ref{sec: frob meth last just}. 

Note that 
\begin{eqnarray*}
 y_{c,j,m} & \in & \operatorname{Span}_{\Hahn} (\ev{c}(\partial_{\lambda}^{0} (e_{\lambda})),\ev{c}(\partial_{\lambda}^{1} (e_{\lambda})),\ldots \\
&& \ \ \ \ \ \ \ \ \ \ \ \ \ \ \ \ \ \ \ \ \ \ \ \ \ \ \ \ \ \ \ \ \ \ \ \ \ \ \ \ \ldots ,\ev{c}(\partial_{\lambda}^{s_{c,j}+m_{c,j}-1} (e_{\lambda}))) \\ 
 &=& \operatorname{Span}_{\Hahn} (e_{c},\logm_{c,1},\ldots,\logm_{c,s_{c,j}+m_{c,j}-1}). 
\end{eqnarray*}

\begin{remark}
The solutions attached to $c$ involve an $\logm_{c,k}$ with $k \geq 1$ if $c$ has multiplicity $\geq 2$ as an exponent of $L$ attached to a slope $\mu_{j}$ ({\it i.e.}, $m_{c,j} \geq 2$). But, this is not the only case : such an $\logm_{c,k}$ may also appear if $c$ is an exponent of $L$ attached to two distinct slopes $\mu_{i}$ and $\mu_{j}$ ({\it i.e.}, $s_{c,j} \geq 1$ if $j>i$). 
\end{remark}

\begin{remark} 
The relevant fact with the term $z^{-\frac{\nu_{j}}{\ellmahl-1}}$ involved in the right-hand side of \eqref{eq frob bis bis bis bis bis} is its $z$-adic valuation equals to $-\frac{\nu_{j}}{\ellmahl-1}$~: it guarantees that $g_{c,j}(\lambda,z)$ has $z$-adic valuation $-\mu_{j}$ and, hence, is ``associated'' with the slope $\mu_{j}$. 
\end{remark}

\section{Frobenius method for Mahler equations : an example}\label{sec: main example}

In this section, we apply the method described in Section \ref{sec: statement frob} to an operator allowing explicit computations, namely to the operator considered in Example \ref{example order 2} given by
\begin{eqnarray*}
 L
 &=&(\malop{\ellmahl}-z^{\nu}) h(z)^{-1} (\malop{\ellmahl}-1)\\
 &=& \frac{1}{1+ z^{-\frac{\ellmahl\nu}{\ellmahl-1}}}\malop{\ellmahl}^{2} - \left(\frac{1}{1+ z^{-\frac{\ellmahl\nu}{\ellmahl-1}}}+z^{\nu}\right) \malop{\ellmahl} + \frac{z^{\nu}}{1+ z^{-\frac{\nu}{\ellmahl-1}}}\\
&=& a_{2}(z)\malop{\ellmahl}^{2}+a_{1}(z)\malop{\ellmahl}+a_{0}(z)
\end{eqnarray*}
where $\nu \in \Q_{<0}$ and $h(z)=1+z^{-\frac{\nu}{\ellmahl-1}}$.

In what follows, we use the notations $\mu_{i}$, $r_{i}$, $\nu_{j}$, {\it etc}, of Section \ref{sec: statement frob}. 

We have seen in Example \ref{example order 2} that $L$ has two slopes, namely  
$$
\mu_{1}=0 < \mu_{2}=-\frac{\nu}{(\ellmahl-1)\ellmahl},
$$ 
with multiplicities 
$$r_{1}=r_{2}=1.$$ 
So,  
$$\nu_{1}=(\ellmahl-1)\mu_{1}=0, \ \ \nu_{2}=(\ellmahl-1)(p^{r_{1}}(\mu_{2}-\mu_{1})+\mu_{1})=-\nu.
$$ 
Moreover, we have seen in Example~\ref{ex:2} that $1$ is the unique exponent of $L$ and that the multiplicites $m_{1,1}$ and $m_{1,2}$ of $1$ as an exponent of $L$ attached to the slopes $\mu_{1}$ and $\mu_{2}$ respectively are  
$$
m_{1,1}=m_{1,2}=1.
$$ 
Thus 
$$
s_{1,1}=0, \ \ \ s_{1,2}=1.
$$

The Frobenius method described in Section \ref{sec: statement frob} relies on the equations 
\begin{eqnarray}\label{eq:first eq}
 L( g_{1}(\lambda,z)e_{\lambda})&=& z^{\val(a_{0}(z))} \theta_{-\nu_{1}}(z)(\lambda-1)^{m_{1,1}} e_{\lambda} \\ 
 &=&z^{\nu}  (\lambda-1)e_{\lambda} \nonumber
\end{eqnarray}
and 
\begin{eqnarray}\label{eq:second eq}
 L( g_{2}(\lambda,z)e_{\lambda})&=& z^{\val(a_{0}(z))} \theta_{-\nu_{2}}(z) (\lambda-1)^{m_{1,1}+m_{1,2}} e_{\lambda} \\ 
 &=&z^{\nu} \theta_{\nu}(z) (\lambda-1)^{2} e_{\lambda}.  \nonumber
\end{eqnarray}
We shall now give explicit formulas for the unique $ g_{1}(\lambda,z), g_{2}(\lambda,z) \in \Hahn_{\C(\lambda)}$ satisfying the above equations. In this purpose, we will freely use Lemma \ref{lem:sol ordre un inhom mu z c un} stated and proven in Section \ref{sec: prel res first just} below. 

Let us start with equation \eqref{eq:first eq}. This equation can be written as follows
$$(\lambda \malop{\ellmahl}-z^{\nu})  h(z)^{-1}(\lambda \malop{\ellmahl}-1)(g_{1}(\lambda,z))=z^{\nu} (\lambda-1).$$
Multiplying by $\theta_{\nu}(z)^{-1}z^{-\nu}$, we see that the latter equation is equivalent to 
$$
(\lambda \malop{\ellmahl}-1)(\theta_{\nu}(z)^{-1}h(z)^{-1}(\lambda \malop{\ellmahl}-1)(g_{1}(\lambda,z))) =\theta_{\nu}(z)^{-1}(\lambda-1). 
$$
Since the $z$-adic valuation of the right hand-side of the latter equality is positive, Lemma \ref{lem:sol ordre un inhom mu z c un} ensures that the latter equation is equivalent to 
$$
\theta_{\nu}^{-1}h(z)^{-1}(\lambda \malop{\ellmahl}-1)(g_{1}(\lambda,z)) = -(\lambda-1) \sum_{k \geq 0} \lambda^{k}\malop{\ellmahl}^{k} (\theta_{\nu}(z)^{-1}), 
$$
{\it i.e.}, to 
\begin{multline*}
(\lambda \malop{\ellmahl}-1)(g_{1}(\lambda,z))= -(\lambda-1) (1+z^{-\frac{\nu}{\ellmahl-1}})\sum_{k \geq 0} \lambda^{k}z^{\frac{\nu(1-\ellmahl^{k})}{\ellmahl-1}} \\ 
=  -(\lambda-1) 
-(\lambda-1) \sum_{k \geq 1} \lambda^{k}z^{\frac{\nu(1-\ellmahl^{k})}{\ellmahl-1}}
-(\lambda-1) \sum_{k \geq 0} \lambda^{k}z^{\frac{-\nu\ellmahl^{k}}{\ellmahl-1}}.
\end{multline*}
The $z$-adic valuations of the last two terms of the right hand-side of the latter equality are positive. Applying Lemma \ref{lem:sol ordre un inhom mu z c un} again, we find 
\begin{multline*}
 g_{1}(\lambda,z)=
 -1 + (\lambda-1) \sum_{j\geq 0}\sum_{k \geq 1}  \lambda^{j+k} z^{\ellmahl^{j}\frac{\nu(1-\ellmahl^{k})}{\ellmahl-1}}
+
(\lambda-1) \sum_{j\geq 0}\sum_{k \geq 0}    \lambda^{j+k} z^{\frac{-\nu \ellmahl^{j+k}}{\ellmahl-1}}
 \\ =-1 + (\lambda-1) \sum_{j\geq 0}\sum_{k \geq 1}  \lambda^{j+k} z^{\ellmahl^{j}\frac{\nu(1-\ellmahl^{k})}{\ellmahl-1}}
+
(\lambda-1) \sum_{l \geq 0}  (l+1) \lambda^{l} z^{\frac{-\nu \ellmahl^{l}}{\ellmahl-1}}.
\end{multline*}

We now solve the equation \eqref{eq:second eq}. This equation can be written as follows
$$
(\lambda \malop{\ellmahl}-z^{\nu}) h(z)^{-1} (\lambda \malop{\ellmahl}-1)(g_{2}(\lambda,z))=z^{\nu} \theta_{\nu}(z) (\lambda-1)^{2}.
$$
Multiplying by $\theta_{\nu}(z)^{-1}z^{-\nu}$, we see that the latter equation is equivalent to
$$
(\lambda \malop{\ellmahl}-1) \theta_{\nu}(z)^{-1} h(z)^{-1}(\lambda \malop{\ellmahl}-1)(g_{2}(\lambda,z))=(\lambda-1)^{2}.
$$
Applying Lemma \ref{lem:sol ordre un inhom mu z c un}, we find that the latter equation is equivalent to
$$  
\theta_{\nu}(z)^{-1} h(z)^{-1}(\lambda \malop{\ellmahl}-1)(g_{2}(\lambda,z))=\lambda-1, 
$$
{\it i.e.}, to 
$$ 
(\lambda \malop{\ellmahl}-1)(g_{2}(\lambda,z))=(\lambda-1)\theta_{\nu}(z) h(z)=(\lambda-1)z^{\frac{\nu}{\ellmahl-1}}+(\lambda-1). 
$$
Applying Lemma \ref{lem:sol ordre un inhom mu z c un} again, we find 
$$
g_{2}(\lambda,z)=(\lambda-1)\sum_{k \leq -1} \lambda^{k}\malop{\ellmahl}^{k}(z^{\frac{\nu}{\ellmahl-1}}) + 1=(\lambda-1)\sum_{k \leq -1} \lambda^{k}z^{\frac{\ellmahl^{k}\nu}{\ellmahl-1}}+1. 
$$

Theorem \ref{theo:theo pas intro} guarantees that 
$$
y_{1}=\ev{1} (g_{1}(\lambda,z)e_{\lambda})=-e_{1}
$$ 
and 
$$
y_{2}=\ev{1}(\partial_{\lambda} (g_{2}(\lambda,z)e_{\lambda}))=(\sum_{k \leq -1} z^{\frac{\ellmahl^{k}\nu}{\ellmahl-1}})e_{1}+\logm_{1,1}
$$ 
are $\C$-linearly independent solutions of $L$.

\section{Factorization of Mahler operators}\label{sec: fact}

The aim of this section is to prove the following result relative to a Mahler operator 
\begin{equation}\label{}
L= a_{n}(z) \malop{\ellmahl}^{n} + a_{n-1}(z)\malop{\ellmahl}^{n-1} + \cdots + a_{0}(z) \in \mathcal{D}_{\Hahn}. 
\end{equation}
We let  
\begin{itemize}
 \item $
\mu_{1} < \cdots < \mu_{k}
$  be 
the slopes of $L$ with  respective multiplicities $r_{1},\ldots,r_{k}$; 
 \item $c_{i,1},\ldots,c_{i,r_{i}}$ be the exponents (repeated with multiplicities) of $L$ attached to the slope $\mu_{i}$. 
\end{itemize}

\begin{prop}\label{prop fact op}
The operator $L$ has a factorization 
$$
L=a(z)L_{k}\cdots L_{1} 
$$
where 
\begin{itemize}
 \item $a(z) \in \Hahn^{\times}$ is such that $\val(a(z))=\val(a_{0}(z))$;  
 \item $\cld a(z) = \prod_{i=1}^{k} \prod_{j=1}^{r_{i}} (-c_{i,j})^{-1} \cld a_{0}(z)$;
\item the $L_{i} \in \dqhahn$ are given by 
 $$
L_{i}= (z^{\nu_{i}}\malop{\ellmahl}- c_{i,r_{i}})h_{i,r_{i}}(z)^{-1}\cdots (z^{\nu_{i}}\malop{\ellmahl}-c_{i,1})h_{i,1}(z)^{-1}
$$
for some $h_{i,j}(z) \in \Hahn^{\times}$ tangent to the identity and with 
\begin{equation}\label{def nui}
 \nu_{i} =(\ellmahl-1) (\ellmahl^{r_{1}+\cdots+r_{i-1}}(\mu_{i}-\mu_{i-1})+\cdots+\ellmahl^{r_{1}}(\mu_{2}-\mu_{1})+\mu_{1}). 
\end{equation} 
\end{itemize}
\end{prop}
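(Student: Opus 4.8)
The plan is to argue by induction on the order $n$, extracting one first-order factor on the right at each stage, beginning with the smallest slope $\mu_{1}$ and its exponent $c_{1,1}$. When $n=0$ there are no slopes and one takes $a(z)=a_{0}(z)$, so assume $n\geq 1$ and that the statement holds in order $n-1$.

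First I would establish the one-factor extraction: there exists $h\in\Hahn^{\times}$ tangent to the identity such that $N:=(z^{\nu_{1}}\malop{\ellmahl}-c_{1,1})h(z)^{-1}$ is a right factor of $L$, i.e.\ $L=MN$ with $M\in\dqhahn$ of order $n-1$. To produce $h$, I perform right Euclidean division of $Lh$ by $z^{\nu_{1}}\malop{\ellmahl}-c_{1,1}$ (legitimate since $z^{\nu_{1}}$ is a unit of the field $\Hahn$); writing $M=\sum_{j=0}^{n-1}m_{j}\malop{\ellmahl}^{j}$ and matching coefficients in $M(z^{\nu_{1}}\malop{\ellmahl}-c_{1,1})=Lh$ determines the $m_{j}$ by a downward recursion from $m_{n-1}$, and leaves a single scalar consistency condition. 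A direct computation rewrites this condition, after the twist by $\theta_{-\nu_{1}}(z)$ (with $\nu_{1}=(\ellmahl-1)\mu_{1}$, the twist bringing the smallest slope to $0$ by Proposition~\ref{prop: slopes l theta mu}), as the functional equation
$$\sum_{i=0}^{n}c_{1,1}^{\,i}\,b_{i}(z)\,\malop{\ellmahl}^{i}(h(z))=0,\qquad b_{i}(z)=z^{-\frac{\ellmahl^{i}-1}{\ellmahl-1}\nu_{1}}a_{i}(z),$$
that is $P(h)=0$ for the linear operator $P=\sum_{i}c_{1,1}^{\,i}b_{i}\malop{\ellmahl}^{i}$, whose Newton polygon coincides with that of $L^{[\theta_{-\nu_{1}}(z)]}$ and so has smallest slope $0$. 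By linearity, solving $P(h)=0$ with $h$ tangent to the identity is equivalent to solving $P(w)=-P(1)$ with $\val(w)>0$. The crucial input is that $c_{1,1}$ is a root of $\chi(\mu_{1},L;X)$, which is exactly the coefficient of $z^{\val(P)}$ in $P(1)$; this forces $\val(P(1))>\val(P)$, so the leading obstruction cancels and the remaining coefficients of $w$ are built by a transfinite recursion whose support remains well-ordered because we work in $\Hahn$. This existence step, an inhomogeneous first-order solving argument in Hahn series, is the main obstacle and the precise point where Hahn series are indispensable.

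Next I would control the quotient $M=LN^{-1}$. Comparing the coefficients of $\malop{\ellmahl}^{0}$ on both sides of $L=MN$ shows that the constant term of $M$ has valuation $\val(a_{0}(z))$ and leading coefficient $(-c_{1,1})^{-1}\cld(a_{0}(z))$ (using $h$ tangent to the identity); these one-step contributions accumulate into the asserted formulas for $\val(a(z))$ and $\cld(a(z))$ after all $n$ extractions. The heart of the bookkeeping is to show that the Newton data of $M$ are those of $L$ with one copy of the pair $(\mu_{1},c_{1,1})$ removed: the slope $\mu_{1}$ survives with multiplicity $r_{1}-1$ and characteristic polynomial $\chi(\mu_{1},L;X)/(X-c_{1,1})$, while each higher slope $\mu_{\ell}$ ($\ell\geq 2$) becomes $\ellmahl\mu_{\ell}-(\ellmahl-1)\mu_{1}$ with exponents and multiplicities unchanged. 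The point of this rescaling $x\mapsto \ellmahl x-(\ellmahl-1)\mu_{1}$, which fixes $\mu_{1}$ and dilates distances to $\mu_{1}$ by $\ellmahl$, is that the integers $\nu_{\ell}$ of \eqref{def nui} are invariant under it: a short computation gives $\mu_{\ell}^{M}-\mu_{1}=\ellmahl(\mu_{\ell}-\mu_{1})$, and feeding this into \eqref{def nui} (with $r_{1}$ lowered by one) reproduces the $\nu_{\ell}$ attached to $L$. Hence the $\nu$'s that the induction hypothesis attaches to $M$ coincide with those attached to $L$.

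Finally I would apply the induction hypothesis to $M$, writing $M=a(z)L_{k}\cdots L_{2}L_{1}'$ where $L_{1}'$ carries the remaining exponents $c_{1,2},\dots,c_{1,r_{1}}$ at the slope $\mu_{1}$ (all other factors being literally the $L_{2},\dots,L_{k}$ of the statement, by the $\nu$-invariance above), and prepend the extracted factor to obtain $L=a(z)L_{k}\cdots L_{1}$ with $L_{1}=L_{1}'(z^{\nu_{1}}\malop{\ellmahl}-c_{1,1})h(z)^{-1}$; the valuation and leading-coefficient claims for $a(z)$ follow by combining the per-step contributions. I expect the two genuinely substantive points to be (i) the solvability of $P(w)=-P(1)$ in $\Hahn$ with $\val(w)>0$, and (ii) the verification that removing a smallest-slope first-order factor transforms the Newton polygon exactly as described, together with the arithmetic identity making the $\nu_{\ell}$ invariant.
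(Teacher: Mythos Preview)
Your proposal is correct and follows essentially the same strategy as the paper: extract first-order right factors one at a time beginning with the smallest slope, track the Newton data of the quotient, and iterate. The paper organizes the same argument slightly differently, first twisting so the smallest slope is $0$, then packaging the two key steps as separate lemmas (existence of $h$ via \cite[Lemma~13]{RoquesLSMS}, and the explicit description of $\mathcal{N}(L(\malop{\ellmahl}-c)h^{-1})$ in terms of $\mathcal{N}(L)$), and iterating slope-by-slope rather than via a formal induction on $n$; your observation that the map $\mu\mapsto \ellmahl\mu-(\ellmahl-1)\mu_{1}$ leaves the $\nu_{\ell}$ invariant is exactly the bookkeeping the paper carries out when passing from $M_{r_{1}}$ to $N_{1}$ and onward.
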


\begin{remark}
 This is a refinement of \cite[Theorem 15]{RoquesLSMS}, where a similar statement is proved but without the explicit values of the $\nu_{i}$ in terms of the $\mu_{j}$. 
\end{remark}

The proof of Proposition \ref{prop fact op} given in Section \ref{sec: proof prop fact op} will use some preliminary results gathered in the following two sections. 

\subsection{Preliminary results : basic properties of slopes and exponents}

In this Section, we collect basic results relative to the notions of slopes and exponents introduced in Sections \ref{sec : def slopes} and \ref{subsec:exponents}. 

\subsubsection{Slopes and exponents of the gauge transform $L^{[\theta_{\mu}(z)]}$}

We remind the notation
\begin{multline*}
 L^{[\theta_{\mu}(z)]}=\theta_{\mu}(z)^{-1}L\theta_{\mu}(z)\\ 
 =\sum_{i=0}^{n} z^{(1+\ellmahl+\cdots+\ellmahl^{i-1})\mu} a_i(z) \malop{\ellmahl}^i
=
\sum_{i=0}^{n} z^{\frac{\ellmahl^{i}-1}{\ellmahl-1}\mu} a_i(z) \malop{\ellmahl}^i
\end{multline*}
introduced in Section \ref{sec : def slopes} where 
$$
\theta_{\mu}(z)=z^{\frac{\mu}{\ellmahl-1}}
$$ 
with $\mu \in \Q$. We have $L^{[\theta_{\mu}(z)]}(f(z))=0$ if and only if $L(f(z)\theta_{\mu}(z))=0$.


\begin{lem}\label{lem:newton and gauge}
The slopes of $L^{[\theta_{\mu}(z)]}$ are 
$$
\mu_{1} + \frac{\mu}{\ellmahl-1} < \cdots < \mu_{k}+\frac{\mu}{\ellmahl-1}
$$ 
with respective multiplicities $r_{1},\ldots,r_{k}$. 
Moreover, 
$$ \chi(\mu_{j}+\frac{\mu}{\ellmahl-1},L^{[\theta_{\mu}(z)]};X)= \chi(\mu_{j},L;X)$$ 
and, hence, the exponents counted with multiplicities of $L^{[\theta_{\mu}(z)]}$ attached to the slope $\mu_{j}+\frac{\mu}{\ellmahl-1}$ coincide with those of $L$ attached to the slope $\mu_{j}$. 
\end{lem}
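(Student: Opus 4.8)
The assertion about slopes and their multiplicities is exactly Proposition~\ref{prop: slopes l theta mu}, so the plan is to concentrate on the equality of characteristic polynomials, from which the statement about exponents and their multiplicities follows immediately (the exponents attached to a slope are defined to be the roots of the corresponding characteristic polynomial, counted with multiplicity, so two operators with equal characteristic polynomials at corresponding slopes have the same exponents with the same multiplicities). The whole argument rests on a single structural observation: the gauge transform $L \mapsto L^{[\theta_{\mu}(z)]}$ satisfies a cocycle (composition) identity.

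First I would record that identity. Since $L^{[\theta_{\mu}(z)]}=\theta_{\mu}(z)^{-1}L\theta_{\mu}(z)$ is conjugation by $\theta_{\mu}(z)$, and since $\theta_{\mu}(z)\theta_{\nu}(z)=z^{\frac{\mu}{\ellmahl-1}}z^{\frac{\nu}{\ellmahl-1}}=z^{\frac{\mu+\nu}{\ellmahl-1}}=\theta_{\mu+\nu}(z)$, one gets for all $\mu,\nu \in \Q$
$$
\left(L^{[\theta_{\mu}(z)]}\right)^{[\theta_{\nu}(z)]}
=\theta_{\nu}(z)^{-1}\theta_{\mu}(z)^{-1}L\,\theta_{\mu}(z)\theta_{\nu}(z)
=\theta_{\mu+\nu}(z)^{-1}L\,\theta_{\mu+\nu}(z)
=L^{[\theta_{\mu+\nu}(z)]}.
$$
This is the only nontrivial ingredient, and it is a one-line verification.

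With this in hand I would compute the characteristic polynomial of $M:=L^{[\theta_{\mu}(z)]}$ attached to its $j$-th slope $\mu_{j}+\frac{\mu}{\ellmahl-1}$. By definition this requires forming $M^{[\theta_{\nu}(z)]}$ with $\nu=-(\ellmahl-1)\left(\mu_{j}+\frac{\mu}{\ellmahl-1}\right)=-(\ellmahl-1)\mu_{j}-\mu$, so that the relevant slope is brought to $0$. Applying the cocycle identity,
$$
M^{[\theta_{\nu}(z)]}=\left(L^{[\theta_{\mu}(z)]}\right)^{[\theta_{\nu}(z)]}=L^{[\theta_{\mu+\nu}(z)]}=L^{[\theta_{-(\ellmahl-1)\mu_{j}}(z)]},
$$
which is precisely the operator used in Section~\ref{subsec:exponents} to define $\chi(\mu_{j},L;X)$. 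Consequently the valuation $\val$ appearing in the normalization and the subsequent extraction of the $z=0$ coefficient are computed on literally the same operator in both cases, giving $\chi(\mu_{j}+\frac{\mu}{\ellmahl-1},L^{[\theta_{\mu}(z)]};X)=\chi(\mu_{j},L;X)$ (these identities being understood modulo $\C^{\times}X^{\Z}$, as stipulated).

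I do not expect a genuine obstacle here: once the composition identity is isolated, the proof is purely a matter of tracking the shift $\nu=-(\ellmahl-1)\mu_{j}-\mu$ and checking that it cancels the $\mu$ introduced by the first gauge transform. The only point deserving a moment of care is the arithmetic $\mu+\nu=-(\ellmahl-1)\mu_{j}$, which is what makes the two defining operators coincide; everything else is bookkeeping already set up in Section~\ref{subsec:exponents}.
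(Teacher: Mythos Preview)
Your proposal is correct and follows essentially the same route as the paper: you cite Proposition~\ref{prop: slopes l theta mu} for the slopes and multiplicities, then use the cocycle identity $(L^{[\theta_{\mu}(z)]})^{[\theta_{\nu}(z)]}=L^{[\theta_{\mu+\nu}(z)]}$ with $\nu=-(\ellmahl-1)\mu_{j}-\mu$ to reduce the characteristic polynomial computation for $L^{[\theta_{\mu}(z)]}$ at slope $\mu_{j}+\frac{\mu}{\ellmahl-1}$ to that of $L$ at slope $\mu_{j}$. This is exactly the paper's argument, with the composition identity spelled out in slightly more detail.
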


\begin{proof}
For the first assertions concerning the slopes and their multiplicities, see Proposition \ref{prop: slopes l theta mu}. 
Moreover, the equality 
$$ \chi(\mu_{j}+\frac{\mu}{\ellmahl-1},L^{[\theta_{\mu}(z)]};X)= \chi(\mu_{j},L;X)$$ 
follows immediately from the definition of the characteristic polynomials by using the equality  
$$(L^{[\theta_{\mu}(z)]})^{[\theta_{-(\ellmahl-1)(\mu_{j}+\frac{\mu}{\ellmahl-1})}(z)]}=L^{[\theta_{-(\ellmahl-1)\mu_{j}}(z)]}.$$ 
The very last assertion of the lemma follows from this equality of characteristic polynomials and from the definition of the exponents and of their multiplicities. 
\end{proof}

 \subsubsection{The gauge transform $L^{[e_{c}]}$}
We will also use the notation
$$
L^{[e_{c}]}=e_{c}^{-1}Le_{c}=\sum_{i=0}^{n} c^{i} a_i(z) \malop{\ellmahl}^i
$$
where $c \in \C^{\times}$.   
We have $L^{[e_{c}]}(f(z))=0$ if and only if $L(f(z)e_{c})=0$.

\begin{lem}\label{lem:newton and gauge caract}
The operators $L^{[e_{c}]}$ and $L$ have the same slopes $\mu_{1}  < \cdots < \mu_{k}$ 
with the same multiplicities $r_{1},\ldots,r_{k}$. Moreover, 
$$ \chi(\mu_{j},L^{[e_{c}]};X)= \chi(\mu_{j},L;cX)$$ 
and, hence, the exponents of $L^{[e_{c}]}$ and of $L$ attached to a given slope $\mu_{j}$ are related as follows :   
\begin{multline*}
\text{list of exponents of $L^{[e_{c}]}$ counted with mult. attached to the slope $\mu_{j}$}\\
= c^{-1}\cdot(\text{list of exponents of $L$ counted with mult. attached to the slope $\mu_{j}$}). 
\end{multline*}
\end{lem}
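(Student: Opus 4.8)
The plan is to reduce everything to the single observation that conjugating $L$ by $e_{c}$ only multiplies the coefficient of $\malop{\ellmahl}^{i}$ by the nonzero constant $c^{i}$, which is invisible to the $z$-adic valuation.

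First I would handle the slopes and multiplicities. By definition $L^{[e_{c}]}=\sum_{i=0}^{n} c^{i} a_{i}(z)\malop{\ellmahl}^{i}$, and since $c^{i}\in\C^{\times}$ we have $\val(c^{i}a_{i}(z))=\val(a_{i}(z))$ for every $i$. Hence the set defining $\mathcal N(L^{[e_{c}]})$,
$$\{(\ellmahl^{i},j)\ \vert\ i,j\in\Z,\ j\geq\val(c^{i}a_{i}(z))\},$$
coincides with the one defining $\mathcal N(L)$, so $\mathcal N(L^{[e_{c}]})=\mathcal N(L)$. In particular both operators have the same slopes $\mu_{1}<\cdots<\mu_{k}$ and the same multiplicities $r_{1},\ldots,r_{k}$.

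Next I would compute the characteristic polynomials. Fix a slope $\mu_{j}$ and write $L^{[\theta_{-(\ellmahl-1)\mu_{j}}(z)]}=\sum_{i=0}^{n} b_{i}(z)\malop{\ellmahl}^{i}$, so that $b_{i}(z)=z^{-(\ellmahl^{i}-1)\mu_{j}}a_{i}(z)$. Since conjugation by $\theta_{\mu}$ multiplies the coefficient of $\malop{\ellmahl}^{i}$ by $z^{\frac{\ellmahl^{i}-1}{\ellmahl-1}\mu}$, applying it with $\mu=-(\ellmahl-1)\mu_{j}$ to $L^{[e_{c}]}$ gives
$$(L^{[e_{c}]})^{[\theta_{-(\ellmahl-1)\mu_{j}}(z)]}=\sum_{i=0}^{n} z^{-(\ellmahl^{i}-1)\mu_{j}}c^{i}a_{i}(z)\malop{\ellmahl}^{i}=\sum_{i=0}^{n} c^{i}b_{i}(z)\malop{\ellmahl}^{i}.$$
Again the constants $c^{i}$ do not change valuations, so $\val((L^{[e_{c}]})^{[\theta_{-(\ellmahl-1)\mu_{j}}(z)]})=\val(L^{[\theta_{-(\ellmahl-1)\mu_{j}}(z)]})=:v$. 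Plugging this into the definition of the characteristic polynomial yields
$$\chi(\mu_{j},L^{[e_{c}]};X)=\Big(z^{-v}\sum_{i=0}^{n} c^{i}b_{i}(z)X^{i}\Big)_{\vert z=0}=\Big(z^{-v}\sum_{i=0}^{n} b_{i}(z)(cX)^{i}\Big)_{\vert z=0}=\chi(\mu_{j},L;cX),$$
which is the claimed identity; note that the substitution $X\mapsto cX$ respects the normalization modulo $\C^{\times}X^{\Z}$.

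Finally the statement about exponents is a formal consequence. The exponents attached to $\mu_{j}$ are the roots in $\C^{\times}$ of the characteristic polynomial, counted with multiplicity. If $\alpha$ is a root of $\chi(\mu_{j},L;X)$, then $c^{-1}\alpha$ is a root of $\chi(\mu_{j},L;cX)=\chi(\mu_{j},L^{[e_{c}]};X)$ of the same multiplicity, and $\alpha\mapsto c^{-1}\alpha$ is a multiplicity-preserving bijection of $\C^{\times}$; this gives the stated relation between the two lists of exponents. The only point requiring genuine care is the compatibility of the two gauge transforms, i.e.\ that conjugating by $e_{c}$ and then by $\theta_{-(\ellmahl-1)\mu_{j}}$ multiplies $b_{i}(z)$ by exactly $c^{i}$; this is the single coefficient computation displayed above, and it is otherwise routine.
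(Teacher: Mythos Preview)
Your proof is correct and follows essentially the same approach as the paper's. The paper's proof is more terse, merely citing the identity $(L^{[e_{c}]})^{[\theta_{-(\ellmahl-1)\mu_{j}}(z)]}=(L^{[\theta_{-(\ellmahl-1)\mu_{j}}(z)]})^{[e_{c}]}$ and saying the characteristic polynomial equality follows immediately from it, whereas you unpack this identity explicitly at the level of coefficients; the underlying argument is the same.
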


\begin{proof}
Since $\val(c^{i}a_{i})=\val(a_{i})$, we obviously have 
\begin{equation}\label{newt inv ec}
 \mathcal N(L^{[e_{c}]})=\mathcal N(L)
\end{equation}
and, hence, $L^{[e_{c}]}$ and $L$ have the same slopes with the same multiplicities. Moreover, the equality 
$$
\chi(\mu_{j},L^{[e_{c}]};X) =
\chi(\mu_{j},L;cX)
$$
follows immediately from the definition of the characteristic polynomials by using the equality 
$$(L^{[e_{c}]})^{[\theta_{-(\ellmahl-1)\mu_{j}}(z)]}=(L^{[\theta_{-(\ellmahl-1)\mu_{j}}(z)]})^{[e_{c}]}.
$$ 
The very last assertion of the lemma follows from the above equality of characteristic polynomials and from the definition of the exponents and of their multiplicities. 
\end{proof}

 \subsubsection{The gauge transform $L^{[g(z)]}$}
We will also use the notation
$$
L^{[g(z)]}=g(z)^{-1}Lg(z)=\sum_{i=0}^{n} \frac{\malop{\ellmahl}^{i}(g(z))}{g(z)} a_i(z) \malop{\ellmahl}^i
$$
for $g(z) \in \Hahn^{\times}$.   
We have $L^{[g(z)]}(f(z))=0$ if and only if $L(g(z)f(z))=0$.

\begin{lem}\label{lem:newton and gauge g}
If $\val(g(z))=0$ then the operators $L^{[g(z)]}$ and $L$ have the same slopes $\mu_{1}  < \cdots < \mu_{k}$ 
with the same multiplicities $r_{1},\ldots,r_{k}$. Moreover, 
$$ \chi(\mu_{j},L^{[g(z)]};X)= \chi(\mu_{j},L;X)$$ 
and, hence, the exponents counted with multiplicities of $L^{[g(z)]}$ attached to the slope $\mu_{j}$ coincide with those of $L$ attached to the slope $\mu_{j}$. 
\end{lem}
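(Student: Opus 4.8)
The plan is to imitate the proofs of Lemmas~\ref{lem:newton and gauge} and~\ref{lem:newton and gauge caract}, since $L^{[g(z)]}$ is just another gauge transform and the coefficients transform in a controlled way. Writing $L = \sum_{i=0}^{n} a_i(z)\malop{\ellmahl}^i$, the coefficients of $L^{[g(z)]}$ are $\frac{\malop{\ellmahl}^i(g(z))}{g(z)}a_i(z)$. The key observation is that since $\val(g(z))=0$ and $\malop{\ellmahl}$ multiplies valuations by $\ellmahl$, we have $\val(\malop{\ellmahl}^i(g(z)))=\ellmahl^i\val(g(z))=0$ and $\val(g(z))=0$, so $\val\bigl(\frac{\malop{\ellmahl}^i(g(z))}{g(z)}\bigr)=0$ for every $i$. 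Consequently $\val\bigl(\frac{\malop{\ellmahl}^i(g(z))}{g(z)}a_i(z)\bigr)=\val(a_i(z))$, which immediately gives $\mathcal N(L^{[g(z)]})=\mathcal N(L)$. As in the proof of Lemma~\ref{lem:newton and gauge caract} (see equation~\eqref{newt inv ec}), equality of Newton polygons yields that $L^{[g(z)]}$ and $L$ have the same slopes $\mu_1<\cdots<\mu_k$ with the same multiplicities $r_1,\ldots,r_k$.

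For the statement about characteristic polynomials, I would first establish the commutation identity
$$(L^{[g(z)]})^{[\theta_{-(\ellmahl-1)\mu_{j}}(z)]}=(L^{[\theta_{-(\ellmahl-1)\mu_{j}}(z)]})^{[g(z)]},$$
which is the exact analogue of the identities used in the two previous lemmas and follows directly from the fact that conjugation by $\theta_{\mu}(z)$ and conjugation by $g(z)$ commute (both are instances of conjugation in the Ore algebra $\dqhahn$, and $\theta_{\mu}(z)^{-1}g(z)^{-1}Lg(z)\theta_{\mu}(z)=g(z)^{-1}\theta_{\mu}(z)^{-1}L\theta_{\mu}(z)g(z)$ since scalars in $\Hahn^{\times}$ commute with each other). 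Then, setting $\tilde L = L^{[\theta_{-(\ellmahl-1)\mu_{j}}(z)]}=\sum_{i} b_i(z)\malop{\ellmahl}^i$, the coefficients of $\tilde L^{[g(z)]}$ are $\frac{\malop{\ellmahl}^i(g(z))}{g(z)}b_i(z)$; since $g(z)$ is tangent to the identity (recall $\val(g(z))=0$), the constant term of $\frac{\malop{\ellmahl}^i(g(z))}{g(z)}$ at $z=0$ equals $1$. Therefore $\cld\bigl(\frac{\malop{\ellmahl}^i(g(z))}{g(z)}b_i(z)\bigr)=\cld(b_i(z))$ and $\val$ is preserved, so the relevant coefficient of $z^{-\val}$ at $z=0$ is unchanged for each $i$, giving $\chi(\mu_j,L^{[g(z)]};X)=\chi(\mu_j,L;X)$.

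The last assertion about exponents and their multiplicities then follows formally from this equality of characteristic polynomials together with the definition of exponents as roots of $\chi(\mu_j,L;X)$ in $\C^{\times}$ counted with multiplicity, exactly as in the closing sentences of the two preceding lemmas.

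The main obstacle is the claim that $\cld\bigl(\frac{\malop{\ellmahl}^i(g(z))}{g(z)}\bigr)=1$, i.e. that the ratio $\frac{\malop{\ellmahl}^i(g(z))}{g(z)}$ is tangent to the identity. One must verify that $\malop{\ellmahl}^i$ preserves the property of being tangent to the identity (it sends $1+(\text{higher order})$ to $1+(\text{higher order in }z^{\ellmahl^i})$, so $\val$ stays $0$ and $\cld$ stays $1$), and that the quotient of two tangent-to-identity elements is again tangent to the identity (which holds since $\cld$ is multiplicative on $\Hahn\setminus\{0\}$ and $\val$ is additive). This is a short but essential check that upgrades the bare hypothesis $\val(g(z))=0$ used for the slopes to the finer control needed for the characteristic polynomial; once it is in place the rest is routine.
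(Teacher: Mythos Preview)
Your proposal is correct and follows essentially the same route as the paper: equality of Newton polygons from $\val\bigl(\frac{\malop{\ellmahl}^i(g)}{g}a_i\bigr)=\val(a_i)$, then the commutation identity $(L^{[g]})^{[\theta_{-(\ellmahl-1)\mu_j}]}=(L^{[\theta_{-(\ellmahl-1)\mu_j}]})^{[g]}$ together with the fact that each $\frac{\malop{\ellmahl}^i(g)}{g}$ is tangent to the identity.

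One small slip to fix: you write ``since $g(z)$ is tangent to the identity (recall $\val(g(z))=0$)'', but tangent to the identity means $\val=0$ \emph{and} $\cld=1$, and the hypothesis only gives $\val(g(z))=0$. Your final paragraph then argues assuming $g$ itself is tangent to the identity, which is not granted. The conclusion survives because $\malop{\ellmahl}$ preserves $\cld$ (it only rescales exponents of $z$, leaving coefficients unchanged), so $\cld(\malop{\ellmahl}^i(g))=\cld(g)$ and the leading coefficients cancel in the quotient, making $\frac{\malop{\ellmahl}^i(g)}{g}$ tangent to the identity regardless of $\cld(g)$. With that adjustment your argument matches the paper's.
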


\begin{proof}
Since $\val(\frac{\malop{\ellmahl}^{i}(g(z))}{g(z)} a_{i}(z))=\val(a_{i}(z))$, we obviously have 
\begin{equation}\label{newt inv ec}
 \mathcal N(L^{[g(z)]})=\mathcal N(L)
\end{equation}
and, hence, $L$ and $L^{[g(z)]}$ have the same slopes with the same multiplicities. Moreover, the equality 
$$ \chi(\mu_{j},L^{[g(z)]};X)= \chi(\mu_{j},L;X)$$ 
follows immediately from the definition of the characteristic polynomials by using the equality  
$$(L^{[g(z)]})^{[\theta_{-(\ellmahl-1)\mu_{j}}(z)]}=(L^{[\theta_{-(\ellmahl-1)\mu_{j}}(z)]})^{[g(z)]},
$$ 
and the fact that the $\frac{\malop{\ellmahl}^{i}(g(z))}{g(z)} \in \Hahn^{\times}$ are tangent to the identity. 
The very last assertion of the lemma follows from this equality of characteristic polynomials and from the definition of the exponents and of their multiplicities. 
\end{proof}

\subsection{Preliminary results : on operators with smallest slope $0$}\label{sec:fact}

\begin{lem}\label{lem:smallest slope zero sol}
 Assume that $0$ is the smallest slope of $L$ ({\it i.e.}, $\mu_{1}=0$) and let $c$ be an exponent attached to the slope $0$. Then, there exist $L' \in  \dqhahn$ and $h(z)\in\Hahn^{\times}$ tangent to the identity such that $$
 L=L'(\malop{\ellmahl}-c)h(z)^{-1}.
 $$ 
\end{lem}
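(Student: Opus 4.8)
The plan is to factor off the leftmost linear factor corresponding to the exponent $c$, working with the characteristic polynomial attached to the slope $0$. Since $\mu_1 = 0$, the operator $L$ already has smallest slope $0$, so by definition of the characteristic polynomial we have $\chi(0,L;X) = \left(z^{-\val(L)}\sum_{i=0}^n b_i(z)X^i\right)_{\vert z=0}$ where here the $b_i(z)$ are just the $a_i(z)$ (no $\theta$-twist is needed because the slope is already $0$). First I would normalize: after multiplying $L$ on the left by $z^{-\val(a_0(z))}$ (a unit in $\Hahn^\times$, which does not affect the factorization we seek since we may absorb it at the end), I may assume $\val(a_0(z)) = 0$ and that the characteristic polynomial is the honest polynomial $\sum_{i} \cld(a_i(z)) X^i$ restricted to the indices contributing to the slope-$0$ edge. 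The exponent $c$ is a root of $\chi(0,L;X)$ in $\C^\times$.

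The key idea is to build $h(z)$ and the solution-type object by successive approximation in the $z$-adic topology. I want to find $h(z) \in \Hahn^\times$ tangent to the identity so that $(\malop{\ellmahl}-c)h(z)^{-1}$ right-divides $L$, i.e. so that performing the (right) Euclidean-type division of $L$ by $(\malop{\ellmahl}-c)h(z)^{-1}$ in $\dqhahn$ leaves zero remainder. Equivalently, writing $L = L'(\malop{\ellmahl}-c)h(z)^{-1} + R$ with $R$ of order $0$ (a remainder in $\Hahn$), I must choose $h(z)$ to kill $R$. The cleanest route is to look for a ``solution'' $u(z) \in \Hahn^\times$, tangent to the identity, of the equation $L^{[u(z)]}$ having $(\malop{\ellmahl}-c)$ as an exact right factor; concretely I would seek $u(z)$ with $\val(u(z))=0$, $\cld(u(z))=1$, such that $\malop{\ellmahl}(u(z)) = c\, u(z) \cdot v(z)$ for a suitable tangent-to-identity correction, and then set $h(z)$ accordingly. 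The point is that the gauge-transform lemmas (Lemma~\ref{lem:newton and gauge caract} and Lemma~\ref{lem:newton and gauge g}) guarantee that twisting by $e_c$ and by a tangent-to-identity unit preserves the slope $0$ and shifts the exponent, so after twisting by $e_c$ the constant $c$ becomes the exponent $1$, and I reduce to factoring off $(\malop{\ellmahl}-1)$ from an operator whose characteristic polynomial has $1$ as a root.

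The mechanism for producing $h(z)$ is a fixed-point/Newton-iteration argument: I expand $h(z) = 1 + \sum_{\gamma > 0} h_\gamma z^\gamma$ and determine the coefficients $h_\gamma$ recursively by matching, order by order in increasing $\gamma \in \Q$, the condition that the order-$0$ remainder vanishes. At each step the coefficient $h_\gamma$ is determined by a linear equation whose leading coefficient is, up to a nonzero scalar, the value $\chi(0,L;c)$ evaluated after the shift — and crucially, because $c$ is a \emph{simple} contribution here (we peel off one root at a time, and after the $e_c$-twist the relevant obstruction is controlled by $\chi'$ or by the derivative structure), this leading coefficient is nonzero, so each $h_\gamma$ is uniquely solvable. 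Well-orderedness of supports in $\Hahn$ is exactly what makes this transfinite recursion converge to a genuine Hahn series: the support of $h(z)$ generated this way is well-ordered because it lies in the semigroup generated by the (well-ordered) supports of the $a_i(z)$ under the operations induced by $\malop{\ellmahl}$ and division.

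The main obstacle I anticipate is twofold. First, the bookkeeping to show the constructed $h(z)$ genuinely lies in $\Hahn$ — that its support is well-ordered rather than merely a subset of $\Q$ — requires care, since the recursion involves both $\malop{\ellmahl}$ (which multiplies valuations by $\ellmahl$) and formal inversion of tangent-to-identity units; one must argue that the support stays inside a finitely-generated (hence well-ordered) sub-monoid of $\Q_{\geq 0}$. Second, verifying that the leading-coefficient obstruction is nonzero at every stage is where the hypothesis that $c$ is an exponent attached to slope $0$ (a root of $\chi(0,L;X)$) must be used precisely: the division of $L$ on the right by $(\malop{\ellmahl}-c)h(z)^{-1}$ is unobstructed exactly because the characteristic polynomial vanishes at $c$, so the remainder's leading term cancels; I would make this explicit by computing the $z=0$ specialization of the remainder and identifying it with $\chi(0,L;c)$ up to a unit, which vanishes by assumption. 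Once $h(z)$ is built and $R=0$ is verified, setting $L' $ to be the quotient in $\dqhahn$ completes the factorization $L = L'(\malop{\ellmahl}-c)h(z)^{-1}$.
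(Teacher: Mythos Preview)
Your overall architecture is sound and matches the content behind the paper's proof: construct a Hahn series $h(z)$ tangent to the identity with $L(h(z)e_c)=0$, then right-divide. The paper does exactly this but outsources the construction of $h(z)$ to \cite[Lemma~13]{RoquesLSMS} and then invokes right Euclidean division in one line; you are effectively reproving that cited lemma by hand.

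However, there is a concrete error in your recursion. After normalizing so that $\val(a_0(z))=0$ (hence $\val(a_i(z))\geq 0$ for all $i$ since $\mu_1=0$) and twisting by $e_c$, the equation to solve is $\sum_i c^i a_i(z)\,\malop{\ellmahl}^i(h(z))=0$ with $h(z)=1+\sum_{\gamma>0}h_\gamma z^\gamma$. The coefficient of $z^\gamma$ gives a linear equation for $h_\gamma$ whose leading coefficient is
\[
\sum_i c^i (a_i)_{\gamma(1-\ellmahl^i)} \;=\; (a_0)_0 \;=\; a_0(0),
\]
because $\gamma(1-\ellmahl^i)<0$ for $i\geq 1$ and all $a_i$ have nonnegative valuation. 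This coefficient is \emph{not} $\chi(0,L;c)$ (which vanishes by hypothesis) nor $\chi'(0,L;c)$ (which could also vanish, since $c$ may well be a multiple root): it is simply the constant term $a_0(0)\neq 0$ of the characteristic polynomial. So the recursion is unobstructed for a more robust reason than you state, and your appeal to ``$\chi'$ or the derivative structure'' is both incorrect and unnecessary. The hypothesis $\chi(0,L;c)=0$ is used only once, at the very start, to make the constant term of $\sum_i c^i a_i(z)\,\malop{\ellmahl}^i(1)$ vanish.

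The well-orderedness of $\supp(h(z))$, which you correctly flag, is the genuine technical point here; it does require showing the support lies in a well-ordered sub-semigroup of $\Q_{\geq 0}$ built from $\bigcup_i \ellmahl^{-i}\supp(a_i)$, and this is precisely what the cited external lemma handles.
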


\begin{proof}
Indeed, \cite[Lemma 13]{RoquesLSMS} ensures that there exists $h(z) \in \Hahn^{\times}$ tangent to the identity such that 
$$
L(h(z)e_{c})=0.
$$ 
Since $h(z)e_{c}$ is also a solution of $(\malop{\ellmahl}-c)h(z)^{-1}$, we get, by right euclidean division, 
$$
L=L'(\malop{\ellmahl}-c)h(z)^{-1}
$$ 
for some $L' \in \dqhahn$. 
\end{proof}

\begin{lem}\label{lem:slopes L zero}
Assume that $\mathcal{S}(L) \subset \R^{+}$. Then, for any $c \in \C^{\times}$ and any $h(z) \in \Hahn^{\times}$ tangent to the identity, we have 
\begin{equation}\label{eq newt L phi moins c}
 \New(L(\malop{\ellmahl}-c)h(z)^{-1}) = \left([1,\ellmahl] \times [\val(a_{0}(z)),+\infty[ \right)\bigcup \varphi(\New(L)),  
\end{equation}
where $\varphi : \R^{2} \rightarrow \R^{2}$ is the map given by $\varphi(x,y)=(\ellmahl x,y)$. It follows that   
$$
\mathcal{S}(L(\malop{\ellmahl}-c)h(z)^{-1})=\ellmahl^{-1} \mathcal{S}(L) \cup \{0\}
$$ 
and, hence, for any $\lambda \in \mathcal{S}(L(\malop{\ellmahl}-c)h(z)^{-1})$, 
\begin{equation}\label{form multi}
r(\lambda,L(\malop{\ellmahl}-c)h(z)^{-1}) = 
\begin{cases}
 r(\ellmahl \lambda,L) \text{ if } \lambda \neq 0,\\
 r(0,L)+1  \text{ if } \lambda = 0.
\end{cases}
\end{equation}  
Moreover, for any $\lambda \in \mathcal{S}(L(\malop{\ellmahl}-c)h(z)^{-1})$, we have 
\begin{equation}\label{form poly char}
\chi(\lambda,L(\malop{\ellmahl}-c)h(z)^{-1};X)
=
\begin{cases}
 \chi(\ellmahl \lambda,L;X) \text{ if } \lambda \neq 0,\\
 \chi(0,L;X)(X-c)  \text{ if } \lambda = 0.
\end{cases}
\end{equation}
\end{lem}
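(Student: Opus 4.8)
The plan is to compute the Newton polygon of $M := L(\malop{\ellmahl}-c)h(z)^{-1}$ directly from the definition and read off the slopes, multiplicities, and characteristic polynomials from it. First I would understand the effect on coefficients: if $L=\sum_{i=0}^n a_i(z)\malop{\ellmahl}^i$, then since $h(z)$ is tangent to the identity (so $\val(h)=0$ and $\val(\malop{\ellmahl}^i(h^{-1}))=0$), multiplying by $(\malop{\ellmahl}-c)h(z)^{-1}$ on the right sends $\malop{\ellmahl}^i$ to a combination of $\malop{\ellmahl}^{i+1}$ and $\malop{\ellmahl}^i$ without changing valuations. Concretely, writing $M=\sum_{i} \widetilde{a}_i(z)\malop{\ellmahl}^i$, the coefficient $\widetilde{a}_i$ is a $\Hahn$-combination of $a_{i-1}(z)\malop{\ellmahl}^{i-1}(h^{-1})$ and $-c\,a_i(z)\malop{\ellmahl}^i(h^{-1})$. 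The key valuation bookkeeping is that the point $(\ellmahl^{i+1}, \val(a_i))$ appears (from the leading $\malop{\ellmahl}$ acting on $a_i\malop{\ellmahl}^i$) together with $(\ellmahl^i,\val(a_i))$ (from the $-c$ term), so each vertex $(\ellmahl^i,\val(a_i))$ of $\New(L)$ contributes both its image $\varphi(\ellmahl^i,\val(a_i))=(\ellmahl^{i+1},\val(a_i))$ and a point on the column $x=\ellmahl^i$ at the same height.

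Next I would establish the set-theoretic formula \eqref{eq newt L phi moins c}. The inclusion $\supseteq$ requires showing that the points of $\varphi(\New(L))$ genuinely survive as vertices/edges of $\New(M)$ and that the vertical strip $[1,\ellmahl]\times[\val(a_0),+\infty[$ is forced; the hypothesis $\mathcal{S}(L)\subset\R^+$ guarantees that $\New(L)$ has all slopes $\geq 0$, so its lowest vertex sits at abscissa $1$ with ordinate $\val(a_0)$, and after applying $\varphi$ the new leftmost part of the polygon is the strip above $[1,\ellmahl]$ at height $\val(a_0)=\val(\widetilde a_0)$ (note $\widetilde a_0 = -c\,a_0 h^{-1}$ has the same valuation). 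The inclusion $\subseteq$ amounts to checking no point with smaller ordinate can appear, which follows because no cancellation can lower the valuation below the convex hull one predicts — here I must verify that the coefficient $c\neq 0$ prevents the $\malop{\ellmahl}^i$-part from vanishing, so the strip really is filled exactly down to height $\val(a_0)$.

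From \eqref{eq newt L phi moins c} the slope and multiplicity statements follow combinatorially: applying $\varphi(x,y)=(\ellmahl x,y)$ stretches abscissae by $\ellmahl$, so an edge of $\New(L)$ of slope $\lambda$ between abscissae $\ellmahl^{a}$ and $\ellmahl^{b}$ becomes an edge between $\ellmahl^{a+1}$ and $\ellmahl^{b+1}$; since slope is $\Delta y/\Delta(\log_{\ellmahl} x)$ in the relevant coordinates, the slope scales to $\lambda/\ellmahl$, giving $\mathcal{S}(M)=\ellmahl^{-1}\mathcal{S}(L)\cup\{0\}$, with the extra slope $0$ coming from the newly adjoined strip. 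The multiplicity formula \eqref{form multi} is immediate from edge lengths, the $+1$ at $\lambda=0$ recording the width-$1$ contribution of the strip $[1,\ellmahl]$ (which merges with the image of the slope-$0$ part of $L$ when $0\in\mathcal{S}(L)$).

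Finally, for the characteristic polynomials \eqref{form poly char} I would use the gauge-invariance lemmas already proved: by Lemma \ref{lem:newton and gauge g} the factor $h(z)^{-1}$, being tangent to the identity, affects neither slopes nor characteristic polynomials, so it suffices to treat $L(\malop{\ellmahl}-c)$. For a nonzero slope $\lambda$, twisting by $\theta_{-(\ellmahl-1)\lambda}$ and extracting the lowest-degree part in $z$ shows that the factor $(\malop{\ellmahl}-c)$ contributes only to the part of the polygon away from the corresponding edge, so $\chi(\lambda,M;X)=\chi(\ellmahl\lambda,L;X)$ up to the usual $\C^\times X^{\Z}$ ambiguity. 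For $\lambda=0$, the characteristic polynomial must pick up the factor from $(\malop{\ellmahl}-c)$ itself: reducing modulo $z$ at the slope-$0$ level multiplies the characteristic polynomial of $L$ by $(X-c)$, yielding $\chi(0,M;X)=\chi(0,L;X)(X-c)$. I expect the main obstacle to be the careful valuation analysis proving the exact equality \eqref{eq newt L phi moins c} rather than just $\supseteq$ — in particular ruling out accidental cancellations in the $\widetilde a_i$ and confirming that the strip is filled precisely to height $\val(a_0)$ under the standing hypothesis $\mathcal{S}(L)\subset\R^+$; once the polygon is pinned down exactly, the slope, multiplicity, and characteristic-polynomial consequences are essentially formal.
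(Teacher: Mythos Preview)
Your approach is essentially the paper's: compute the coefficients $\widetilde a_i$ of $M=L(\malop{\ellmahl}-c)h(z)^{-1}$, verify the valuation inequalities that pin down $\New(M)$ exactly, and then read off slopes, multiplicities, and characteristic polynomials. The paper streamlines the bookkeeping by first reducing via the gauge lemmas to $h(z)=1$, $c=1$, $a_0(z)=1$ (so $b_i=a_{i-1}-a_i$), and then checks three explicit conditions on the $\val(b_j)$; you keep $h$ and $c$ general, which is fine since they do not affect valuations.

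One point to fix: your justification that slopes scale by $\ellmahl^{-1}$ is backwards. In this paper the Newton polygon lives in the $(\ellmahl^{i},j)$-plane and the slopes are the ordinary geometric slopes $\Delta y/\Delta x$, \emph{not} $\Delta y/\Delta(\log_{\ellmahl} x)$. Under $\varphi(x,y)=(\ellmahl x,y)$ the horizontal extent of each edge is multiplied by $\ellmahl$, so the geometric slope is divided by $\ellmahl$, which is the desired conclusion; had the slope really been $\Delta y/\Delta(\log_{\ellmahl} x)$, the map $\varphi$ would leave it unchanged. Also, your sketch for the characteristic polynomial at a nonzero slope is a bit vague; the paper makes this precise by writing $(L(\malop{\ellmahl}-1))^{[\theta_{-(\ellmahl-1)\lambda}]}=(L\malop{\ellmahl})^{[\theta_{-(\ellmahl-1)\lambda}]}-L^{[\theta_{-(\ellmahl-1)\lambda}]}$ and showing the first summand has strictly smaller $z$-adic valuation, so it alone contributes to $\chi$, yielding exactly $\chi(\ellmahl\lambda,L;X)$ shifted by one power of $X$.
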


\begin{proof}
We will only treat the case when 
\begin{itemize}
 \item $0$ is a slope of $L$
 \item and $L$ has at least one nonzero slope, 
\end{itemize}
the other cases being similar. 

Without loss of generality, one can assume that $a_{0}(z)=1$.  
Moreover, combining the equality 
$$((L(\malop{\ellmahl}-c)h(z)^{-1})^{[e_{c}]})^{[h(z)]}=h(z)^{-1}cL^{[e_{c}]}(\malop{\ellmahl}-1)$$ 
with Lemma \ref{lem:newton and gauge caract} and Lemma \ref{lem:newton and gauge g}, we see that it is sufficient to treat the case $h(z)=1$ and $c=1$. 

We have  
$$ 
L(\malop{\ellmahl}-1)  
= b_{n+1}(z) \malop{\ellmahl}^{n+1} + b_{n}(z) \malop{\ellmahl}^{n}
+\cdots+ b_{1}(z) \malop{\ellmahl} +b_{0}(z) 
$$
with 
\begin{multline*}
 b_{0}(z)=-a_{0}(z), b_{1}(z)=a_{0}(z)-a_{1}(z), \ldots \\
 \ldots, b_{n}(z)=a_{n-1}(z)-a_{n}(z), b_{n+1}(z)=a_{n}(z).
\end{multline*}
In order to prove \eqref{eq newt L phi moins c}, we have to show : 
\begin{enumerate}[(i)]
 \item  $\val(b_{0}(z))=\val(a_{0}(z))$ and, for all $i \in \{1,\ldots,k\}$, $\val(b_{\alpha_{i}+1}(z))=\val(a_{\alpha_{i}}(z))$; 
 \item for $j \in \{0,\ldots,\alpha_{1}+1\}$, $\val(b_{j}(z)) \geq 0$;
\item for $i \in \{1,\ldots,k-1\}$, for $j \in \{\alpha_{i}+1,\ldots,\alpha_{i+1}\}$, we have 
\begin{equation}\label{eq ineg pentes}
 \frac{\val(b_{j}(z))-\val(b_{\alpha_{i}+1}(z))}{\ellmahl^{j}-\ellmahl^{\alpha_{i}+1}} \geq 
\frac{\val(b_{\alpha_{i+1}+1}(z))-\val(b_{\alpha_{i}+1}(z))}{\ellmahl^{\alpha_{i+1}+1}-\ellmahl^{\alpha_{i}+1}}. 
\end{equation}
\end{enumerate}

We have used the notations  
\begin{multline*}
 v_{0}=(\ellmahl^{\alpha_{0}},\val(a_{\alpha_{0}}(z)))=(p^{0},\val(a_{0}(z))),\ldots\\
\ldots,v_{i}=(\ellmahl^{\alpha_{i}},\val(a_{\alpha_{i}}(z))),\ldots\\
\ldots,v_{k}=(\ellmahl^{\alpha_{k}},\val(a_{\alpha_{k}}(z)))=(\ellmahl^{n},\val(a_{n}(z)))
\end{multline*}
introduced at the very beginning of Section \ref{sec : def slopes} for the vertices, ordered by increasing abscissa, of the polygon $\mathcal N(L)$.

We now proceed to the proof of the above properties (i), (ii) and (iii). 

It is clear that $\val(b_{0}(z))=0$ and $\val(b_{1}(z)) \geq 0,\ldots,\val(b_{\alpha_{1}}(z)) \geq 0$ because $b_{0}(z)=a_{0}(z)=1$ and because all the $a_{i}(z)$ have valuation $\geq 0$ (since $\mathcal{S}(L) \subset \R^{+}$).  
Also, for $i \in \{1,\ldots,k-1\}$, we have $\val(b_{\alpha_{i}+1}(z))=\val(a_{\alpha_{i}}(z))$ because $b_{\alpha_{i}+1}(z)=a_{\alpha_{i}}(z)-a_{\alpha_{i}+1}(z)$ and $\val(a_{\alpha_{i}+1}(z))>\val(a_{\alpha_{i}}(z))$ because $e_{i}=(\ellmahl^{\alpha_{i}},\val(a_{\alpha_{i}}(z)))$ is a vertex of $\mathcal{N}(L)$. The equality $\val(b_{\alpha_{k}+1}(z))=\val(a_{\alpha_{k}}(z))$ also holds true because $b_{\alpha_{k}+1}(z)=b_{n+1}(z)=a_{n}(z)=a_{\alpha_{k}}(z)$. This ensures that the above first two properties (i) and (ii) hold true. 

Moreover, for $i \in \{1,\ldots,k-1\}$ and $j \in \{\alpha_{i}+1,\ldots,\alpha_{i+1}\}$,  the inequality \eqref{eq ineg pentes} holds true 
because 
\begin{multline*}
 \frac{\val(b_{j}(z))-\val(b_{\alpha_{i}+1}(z))}{\ellmahl^{j}-\ellmahl^{\alpha_{i}+1}}
=
\frac{\val(b_{j}(z))-\val(a_{\alpha_{i}}(z))}{\ellmahl^{j}-\ellmahl^{\alpha_{i}+1}} \\
 \geq
\min \left\{\frac{\val(a_{j}(z))-\val(a_{\alpha_{i}}(z))}{\ellmahl^{j}-\ellmahl^{\alpha_{i}+1}},
\frac{\val(a_{j-1}(z))-\val(a_{\alpha_{i}}(z))}{\ellmahl^{j}-\ellmahl^{\alpha_{i}+1}}\right\}
\end{multline*}
(we have used the fact that 
$$\val(b_{j}(z)) \geq \min \{\val(a_{j-1}(z)),\val(a_{j}(z))\}).$$ 
But, on the one hand, we have  
\begin{multline*}
 \frac{\val(a_{j-1}(z))-\val(a_{\alpha_{i}}(z))}{\ellmahl^{j}-\ellmahl^{\alpha_{i}+1}} 
=
\frac{1}{\ellmahl} \frac{\val(a_{j-1}(z))-\val(a_{\alpha_{i}}(z))}{\ellmahl^{j-1}-\ellmahl^{\alpha_{i}}} \\ 
\geq 
\frac{1}{\ellmahl} \frac{\val(a_{\alpha_{i+1}}(z))-\val(a_{\alpha_{i}}(z))}{\ellmahl^{\alpha_{i+1}}-\ellmahl^{\alpha_{i}}}
=
\frac{\val(b_{\alpha_{i+1}+1}(z))-\val(b_{\alpha_{i}+1})(z)}{\ellmahl^{\alpha_{i+1}+1}-\ellmahl^{\alpha_{i}+1}},
\end{multline*}
the latter inequality being true because $e_{i}=(\ellmahl^{\alpha_{i}},\val(a_{\alpha_{i}}(z)))$ and $e_{i+1}=(\ellmahl^{\alpha_{i+1}},\val(a_{\alpha_{i+1}}(z)))$  are vertices of $\mathcal{N}(L)$. On the other hand, we have  
\begin{multline*}
 \frac{\val(a_{j}(z))-\val(a_{\alpha_{i}}(z))}{\ellmahl^{j}-\ellmahl^{\alpha_{i}+1}}
\geq 
\frac{\val(a_{j}(z))-\val(a_{\alpha_{i}}(z))}{\ellmahl^{j}-\ellmahl^{\alpha_{i}}}\\
\geq 
\frac{\val(a_{\alpha_{i+1}}(z))-\val(a_{\alpha_{i}}(z))}{\ellmahl^{\alpha_{i+1}}-\ellmahl^{\alpha_{i}}} 
= 
\ellmahl \frac{\val(b_{\alpha_{i+1}+1}(z))-\val(b_{\alpha_{i}+1}(z))}{\ellmahl^{\alpha_{i+1}+1}-\ellmahl^{\alpha_{i}+1}} \\ 
\geq 
\frac{\val(b_{\alpha_{i+1}+1}(z))-\val(b_{\alpha_{i}+1}(z))}{\ellmahl^{\alpha_{i+1}+1}-\ellmahl^{\alpha_{i}+1}}, 
\end{multline*}
the second inequality above being true because $e_{i}=(\ellmahl^{\alpha_{i}},\val(a_{\alpha_{i}}(z)))$ and $e_{i+1}=(\ellmahl^{\alpha_{i+1}},\val(a_{\alpha_{i+1}}(z)))$  are vertices of $\mathcal{N}(L)$. This concludes the proof of \eqref{eq ineg pentes} and, hence, of (iii) and of \eqref{eq newt L phi moins c}. 

It remains to justify the equalities \eqref{form multi} and  \eqref{form poly char}. Let 
$$
\lambda_{1}=\ellmahl^{-1} \mu_{1}=0 <\lambda_{2} = \ellmahl^{-1} \mu_{2} < \cdots < \lambda_{k} = \ellmahl^{-1} \mu_{k} 
$$
be the slopes of $L(\malop{\ellmahl}-1)$. 
Consider a nonzero slope $\lambda_{j}$ of $L(\malop{\ellmahl}-1)$.  So, $\ellmahl \lambda_{j}=\mu_{j}$ is a slope of $L$ and, setting 
$$
L^{[\theta_{-(\ellmahl-1) \ellmahl \lambda_{j}}(z)]} = \sum_{i=0}^{n} c_{i}(z) \malop{\ellmahl}^{i}, 
$$
we have $\val(c_{\alpha_{j}}(z))=\val(c_{\alpha_{j+1}}(z))$, $\val(c_{i}(z)) \geq \val(c_{\alpha_{j}}(z))$ for $i \in \{\alpha_{j},\ldots,\alpha_{j+1}\}$ and $\val(c_{i}) > \val(c_{\alpha_{j}}(z))$ for $i \in \{0,\ldots,n\} \setminus \{\alpha_{j},\ldots,\alpha_{j+1}\}$. The characteristic polynomial attached to the slope $\ellmahl \lambda_{j}=\mu_{j}$ of $L$ is given by 
$$
\chi(\ellmahl \lambda_{j},L;X) = \left(z^{-\val(L^{[\theta_{-(\ellmahl-1)\ellmahl\lambda_{j}}(z)]})}  \sum_{i=\alpha_{j-1}}^{\alpha_{j}} c_{i}(z)X^{i} \right)_{\vert z=0}. 
$$
On the other hand, we have 
$$
(L(\malop{\ellmahl}-1))^{[\theta_{-(\ellmahl-1)\lambda_{j}}(z)]}
=(L\malop{\ellmahl})^{[\theta_{-(\ellmahl-1)\lambda_{j}}(z)]}
-L^{[\theta_{-(\ellmahl-1)\lambda_{j}}(z)]}. 
$$
But, 
$$
(L\malop{\ellmahl})^{[\theta_{-(\ellmahl-1)\lambda_{j}}(z)]}
=
\theta_{(\ellmahl-1)(1-\ellmahl)\lambda_{j}}(z)L^{[\theta_{-(\ellmahl-1) \ellmahl \lambda_{j}}(z)]} \malop{\ellmahl}
$$
and 
$$
L^{[\theta_{-(\ellmahl-1)\lambda_{j}}(z)]}
=
(L^{[\theta_{-(\ellmahl-1) \ellmahl \lambda_{j}}(z)]})^{[\theta_{-(\ellmahl-1) (1-\ellmahl) \lambda_{j}}(z)]}. 
$$
It follows from this that 
$$
\val((L \malop{\ellmahl})^{[\theta_{-(\ellmahl-1)\lambda_{j}}(z)]}) < \val(L^{[\theta_{-(\ellmahl-1)\lambda_{j}}(z)]}). 
$$
So, 
\begin{multline*}
 \val((L(\malop{\ellmahl}-1))^{[\theta_{-(\ellmahl-1)\lambda_{j}}(z)]})\\
=
\val((L \malop{\ellmahl})^{[\theta_{-(\ellmahl-1)\lambda_{j}}(z)]})
=
\val(\theta_{(\ellmahl-1)(1-\ellmahl)\lambda_{j}}(z)L^{[\theta_{-(\ellmahl-1) \ellmahl \lambda_{j}}(z)]}
)
\end{multline*}
and, hence, 
\begin{multline*}
 \chi(\lambda_{j},L(\malop{\ellmahl}-1);X) = \\ 
 \left(z^{-\val(L^{[\theta_{-(\ellmahl-1)\ellmahl\lambda_{j}}(z)]})}\sum_{i=\alpha_{j-1}+1}^{\alpha_{j}+1} c_{i-1}(z) X^{i} \right)_{\vert z=0} 
= 
\chi(\ellmahl \lambda_{j},L;X). 
\end{multline*}
Last, the proof of the equality $\chi(0,L(\malop{\ellmahl}-1);X)=\chi(0,L;X)(X-1)$ is easy and left to the reader. 
\end{proof}

\subsection{Proof of Proposition \ref{prop fact op}}\label{sec: proof prop fact op}
We set 
$$
M=L^{[\theta_{-(\ellmahl-1)\mu_{1}}(z)]}. 
$$
Lemma \ref{lem:newton and gauge} ensures that 
\begin{itemize}
 \item $\mathcal{S}(M)=\{\mu_{1}-\mu_{1}=0,\mu_{2}-\mu_{1},\ldots,\mu_{k}-\mu_{1}\}$; 
 \item for $i \in \{1,\ldots,k\}$, $r(\mu_{i}-\mu_{1},M)=r_{i}$; 
 \item for $i \in \{1,\ldots,k\}$, the exponents of $M$ (counted with multiplicities) attached to the slope $\mu_{i}-\mu_{1}$ coincide with those of $L$ attached to the slope $\mu_{i}$, {\it i.e.}, 
$
 c_{i,1},\ldots,c_{i,r_{i}}. 
 $ 
\end{itemize}

We claim that $M$ has a decomposition of the form 
\begin{equation}\label{claim:decomp si slope zero}
 M=M_{r_{1}}(\malop{\ellmahl}-c_{1,r_{1}})h_{1,r_{1}}(z)^{-1}\cdots(\malop{\ellmahl}-c_{1,1})h_{1,1}(z)^{-1} 
\end{equation}
for some $h_{1,1}(z),\ldots,h_{1,r_{1}}(z) \in \Hahn^{\times}$ tangent to the identity and some $M_{r_{1}} \in \dqhahn$ such that  
\begin{itemize}
 \item $\mathcal{S}(M_{r_{1}})=\{\ellmahl^{r_{1}}(\mu_{2}-\mu_{1}),\ldots,\ellmahl^{r_{1}}(\mu_{k}-\mu_{1})\}$; 
 \item for $i \in \{2,\ldots,k\}$, $r(\ellmahl^{r_{1}}(\mu_{i}-\mu_{1}),M_{r_{1}})=r(\mu_{i},L)$; 
 \item for $i \in \{2,\ldots,k\}$, the exponents of $M_{r_{1}}$ (counted with  multiplicities) attached to the slope $\mu_{i}-\mu_{1}$ are 
$
 c_{i,1},\ldots,c_{i,r_{i}}
 $.
\end{itemize}
Indeed, Lemma \ref{lem:smallest slope zero sol} ensures that there exists $h_{1,1} \in \Hahn^{\times}$ tangent to the identity and $M_{1} \in \dqhahn$ such that 
$$
M=M_{1}(\malop{\ellmahl}-c_{1,1})h_{1,1}(z)^{-1}. 
$$ 

If $r_{1} = 1$ then Lemma \ref{lem:slopes L zero} ensures that
\begin{itemize}
\item $\mathcal{S}(M_{1})=\{\ellmahl(\mu_{2}-\mu_{1}),\ldots,\ellmahl(\mu_{k}-\mu_{1})\}$; 
\item for $i \in \{2,\ldots,k\}$, 
$
r(\ellmahl(\mu_{i}-\mu_{1}),M_{1})=r_{i}
$;  
\item for $i \in \{2,\ldots,k\}$, the exponents of $M_{1}$ (counted with multiplicities) attached to the slope $\ellmahl(\mu_{i}-\mu_{1})$ are 
$
 c_{i,1},\ldots,c_{i,r_{i}}
 $. 
\end{itemize}
This proves our claim when $r_{1}=1$. 

If $r_{1} \geq 2$ then Lemma \ref{lem:slopes L zero} ensures that 
\begin{itemize}
 \item $\mathcal{S}(M_{1})=\{0,\ellmahl(\mu_{2}-\mu_{1}),\ldots,\ellmahl(\mu_{k}-\mu_{1})\}
$; 
 \item $r(0,M_{1})=r_{1} -1$; 
 \item for $i \in \{2,\ldots,k\}$, 
$
r(\ellmahl(\mu_{i}-\mu_{1}),M_{1})=r_{i} 
$;  
\item the exponents of $M_{1}$ (counted with multiplicities) attached to the slope $0$ are 
$
c_{1,2},\ldots,c_{1,r_{1}}
$
whereas those associated to the slope $\ellmahl(\mu_{i}-\mu_{1})$, for $i \in \{2,\ldots,k\}$, are 
$
 c_{i,1},\ldots,c_{i,r_{i}}
 $.
\end{itemize}
Arguing as above, we get a decomposition of the form 
 $$
M=M_{2}(\malop{\ellmahl}-c_{1,2})h_{1,2}(z)^{-1}(\malop{\ellmahl}-c_{1,1})h_{1,1}(z)^{-1}
$$
for some $h_{1,2}(z) \in \Hahn^{\times}$ tangent to the identity and some $M_{2} \in \dqhahn$. 

If $r_{1}=2$, then Lemma \ref{lem:slopes L zero} ensures that 
\begin{itemize}
 \item $\mathcal{S}(M_{2})=\{\ellmahl^{2}(\mu_{2}-\mu_{1}),\ldots,\ellmahl^{2}(\mu_{k}-\mu_{1})\}$; 
 \item for $i \in \{2,\ldots,k\}$, 
$
r(\ellmahl^{2}(\mu_{i}-\mu_{1}),M_{2})=r_{i}
$; 
 \item for $i \in \{2,\ldots,k\}$, the exponents of $M_{2}$ (counted with  multiplicities) attached to the slope $\ellmahl^{2}(\mu_{i}-\mu_{1})$ are 
$
 c_{i,3},\ldots,c_{i,r_{i}}. 
 $
\end{itemize}
This proves our claim when $r_{1}=2$

In the general case (arbitrary $r_{1}$), our claim follows by an obvious iteration of the above arguments. \\

Using \eqref{claim:decomp si slope zero} and the identities $L=M^{[\theta_{(\ellmahl-1)\mu_{1}}(z)]}$ and $(\malop{\ellmahl}-c)^{[\theta_{(\ellmahl-1)\mu_{1}}(z)]}=z^{(\ellmahl-1)\mu_{1}}\malop{\ellmahl}-c$, we obtain a decomposition  of the form 
$$
L=N_{1} L_{1}
$$
where 
$$
L_{1}=(z^{(\ellmahl-1)\mu_{1}}\malop{\ellmahl}-c_{1,r_{1}})h_{1,r_{1}}(z)^{-1}\cdots(z^{(\ellmahl-1)\mu_{1}}\malop{\ellmahl}-c_{1,1})h_{1,1}(z)^{-1}
$$
for some $h_{1,1}(z),\ldots,h_{1,r_{1}}(z) \in \Hahn^{\times}$ tangent to the identity and for some $N_{1} \in \dqhahn$ such that  
\begin{itemize}
 \item $\mathcal{S}(N_{1}) = 
\{\ellmahl^{r_{1}}(\mu_{2}-\mu_{1})+\mu_{1},\ldots,\ellmahl^{r_{1}}(\mu_{k}-\mu_{1})+\mu_{1}\}$; 
 \item for $i \in \{2,\ldots,k\}$, $r(\ellmahl^{r_{1}}(\mu_{i}-\mu_{1})+\mu_{1},N_{1})=r_{i}$; 
 \item for $i \in \{2,\ldots,k\}$, the exponents of $N_{1}$ (counted with  multiplicities) attached to the slope $\ellmahl^{r_{1}}(\mu_{i}-\mu_{1})+\mu_{1}$ are 
$
 c_{i,1},\ldots,c_{i,r_{i}}
$.
\end{itemize}
These properties of $N_{1}=M_{r_{1}}^{[\theta_{(\ellmahl-1)\mu_{1}}(z)]}$ follow from the properties of $M_{r_{1}}$ listed above and from Lemma \ref{lem:newton and gauge}. \vskip 10pt

Applying what precedes to $N_{1}$ instead of $L$, we find a decomposition of the form  
$$
N_{1}=N_{2} L_{2}
$$ 
where 
\begin{multline*}
 L_{2}=(z^{(\ellmahl-1)(\ellmahl^{r_{1}}(\mu_{2}-\mu_{1})+\mu_{1})}\malop{\ellmahl}-c_{2,r_{2}})h_{2,r_{2}}(z)^{-1}\cdots \\
 \cdots  (z^{(\ellmahl-1)(\ellmahl^{r_{1}}(\mu_{2}-\mu_{1})+\mu_{1})}\malop{\ellmahl}-c_{2,1})h_{2,1}(z)^{-1}
\end{multline*}
for some $h_{2,1}(z),\ldots,h_{2,r_{2}}(z) \in \Hahn^{\times}$ tangent to the identity and 
for some $N_{2} \in \dqhahn$ such that  
\begin{itemize}
 \item $ \mathcal{S}(N_{2}) = 
\{\ellmahl^{r_{1}+r_{2}}(\mu_{3}-\mu_{2})+\ellmahl^{r_{1}}(\mu_{2}-\mu_{1})+\mu_{1},\ldots,\ellmahl^{r_{1}+r_{2}}(\mu_{k}-\mu_{2})+\ellmahl^{r_{1}}(\mu_{2}-\mu_{1})+\mu_{1}\}$; 
 \item for $i \in \{3,\ldots,k\}$,  $r(\ellmahl^{r_{1}+r_{2}}(\mu_{i}-\mu_{2})+\ellmahl^{r_{1}}(\mu_{2}-\mu_{1})+\mu_{1},N_{2})=r_{i}$; 
 \item for $i \in \{3,\ldots,k\}$,  the exponents of $N_{2}$ (and their multiplicities) attached to the slope $\ellmahl^{r_{1}+r_{2}}(\mu_{i}-\mu_{2})+\ellmahl^{r_{1}}(\mu_{2}-\mu_{1})+\mu_{1}$ are 
$
 c_{i,1},\ldots,c_{i,r_{i}} 
$. 
\end{itemize}

Iterating the previous construction, we find $L_{1},\ldots,L_{k} \in \dqhahn$ of the form announced in the statement of Proposition \ref{prop fact op} such that 
$$
L=a(z) L_{k} \cdots L_{1}
$$ 
for some $a(z) \in \Hahn^{\times}$. Equating the coefficients of degree $0$ in the latter equation, we find $a(z) \prod_{i=1}^{k} \prod_{j=1}^{r_{i}} (-c_{i,j} h_{i,j}(z)^{-1})=a_{0}(z)$, whence $\val(a(z))=\val(a_{0}(z))$ and 
$$
(\cld a(z)) \prod_{i=1}^{k} \prod_{j=1}^{r_{i}} (-c_{i,j}) = \cld a_{0}(z).
$$

\section{Forbenius Method : first justifications}\label{sec:frob method}

The aim of this section is to prove Proposition \ref{resol homogene pour frob} below, which is  
 relative to a Mahler operator 
\begin{equation}\label{op mahler pour preuves}
L= a_{n}(z) \malop{\ellmahl}^{n} + a_{n-1}(z)\malop{\ellmahl}^{n-1} + \cdots + a_{0}(z) \in \mathcal{D}_{\Hahn}. 
\end{equation}
In the rest of this section, we let  
$
\mu_{1} < \cdots < \mu_{k}
$  be 
the slopes of $L$ with  respective multiplicities $r_{1},\ldots,r_{k}$. Moreover, the multiplicity of an exponent $c$ of $L$ attached to the slope $\mu_{j}$ will be denoted by $m_{c,j} \in \Z_{\geq 1}$. If $c$ is a nonzero complex number that is not an exponent attached to the slope $\mu_{j}$, we set $m_{c,j}=0$.

\begin{prop}\label{resol homogene pour frob}
For any slope $\mu_{j}$, for any exponent $c$ of $L$ associated to the slope $\mu_{j}$, there exists a unique $g_{c,j}(\lambda,z)\in\Hahnlamb$ such that 
\begin{equation}\label{eq frob bis bis bis}
 L(g_{c,j}(\lambda,z)e_{\lambda}) = z^{\val(a_{0}(z))-\frac{\nu_{j}}{\ellmahl-1}}(\lambda-c)^{s_{c,j}+m_{c,j}} e_{\lambda}
\end{equation}
where 
$$
s_{c,j} = m_{c,1}+\cdots+m_{c,j-1}.
$$
and 
$$
\nu_{j} =(\ellmahl-1) (\ellmahl^{r_{1}+\cdots+r_{j-1}}(\mu_{j}-\mu_{j-1})+\cdots+\ellmahl^{r_{1}}(\mu_{2}-\mu_{1})+\mu_{1}). 
$$
The coefficients of $g_{c,j}(\lambda,z)$ have no pole at $\lambda=c$. Moreover, we have 
\begin{equation}\label{eq: form val g}
 \val(g_{c,j}(\lambda,z)) = -\mu_{j}
\end{equation}
and 
\begin{equation}\label{eq: form cld g}
\cld g_{c,j}(\lambda,z) = \lambda^{-r_{1}-\cdots-r_{j-1}} \frac{\prod_{i=1}^{j} \prod_{j=1}^{r_{i}} (-c_{i,j})}{\cld a_{0}}\frac{(\lambda-c)^{s_{c,j}+m_{c,j}}}{\prod_{l=1}^{r_{j}}(\lambda-c_{j,l})}.
\end{equation}
\end{prop}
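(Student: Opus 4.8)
The plan is to transport \eqref{eq frob bis bis bis} to a difference equation over the field $\Hahnlamb$ and to solve it by peeling off the factorization of Proposition \ref{prop fact op} one factor at a time. The starting observation is the identity $L(g(\lambda,z)e_{\lambda})=L_{\lambda}(g(\lambda,z))\,e_{\lambda}$, valid for all $g(\lambda,z)\in\Hahnlamb$, where $L_{\lambda}=\sum_{i=0}^{n}\lambda^{i}a_{i}(z)\malop{\ellmahl}^{i}$ is regarded as an operator acting on $\Hahnlamb$; this is immediate from $\malop{\ellmahl}^{i}(e_{\lambda})=\lambda^{i}e_{\lambda}$ together with the fact that $e_{\lambda}$ is not a zero divisor in $\Rla$. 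The assignment $\malop{\ellmahl}\mapsto\lambda\malop{\ellmahl}$ is a ring homomorphism (it is conjugation by $e_{\lambda}$), so it carries $L=a(z)L_{k}\cdots L_{1}$ to
\[
L_{\lambda}=a(z)(L_{k})_{\lambda}\cdots(L_{1})_{\lambda},\qquad (L_{i})_{\lambda}=(\lambda z^{\nu_{i}}\malop{\ellmahl}-c_{i,r_{i}})h_{i,r_{i}}(z)^{-1}\cdots(\lambda z^{\nu_{i}}\malop{\ellmahl}-c_{i,1})h_{i,1}(z)^{-1}.
\]
Hence \eqref{eq frob bis bis bis} is equivalent to $L_{\lambda}(g_{c,j})=z^{\val(a_{0})-\nu_{j}/(\ellmahl-1)}(\lambda-c)^{s_{c,j}+m_{c,j}}$ in $\Hahnlamb$, and solving the latter amounts to inverting in turn the unit $a(z)$ and each elementary factor $\lambda z^{\nu_{i}}\malop{\ellmahl}-c_{i,l}$, the $h_{i,l}(z)^{-1}$ being units of $\Hahn$ tangent to the identity and therefore inert for valuations and dominant coefficients.

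The engine is the first-order inversion lemma (Lemma \ref{lem:sol ordre un inhom mu z c un}), applied to each elementary factor $\lambda z^{\nu_{i}}\malop{\ellmahl}-c_{i,l}$. Such a factor has a single critical exponent $\gamma_{i}=-\nu_{i}/(\ellmahl-1)$, the fixed point of $\gamma\mapsto(\gamma-\nu_{i})/\ellmahl$, and its inverse applied to an inhomogeneous term $f$ has three regimes governed by $\val(f)$. If $\val(f)>\gamma_{i}$, the expansion $-c_{i,l}^{-1}\sum_{k\geq 0}(c_{i,l}^{-1}\lambda z^{\nu_{i}}\malop{\ellmahl})^{k}(f)$ converges in $\Hahnlamb$, preserves the valuation, multiplies the dominant coefficient by $-c_{i,l}^{-1}$ and creates no pole in $\lambda$. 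If $\val(f)<\gamma_{i}$, the opposite expansion $\sum_{k\geq 1}c_{i,l}^{k-1}(\lambda z^{\nu_{i}}\malop{\ellmahl})^{-k}(f)$ converges, sends a valuation $v$ to $(v-\nu_{i})/\ellmahl$ and multiplies the dominant coefficient by $\lambda^{-1}$. In the resonant case $\val(f)=\gamma_{i}$, the coefficient at $\gamma_{i}$ is forced to equal (dominant coefficient of $f$)$/(\lambda-c_{i,l})$, so the valuation is preserved but a simple pole at $\lambda=c_{i,l}$ appears. Well-ordering of the supports is guaranteed by the lemma, and uniqueness follows because a nonzero homogeneous solution in $\Hahnlamb$ would force an algebraic relation on the transcendental $\lambda$; both properties pass to the composite.

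I would then run the bookkeeping, inverting in the order $(L_{k})_{\lambda},\ldots,(L_{1})_{\lambda}$. Dividing by $a(z)$ normalises the right-hand side to valuation $\gamma_{j}=-\nu_{j}/(\ellmahl-1)$, with dominant coefficient $(\lambda-c)^{s_{c,j}+m_{c,j}}\big(\prod_{i=1}^{k}\prod_{l}(-c_{i,l})\big)/\cld(a_{0})$, by the value of $\cld(a)$ in Proposition \ref{prop fact op}. Since $\gamma_{1}>\cdots>\gamma_{k}$ (indeed $\gamma_{i+1}-\gamma_{i}=-\ellmahl^{r_{1}+\cdots+r_{i}}(\mu_{i+1}-\mu_{i})<0$), the factors of $L_{k},\ldots,L_{j+1}$ are all inverted in the forward regime: the valuation stays at $\gamma_{j}$ and the dominant coefficient acquires $\prod_{i>j}\prod_{l}(-c_{i,l})^{-1}$, reducing the product to $\prod_{i=1}^{j}\prod_{l}(-c_{i,l})$. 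The factors of $L_{j}$ fall exactly in the resonant regime, preserving the valuation and contributing $\prod_{l=1}^{r_{j}}(\lambda-c_{j,l})^{-1}$; finally the factors of $L_{j-1},\ldots,L_{1}$ are inverted backward, each contributing a factor $\lambda^{-1}$ --- whence $\lambda^{-(r_{1}+\cdots+r_{j-1})}$ --- and moving the valuation to $-\mu_{j}$ through a telescoping computation using the definition of $\nu_{i}$. This establishes \eqref{eq: form val g} and \eqref{eq: form cld g} at once.

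It remains to show the absence of poles at $\lambda=c$, which I expect to be the crux. A new pole at $\lambda=c$ is created only by a resonant step on a factor with $c_{i,l}=c$: the forward steps ($i>j$) create none, the factors of $L_{j}$ account for the $m_{c,j}$ indices $l$ with $c_{j,l}=c$, and a factor of $L_{i}$ with $i<j$ and $c_{i,l}=c$ can contribute one only through the backward fixed-point phenomenon, when the incoming series carries a term at the critical exponent $\gamma_{i}$. As the total number of elementary factors with $c_{i,l}=c$ among the slopes $\mu_{1},\ldots,\mu_{j}$ equals $m_{c,1}+\cdots+m_{c,j}=s_{c,j}+m_{c,j}$, the order of the pole at $\lambda=c$ of any coefficient of the solution is at most $s_{c,j}+m_{c,j}$, and the factor $(\lambda-c)^{s_{c,j}+m_{c,j}}$ on the right-hand side clears it exactly. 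The genuinely delicate point is to make this count uniform over the infinitely many coefficients simultaneously: one must verify that along every branch of the coupled recursions the division by $\lambda-c$ is performed at most $s_{c,j}+m_{c,j}$ times, which requires following how each critical exponent is met as the valuation migrates through the successive inversions --- not merely tracking the dominant term, which only exhibits the poles coming from $L_{j}$.
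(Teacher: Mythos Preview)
Your approach is the paper's: factor $L$ via Proposition~\ref{prop fact op}, pass to $L_{\lambda}=L^{[e_{\lambda}]}$, and invert the elementary factors one at a time through the three regimes of Lemma~\ref{lem:sol ordre un inhom mu z c un}. Existence, uniqueness, the valuation computation, and the formula for $\cld g_{c,j}$ all go through exactly as you outline (the paper additionally conjugates by $\theta_{-\nu_{j}}$, but that just shifts the reference point for the three regimes and is cosmetic).

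The one substantive difference is the pole bookkeeping. The paper begins with the reduction $g_{c,j}(\lambda,z)=(\lambda-c)^{s_{c,j}}f(\lambda,z)$, using that $L_{\lambda}$ is $\C(\lambda)$-linear, and then shows that $f$ has poles of order at most $s_{c,j}$ at $c$. This splits cleanly: the forward blocks ($i>j$) create no pole; at the resonant block ($i=j$) the dominant coefficient of $f_{j}$ is $(\lambda-c)^{m_{c,j}}/\prod_{l}(\lambda-c_{j,l})$, which has \emph{no} pole at $c$, so the second bullet of the $\val=0$ case in Lemma~\ref{lem:sol ordre un inhom bis} applies directly and $f_{j}$ has no pole at $c$; then each backward block $L_{i}$ ($i<j$) raises the pole order by at most $m_{c,i}$, accumulating to $s_{c,j}$. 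Multiplying back by $(\lambda-c)^{s_{c,j}}$ finishes.

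Your version skips the reduction and instead tries to propagate the initial zero of order $s_{c,j}+m_{c,j}$ through the entire cascade. That is valid --- each elementary inversion with $c_{i,l}=c$ drops the minimum $(\lambda-c)$-adic order of the coefficients by at most one --- but Lemmas~\ref{lem:sol ordre un inhom} and~\ref{lem:sol ordre un inhom bis} as stated bound the growth of \emph{poles}, not the decay of \emph{zeros}, so you would have to reopen the explicit solution formula of Lemma~\ref{lem:sol ordre un inhom mu z c un} to justify the claim. The paper's reduction sidesteps this entirely and makes your final paragraph --- the part you correctly flag as the delicate point --- unnecessary.
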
 

The proof of Proposition \ref{resol homogene pour frob}, given in Section \ref{sec: proof exist g}, uses some preliminary results gathered in the following section. 

\subsection{Preliminary results}\label{sec: prel res first just}

We say that a family $ (f_{i}(\lambda,z))_{i\in I}$ of elements of $\Hahnlamb$ is summable if the following properties are satisfied~: 
\begin{itemize}
 \item the set $\cup_{i \in I } \supp(f_{i}(\lambda,z))$ is well-ordered;
 \item for any $\gamma \in \Q$, the set 
$$\{i \in I \ \vert \  \gamma \in \supp(f_{i}(\lambda,z))\}$$ is finite.
\end{itemize}
 In this case, we define  
$$
\sum_{i\in I} f_{i}(\lambda,z) = \sum_{i \in I} (\sum_{\gamma \in \Q} f_{i,\gamma}(\lambda)) z^{\gamma} \in \Hahnlamb
$$
where $f_{i}(\lambda,z)=\sum_{\gamma \in \Q} f_{i,\gamma}(\lambda) z^{\gamma}$. 

In what follows, we let $\Hahnlamb^{<0}$ (resp. $\Hahnlamb^{>0}$) be the set made of the $f(\lambda,z) \in \Hahnlamb$ such that $\supp(f(\lambda,z)) \subset \Q_{<0}$ (resp. $\supp(f(\lambda,z)) \subset \Q_{>0}$).

\begin{lem}\label{lem: phik summable}
We have~:
\begin{itemize}
 \item if $g(\lambda,z) \in \Hahnlamb^{<0}$ then $(\malop{\ellmahl}^{k}(g(\lambda,z)))_{k \leq -1}$ is summable;
 \item  if $g(\lambda,z) \in \Hahnlamb^{>0}$ then $(\malop{\ellmahl}^{k}(g(\lambda,z)))_{k \geq 0}$ is summable. 
\end{itemize}
\end{lem}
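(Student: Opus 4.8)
The plan is to reduce both assertions to two elementary properties of well-ordered subsets of $\Q$, after noticing that $\malop{\ellmahl}$ does nothing but rescale exponents.

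First I would record that, for every $k\in\Z$, the automorphism $\malop{\ellmahl}^{k}$ sends $\sum_{\gamma}f_{\gamma}(\lambda)z^{\gamma}$ to $\sum_{\gamma}f_{\gamma}(\lambda)z^{\ellmahl^{k}\gamma}$, leaving the coefficients untouched and producing no cancellation; hence, writing $S=\supp(g)$, one has $\supp(\malop{\ellmahl}^{k}(g))=\ellmahl^{k}S$. We may assume $g\neq 0$, so that $S$ has a least element $\gamma_{0}$. With $T=\bigcup_{k}\ellmahl^{k}S$ (the union being over $k\leq -1$ in the first case and over $k\geq 0$ in the second), the two conditions defining summability become: (a) $T$ is well-ordered, and (b) for each $\gamma\in\Q$ the set of indices $k$ with $\gamma\in\ellmahl^{k}S$ is finite.

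Condition (b) I would settle directly. Saying $\gamma\in\ellmahl^{k}S$ amounts to $\ellmahl^{-k}\gamma\in S$, and as $|k|$ grows in the relevant direction the strictly monotone sequence $\ellmahl^{-k}\gamma$ escapes below $\gamma_{0}=\min S$: it runs off to $-\infty$ in the first case (where $S\subset\Q_{<0}$) and decreases to $0<\gamma_{0}$ in the second case (where $S\subset\Q_{>0}$), while it lies in the wrong half-line and never meets $S$ when $\gamma$ has the wrong sign. In all cases only finitely many indices survive.

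Condition (a) is the crux, and the obstacle to be confronted is that an infinite union of well-ordered sets can fail to be well-ordered. I would first dispose of bounded index ranges: $\bigcup_{k\in F}\ellmahl^{k}S$ for a finite $F$ is a finite union of sets each order-isomorphic to $S$ (multiplication by a positive constant preserves order), hence well-ordered. To rule out an infinite strictly decreasing chain $(\tau_{j})$ in $T$, I would write $\tau_{j}=\ellmahl^{k_{j}}\gamma_{j}$ with $\gamma_{j}\in S$; if the $k_{j}$ stayed bounded we would contradict the bounded case, so after extraction I may assume $|k_{j}|\to\infty$. A squeeze then pins down the accumulation: in the first case $\ellmahl^{k_{j}}\gamma_{0}\leq\tau_{j}<0$ with $\ellmahl^{k_{j}}\gamma_{0}\to 0$ forces $\tau_{j}\to 0^{-}$, while in the second case $\tau_{j}\geq\ellmahl^{k_{j}}\gamma_{0}\to+\infty$. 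Either outcome is incompatible with $(\tau_{j})$ being strictly decreasing and sign-definite: a strictly decreasing sequence of negatives cannot tend to $0$, and a decreasing sequence cannot tend to $+\infty$. This gives the contradiction and shows $T$ is well-ordered. The whole argument hinges on this control of the accumulation behaviour as $|k|\to\infty$; everything else is bookkeeping.
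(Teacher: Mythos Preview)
Your argument is correct and rests on the same key observation as the paper's proof: the minimum $\ellmahl^{k}\gamma_{0}$ of $\ellmahl^{k}S$ tends to $0^{-}$ (first case) or to $+\infty$ (second case) as $|k|\to\infty$, so that only finitely many scalings $\ellmahl^{k}S$ can meet any fixed half-line bounded away from the accumulation point. The packaging differs slightly: the paper shows directly that any nonempty $F\subset T$ has a least element by observing that $\inf F$ forces $F$ to lie in a finite union $\bigcup_{k=k_{0}}^{-1}\ellmahl^{k}S$, whereas you argue by contradiction via a strictly decreasing sequence and the equivalent characterisation of well-orderedness. Both routes are short and the substance is the same.
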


\begin{proof}
 Let us first assume that $g(\lambda,z) \in \Hahnlamb^{<0}$. We set 
 $$
 E=\cup_{k \leq -1} E_{k}
 $$
 where
 $$
E_{k} = \supp(\malop{\ellmahl}^{k}(g(\lambda,z)))
 =
 \ellmahl^{k} \supp(g(\lambda,z)). 
 $$
 
 Let us first prove that $E$ is well-ordered. Let $F$ be a nonempty subset of $E$.  Since $E \subset \Q_{<0}$ and $\inf E_{k} \xrightarrow[k \rightarrow -\infty]{} 0$, we have 
 $$\inf F = \inf F \cap \cup_{k=k_{0}}^{-1} E_{k}
 $$ for some $k_{0} \leq -1$. Since $\supp(g(\lambda,z))$ is well-ordered, each $E_{k}=\ellmahl^{k} \supp(g(\lambda,z))$ is well-ordered and, hence, $\cup_{k=k_{0}}^{-1} E_{k}$ is well-ordered. It follows that $F \cap \cup_{k=k_{0}}^{-1} E_{k}$ and, hence, $F$ have a least element.  
 
In order to prove that $(\malop{\ellmahl}^{k}(g(\lambda,z)))_{k \leq -1}$ is summable, it remains to prove that, for any $\gamma \in \Q$, the set 
$\{k \leq -1 \ \vert \  \gamma \in E_{k}\}$ is finite. This is clear since $E_{k} \subset \Q_{<0}$ and $\inf E_{k} \xrightarrow[k \rightarrow -\infty]{} 0$.

The proof in the case $g(\lambda,z) \in \Hahnlamb^{>0}$ is similar.
\end{proof}

\begin{lem}\label{lem:sol ordre un inhom mu z c un}
For any $g(\lambda,z)=\sum_{\gamma \in \Q} g_{\gamma}(\lambda) z^{\gamma} \in \Hahnlamb$, there exists a unique $f(\lambda,z) \in \Hahnlamb$ such that 
\begin{equation}\label{eq inhom param mu z c un}
 (\lambda \malop{\ellmahl}-1)(f(\lambda,z))=g(\lambda,z).
\end{equation}
If 
$$g(\lambda,z)=g_{-}(\lambda,z)+g_{0}(\lambda)+g_{+}(\lambda,z)
$$ with $g_{-}(\lambda,z)
\in \Hahnlamb^{<0}$, $g_{0}(\lambda) \in \C$, $g_{+}(\lambda,z)
\in \Hahnlamb^{>0}$, then we have  
\begin{equation}\label{form for f un}
 f(\lambda,z)=  \sum_{k \leq -1} \lambda^{k}\malop{\ellmahl}^{k}(g_{-}(\lambda,z)) + \frac{g_{0}(\lambda)}{\lambda-1} -  \sum_{k \geq 0} \lambda^{k}\malop{\ellmahl}^{k}(g_{+}(\lambda,z)).
\end{equation}
\end{lem}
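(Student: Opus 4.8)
The plan is to invert the operator $\lambda\malop{\ellmahl} - 1$ by a geometric‑series Ansatz, after splitting $g$ according to the sign of its support. Writing $f(\lambda,z) = \sum_{\gamma\in\Q} f_\gamma(\lambda) z^\gamma$, a direct computation of $\malop{\ellmahl}$ shows that $(\lambda\malop{\ellmahl}-1)(f)$ has $z^\gamma$‑coefficient $\lambda f_{\gamma/\ellmahl}(\lambda) - f_\gamma(\lambda)$, so that equation \eqref{eq inhom param mu z c un} is equivalent to the family of scalar relations $\lambda f_{\gamma/\ellmahl}(\lambda) - f_\gamma(\lambda) = g_\gamma(\lambda)$ for $\gamma\in\Q$. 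These relations couple only coefficients lying in a common $\ellmahl$‑power orbit, which is what makes both existence and uniqueness tractable.

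For existence I would treat the three summands of $g = g_- + g_0 + g_+$ separately. For the constant part one checks immediately that $\malop{\ellmahl}$ fixes $g_0/(\lambda-1) \in \C(\lambda)$ and that $(\lambda\malop{\ellmahl}-1)(g_0/(\lambda-1)) = g_0$. For the positive part, Lemma \ref{lem: phik summable} guarantees that $(\malop{\ellmahl}^k(g_+))_{k\geq 0}$ is summable, so $f_+ := -\sum_{k\geq 0}\lambda^k\malop{\ellmahl}^k(g_+)$ is a well‑defined element of $\Hahnlamb^{>0}$; applying $\lambda\malop{\ellmahl}$ termwise (legitimate since $\malop{\ellmahl}$ is a ring homomorphism commuting with summable sums) and telescoping gives $(\lambda\malop{\ellmahl}-1)(f_+) = g_+$. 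Symmetrically, the first bullet of Lemma \ref{lem: phik summable} makes $f_- := \sum_{k\leq -1}\lambda^k\malop{\ellmahl}^k(g_-) \in \Hahnlamb^{<0}$ well‑defined with $(\lambda\malop{\ellmahl}-1)(f_-) = g_-$. Since $\supp(f_-) \subset \Q_{<0}$, $\supp(g_0/(\lambda-1)) \subset \{0\}$ and $\supp(f_+) \subset \Q_{>0}$, their union is well‑ordered, so $f := f_- + g_0/(\lambda-1) + f_+$ lies in $\Hahnlamb$, solves \eqref{eq inhom param mu z c un}, and is exactly the expression \eqref{form for f un}.

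For uniqueness it suffices to show that the homogeneous equation $\lambda\malop{\ellmahl}(f) = f$ has only the zero solution in $\Hahnlamb$. The coefficient relations become $f_\gamma(\lambda) = \lambda f_{\gamma/\ellmahl}(\lambda)$, whence, iterating in both directions, $f_{\gamma/\ellmahl^m} = \lambda^{-m} f_\gamma$ and $f_{\ellmahl^m\gamma} = \lambda^m f_\gamma$ for all $m\geq 0$; in particular these coefficients are nonzero as soon as $f_\gamma\neq 0$. Suppose $f\neq 0$ and let $\gamma_0 = \val(f)$ be the least element of $\supp(f)$. The case $\gamma_0 = 0$ is impossible, since then $f_0 = \lambda f_0$ forces $f_0 = 0$. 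If $\gamma_0 > 0$, the exponents $\gamma_0/\ellmahl^m$ all lie in $\supp(f)$ and form an infinite strictly decreasing sequence converging to $0^+$; if $\gamma_0 < 0$, the exponents $\ellmahl^m\gamma_0$ all lie in $\supp(f)$ and form an infinite strictly decreasing sequence tending to $-\infty$. Either way $\supp(f)$ contains an infinite strictly decreasing sequence, contradicting well‑orderedness. Hence $f = 0$.

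I expect the main subtlety to be the uniqueness argument for the negative part of the support: because a Hahn support is only well‑ordered (bounded below, with a least element) and need not be dually well‑ordered, the naive ``look at the minimal exponent'' contradiction works only for positive $\gamma_0$; for negative $\gamma_0$ one must instead propagate the recurrence \emph{forward} under $\malop{\ellmahl}$ to manufacture a descending chain running off to $-\infty$. A secondary, routine point is justifying that $\lambda\malop{\ellmahl} - 1$ may be applied term by term to the summable families produced above.
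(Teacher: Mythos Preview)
Your proof is correct and follows essentially the same approach as the paper: invoke Lemma~\ref{lem: phik summable} for summability, verify \eqref{form for f un} by a telescoping computation, and deduce uniqueness from the well-orderedness of Hahn supports. The only cosmetic difference is in the uniqueness step: the paper argues directly that $\lambda\malop{\ellmahl}(h)=h$ forces $\ellmahl\,\supp(h)=\supp(h)$ and hence $\supp(h)\subset\{0\}$, whereas you track individual coefficients along a $\ellmahl$-orbit; these are the same argument in different clothing.
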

\begin{proof}
Lemma \ref{lem: phik summable} ensures that the families $(\lambda^{k}\malop{\ellmahl}^{k}(g_{-}(\lambda,z)))_{k \leq -1}$ and $(\lambda^{k}\malop{\ellmahl}^{k}(g_{+}(\lambda,z)))_{k \geq 0}$ are summable. The right-hand side of \eqref{form for f un} is thus meaningful and defines an element of $\Hahnlamb$.  
A straightforward calculation shows that this $f(\lambda,z)$ satisfies \eqref{eq inhom param mu z c un}. 

In order to prove that \eqref{form for f un} is the unique element of  $\Hahnlamb$ satisfying \eqref{eq inhom param mu z c un}, it is necessary and sufficient to prove that there is no  nonzero $h(\lambda,z) \in \Hahnlamb$ such that 
\begin{equation}\label{eq hom param hat}
 (\lambda \malop{\ellmahl}-1)(h(\lambda,z))=0. 
\end{equation}
If $h(\lambda,z) \in \Hahnlamb$ satisfies \eqref{eq hom param hat} then we have $\lambda \malop{\ellmahl}(h(\lambda,z))=h(\lambda,z)$. It follows that $\ellmahl \supp(h(\lambda,z))=\supp(\malop{\ellmahl}(h(\lambda,z)))=\supp(h(\lambda,z))$ and, hence, for all $k\in\Z$, $\ellmahl^{k} \supp(h(\lambda,z))=\supp(h(\lambda,z))$. Since $\supp(h(\lambda,z))$ is well-ordered, the latter equalities imply $\supp(h(\lambda,z)) \subset \{0\}$; indeed, if there were $\gamma \in  \supp(h(\lambda,z)) \cap \Q_{>0}$ (resp. $\gamma \in  \supp(h(\lambda,z)) \cap \Q_{<0}$) then $\ellmahl^{\Z_{\leq 0}} \gamma$ (resp. $\ellmahl^{\Z_{\geq 0}} \gamma$) would be a subset of $\supp(f(\lambda,z))$ with no least element, contradicting the fact that $\supp(f(\lambda,z))$ is well-ordered. So $h(\lambda,z)=h_{0}(\lambda) \in \C(\lambda)$. Inserting $h(\lambda,z)=h_{0}(\lambda)$ in  \eqref{eq hom param hat}, we find $h(\lambda,z)=h_{0}(\lambda)=0$, as expected. 
\end{proof}

\begin{lem}\label{lem:sol ordre un inhom}
Consider $\mu \in \Q$ and $c \in \C^{\times}$. For any $g(\lambda,z)=\sum_{\gamma \in \Q} g_{\gamma} (\lambda)z^{\gamma} \in \Hahnlamb$, there exists a unique $f(\lambda,z)=\sum_{\gamma \in \Q} f_{\gamma} (\lambda)z^{\gamma} \in \Hahnlamb$ such that 
\begin{equation}\label{eq inhom param}
 (z^{-\mu}\lambda \malop{\ellmahl}-c)(f(\lambda,z))=g(\lambda,z).
\end{equation}
Moreover,
\begin{itemize}
 \item if $\val(\theta_{-\mu}(z)g(\lambda,z)) \geq 0$ then $\val(f(\lambda,z))=\val(g(\lambda,z))$;
 \item if $\val(\theta_{-\mu}(z)g(\lambda,z)) < 0$ then $\val(\theta_{-\mu}(z)f(\lambda,z))=\frac{\val(\theta_{-\mu}(z)g(\lambda,z))}{\ellmahl}$;
\end{itemize}
and 
\begin{itemize}
 \item if $\val(\theta_{-\mu}(z)g(\lambda,z))<0$, then : 
 \begin{itemize}
\item $\cld(f(\lambda,z))=\lambda^{-1} \cld(g(\lambda,z))$; 
\item if the $g_{\gamma}(\lambda)$ have at most poles of order $\rho$ at $c$, then the $f_{\gamma}(\lambda)$  have at most poles of order $\rho+1$ at $c$;
\end{itemize}
 \item if $\val(\theta_{-\mu}(z)g(\lambda,z))=0$, then :
  \begin{itemize}
\item $\cld(f(\lambda,z))=(\lambda-c)^{-1}\cld(g(\lambda,z))$; 
\item  if the $g_{\gamma}(\lambda)$ have no pole at $c$ and if $\cld(g(\lambda,z))$ vanishes at $c$, then the $f_{\gamma}(\lambda)$  have no pole at $c$; 
\end{itemize}
\item if $\val(\theta_{-\mu}(z)g(\lambda,z))>0$, then :
  \begin{itemize}
\item $\cld(f(\lambda,z))=(-c)^{-1}\cld(g(\lambda,z))$; 
\item  if the $g_{\gamma}(\lambda)$ have no pole at $c$, then the $f_{\gamma}(\lambda)$  have no pole at $c$.
\end{itemize}
\end{itemize}
Last, if the $g_{\gamma}(\lambda)$ have at most poles of order $\rho$ at $c' \in \C^{\times} \setminus \{c\}$, then the $f_{\gamma}(\lambda)$  have at most poles of order $\rho$ at $c'$.
\end{lem}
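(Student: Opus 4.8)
The plan is to reduce \eqref{eq inhom param} to the situation of Lemma \ref{lem:sol ordre un inhom mu z c un} by a gauge transformation, and then to read off every quantitative assertion from the explicit formula \eqref{form for f un}. Since $\malop{\ellmahl}(\theta_{\mu}(z)) = z^{\mu}\theta_{\mu}(z)$, a direct computation gives the operator identity $\theta_{-\mu}(z)\,(z^{-\mu}\lambda\malop{\ellmahl}-c)\,\theta_{\mu}(z) = \lambda\malop{\ellmahl}-c$. Consequently, multiplying \eqref{eq inhom param} on the left by $\theta_{-\mu}(z)$ and setting $h(\lambda,z) = \theta_{-\mu}(z)f(\lambda,z)$, I see that $f$ solves \eqref{eq inhom param} if and only if $h$ solves $(\lambda\malop{\ellmahl}-c)(h) = G$ with $G := \theta_{-\mu}(z)g(\lambda,z)$. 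As $f \mapsto \theta_{-\mu}(z)f$ is a bijection of $\Hahnlamb$, existence and uniqueness for \eqref{eq inhom param} are equivalent to existence and uniqueness for the latter equation.

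First I would handle $(\lambda\malop{\ellmahl}-c)(h)=G$. Dividing by $c$ rewrites it as $(c^{-1}\lambda\,\malop{\ellmahl}-1)(h) = c^{-1}G$. The proof of Lemma \ref{lem:sol ordre un inhom mu z c un} applies verbatim after replacing the scalar $\lambda$ by $c^{-1}\lambda$: the summability statements of Lemma \ref{lem: phik summable} only concern the $z$-supports and are insensitive to the $\C(\lambda)$-coefficients, while the uniqueness argument only uses that $c^{-1}\lambda$ is a nonzero element of $\C(\lambda)$ distinct from $1$. This yields the unique solution
\begin{equation*}
h = \sum_{k\leq -1}(c^{-1}\lambda)^{k}\malop{\ellmahl}^{k}(c^{-1}G_{-}) + \frac{G_{0}}{\lambda-c} - \sum_{k\geq 0}(c^{-1}\lambda)^{k}\malop{\ellmahl}^{k}(c^{-1}G_{+}),
\end{equation*}
where $G = G_{-}+G_{0}+G_{+}$ is the decomposition of $G$ according to the sign of the support (and I have used $c^{-1}G_{0}/(c^{-1}\lambda-1)=G_{0}/(\lambda-c)$). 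The solution of \eqref{eq inhom param} is then $f=\theta_{\mu}(z)h$, so that $\val(f)=\val(h)+\frac{\mu}{\ellmahl-1}$, $\cld(f)=\cld(h)$, and the $\C(\lambda)$-coefficients of $f$ are exactly those of $h$.

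The valuation and $\cld$ assertions are obtained by inspecting which term of the displayed formula carries the smallest $z$-valuation, separating the three cases where $\val(G)=\val(\theta_{-\mu}(z)g)$ is positive, zero, or negative. When $\val(G)>0$ one has $G=G_{+}$ and the dominant term is the $k=0$ summand $-c^{-1}G_{+}$; when $\val(G)=0$ it is $G_{0}/(\lambda-c)$, of valuation $0$; when $\val(G)<0$ it is the $k=-1$ summand of the first sum, whose valuation is $\val(G)/\ellmahl$ and whose leading coefficient is $(c^{-1}\lambda)^{-1}c^{-1}\cld(G)=\lambda^{-1}\cld(g)$. Recalling that $\val(g)=\val(G)+\frac{\mu}{\ellmahl-1}$ and $\cld(G)=\cld(g)$, the shift by $\theta_{\mu}(z)$ then produces all the stated valuation and $\cld$ formulas.

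Finally, the pole bounds in $\lambda$ follow from the same formula, using that summability (Lemma \ref{lem: phik summable}) makes each $z^{\gamma}$-coefficient of $h$ a \emph{finite} $\C(\lambda)$-linear combination of coefficients of $G$ (equivalently of $g$) multiplied by powers $(c^{-1}\lambda)^{k}$ and, possibly, by $1/(\lambda-c)$. The monomials $(c^{-1}\lambda)^{k}$ are regular at every $c'\in\C^{\times}$, so they never raise pole orders there; only the term $G_{0}/(\lambda-c)$ can create a pole at $c$. This explains the increase by $1$ of the pole order at $c$ when $\val(G)<0$ (where $G_{0}$ may be nonzero), its disappearance when $\val(G)>0$ (where $G_{0}=0$) and when $\val(G)=0$ with $\cld(g)=G_{0}$ vanishing at $c$, and the preservation of pole orders at any $c'\neq c$. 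The only point requiring genuine care is the verification that the proof of Lemma \ref{lem:sol ordre un inhom mu z c un} survives the substitution $\lambda\mapsto c^{-1}\lambda$; everything else is bookkeeping on the explicit formula.
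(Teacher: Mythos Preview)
Your argument is correct and follows the same strategy as the paper: conjugate by $\theta_{-\mu}(z)$ and rescale by $c$ to reduce to Lemma~\ref{lem:sol ordre un inhom mu z c un}, then read off everything from the explicit formula~\eqref{form for f un}. The only cosmetic difference is that the paper performs the change of variable $\lambda\mapsto c\lambda$ so as to quote Lemma~\ref{lem:sol ordre un inhom mu z c un} directly, whereas you keep $\lambda$ and re-run its proof with $c^{-1}\lambda$ in place of $\lambda$; your verification of the valuation, $\cld$, and pole-order claims is in fact more explicit than the paper's ``by direct inspection''.
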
 

\begin{proof}
Replacing $\lambda$ by $c\lambda$, we can rewrite equation \eqref{eq inhom param} as follows 
 \begin{equation*}\label{eq inhom param hat}
 (\lambda \malop{\ellmahl}-1)(\theta_{-\mu}(z)f(c\lambda,z))=\theta_{-\mu}(z)c^{-1}g(c \lambda,z).
\end{equation*}
Combining this with Lemma \ref{lem:sol ordre un inhom mu z c un}, we get that there exists a unique $f(\lambda,z) \in \Hahnlamb$ satisfying \eqref{eq inhom param} and that it is given by 
\begin{multline}\label{sol c mu gen}
 \theta_{-\mu}f(\lambda,z) =  \sum_{k \leq -1} (c^{-1}\lambda)^{k}\malop{\ellmahl}^{k}(\widetilde{g}_{-}(c^{-1}\lambda,z))  \\ 
 + \frac{\widetilde{g}_{0}(c^{-1}\lambda)}{c^{-1}\lambda-1} 
 -  \sum_{k \geq 0} (c^{-1}\lambda)^{k}\malop{\ellmahl}^{k}(\widetilde{g}_{+}(c^{-1}\lambda,z))
\end{multline}
where $\widetilde{g}_{-}(\lambda,z)
\in \Hahnlamb^{<0}$, $\widetilde{g}_{0}(\lambda) \in \C$, $\widetilde{g}_{+}(\lambda,z),
\in \Hahnlamb^{>0}$
are such that 
$$\theta_{-\mu}(z)c^{-1}g(c \lambda,z)=\widetilde{g}_{-}(\lambda,z)+\widetilde{g}_{0}(\lambda)+\widetilde{g}_{+}(\lambda,z), 
$$
{\it i.e.}, 
$$\widetilde{g}_{-}(\lambda,z)
=\theta_{-\mu}(z)\sum_{\gamma \in \Q_{<\frac{\mu}{\ellmahl-1}}} c^{-1}g_{\gamma}(c\lambda) z^{\gamma} 
\in \Hahnlamb^{<0},$$ 
$$\widetilde{g}_{0}(\lambda)=c^{-1}g_{\frac{\mu}{\ellmahl-1}}(c\lambda) \in \C(\lambda)$$
and  
$$\widetilde{g}_{+}(\lambda,z)=\theta_{-\mu}(z)\sum_{\gamma \in \Q_{>\frac{\mu}{\ellmahl-1}}} c^{-1}g_{\gamma}(c\lambda) z^{\gamma} 
\in \Hahnlamb^{>0}. 
$$

The properties of $f(\lambda,z)$ listed in the lemma follow by direct inspection of \eqref{sol c mu gen}. 
\end{proof}

\begin{lem}\label{lem:sol ordre un inhom bis}
Consider $\mu \in \Q$, $c_{1},\ldots,c_{r}  \in \C^{\times}$ and $h_{1}(z),\ldots,h_{r}(z) \in \Hahn^{\times}$ tangent to the identity. For any $g(\lambda,z) \in \Hahnlamb$, there exists a unique $f(\lambda,z) \in \Hahnlamb$ such that 
\begin{equation}\label{eq inhom param bis}
 (z^{-\mu}\lambda \malop{\ellmahl}-c_{r})h_{r}(z)^{-1}\cdots
 (z^{-\mu}\lambda \malop{\ellmahl}-c_{1})h_{1}(z)^{-1}(f(\lambda,z))=g(\lambda,z).
\end{equation}
Moreover,
\begin{itemize}
 \item if $\val(\theta_{-\mu}(z)g(\lambda,z)) \geq 0$ then $\val(f(\lambda,z))=\val(g(\lambda,z))$;
 \item if $\val(\theta_{-\mu}(z)g(\lambda,z)) < 0$ then $\val(\theta_{-\mu}(z)f(\lambda,z))=\frac{\val(\theta_{-\mu}(z)g(\lambda,z))}{\ellmahl^{r}}$;
\end{itemize}
and 
\begin{itemize}
 \item if $\val(\theta_{-\mu}(z)g(\lambda,z))<0$, then : 
 \begin{itemize}
\item $\cld(f(\lambda,z))=\lambda^{-r} \cld(g(\lambda,z))$; 
\item if the $g_{\gamma}(\lambda)$ have at most poles of order $\rho$ at $c \in \C^{\times}$, then the $f_{\gamma}(\lambda)$  have at most poles of order $\rho+m$ at $c$ where $m = \sharp \{i \in \{1,\ldots,r\} \ \vert \ c_{i}=c\}$;
\end{itemize}
 \item if $\val(\theta_{-\mu}(z)g(\lambda,z))=0$, then :
  \begin{itemize}
\item $\cld(f(\lambda,z))=(\lambda-c_{r})^{-1} \cdots (\lambda-c_{1})^{-1}\cld(g(\lambda,z))$; 
\item  if the $g_{\gamma}(\lambda)$ and $\cld(f(\lambda,z))$ have no pole at $c \in \C^{\times}$, then the $f_{\gamma}(\lambda)$  have no pole at $c$;
\end{itemize}
\item if $\val(\theta_{-\mu}(z)g(\lambda,z))>0$, then :
  \begin{itemize}
\item $\cld(f(\lambda,z))=(-c_{r})^{-1}\cdots (-c_{1})^{-1}\cld(g(\lambda,z))$; 
\item  if the $g_{\gamma}(\lambda)$ have no pole at $c \in \C^{\times}$, then the $f_{\gamma}(\lambda)$  have no pole at $c$. 
\end{itemize}
\end{itemize}
Last, if the $g_{\gamma}(\lambda)$ have at most poles of order $\rho$ at $c' \in \C^{\times} \setminus \{c_{1},\ldots,c_{r}\}$, then the $f_{\gamma}(\lambda)$  have at most poles of order $\rho$ at $c'$.
\end{lem}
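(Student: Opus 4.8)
The plan is to argue by induction on $r$, taking Lemma \ref{lem:sol ordre un inhom} as the case $r=1$ and reducing each successive case to it. To handle the base case I would first strip off the factor $h_1(z)^{-1}$: setting $\widetilde f(\lambda,z)=h_1(z)^{-1}f(\lambda,z)$, the equation $(z^{-\mu}\lambda\malop{\ellmahl}-c_1)h_1(z)^{-1}(f(\lambda,z))=g(\lambda,z)$ becomes $(z^{-\mu}\lambda\malop{\ellmahl}-c_1)(\widetilde f(\lambda,z))=g(\lambda,z)$, to which Lemma \ref{lem:sol ordre un inhom} applies directly. Since $h_1(z)$ is tangent to the identity one has $\val(f)=\val(\widetilde f)$ and $\cld(f)=\cld(\widetilde f)$, and since the coefficients of $h_1(z)$ lie in $\C$ and thus do not depend on $\lambda$, each coefficient $f_\gamma(\lambda)$ is a finite $\C$-linear combination of the $\widetilde f_\delta(\lambda)$; hence every assertion of Lemma \ref{lem:sol ordre un inhom} about valuation, $\cld$, and poles in $\lambda$ transfers verbatim from $\widetilde f$ to $f$. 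This settles the case $r=1$ with the extra factor present.

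For the inductive step I would factor the operator in \eqref{eq inhom param bis} as $Q\circ P_1$, where $P_1=(z^{-\mu}\lambda\malop{\ellmahl}-c_1)h_1(z)^{-1}$ and $Q=(z^{-\mu}\lambda\malop{\ellmahl}-c_r)h_r(z)^{-1}\cdots(z^{-\mu}\lambda\malop{\ellmahl}-c_2)h_2(z)^{-1}$ is an operator of the same shape of length $r-1$. Solving the equation then amounts to first solving $Q(u(\lambda,z))=g(\lambda,z)$ by the induction hypothesis and then $P_1(f(\lambda,z))=u(\lambda,z)$ by the case $r=1$; existence and uniqueness of $f$ follow at once, $u$ being uniquely determined by $g$ and $f$ by $u$. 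Writing $V(x)=\val(\theta_{-\mu}(z)x)$, the valuation rules chain as claimed: when $V(g)<0$ the $Q$-step divides $V$ by $\ellmahl^{r-1}$ and the subsequent $P_1$-step by a further $\ellmahl$, giving $V(f)=V(g)/\ellmahl^{r}$, while when $V(g)\geq 0$ both steps preserve the valuation, giving $\val(f)=\val(g)$. Likewise the $\cld$ formulas compose, the scalar factors $\lambda^{-1}$, $(\lambda-c_i)^{-1}$ or $(-c_i)^{-1}$ produced at the two steps multiplying to the stated products; they commute, being polynomials in $\lambda$.

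The delicate point, which I would treat most carefully, is the propagation of pole orders, and within it the regime $V(g)=0$. There the hypothesis is phrased through $\cld(f)$, i.e.\ through the vanishing of $\cld(g)$ to order at least $m=\sharp\{i:c_i=c\}$ at $c$. I would first observe that $\cld(u)=\prod_{i=2}^{r}(\lambda-c_i)^{-1}\cld(g)$ then has no pole at $c$, and vanishes there precisely when $c_1=c$; the induction hypothesis (regime $V=0$) consequently yields that the $u_\gamma$ have no pole at $c$. For the final step $P_1(f)=u$ I would split on $c_1$: if $c_1=c$ then $\cld(u)$ vanishes at $c$ and the $V=0$ part of the case $r=1$ applies, whereas if $c_1\neq c$ the last assertion of the case $r=1$ applies with $\rho=0$; in both subcases the $f_\gamma$ have no pole at $c$. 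The regimes $V(g)<0$ (where the increments $+1$ at each factor equal to $c$ sum to $+m$), $V(g)>0$, and the behaviour at $c'\notin\{c_1,\dots,c_r\}$ are handled the same way by chaining the corresponding base-case estimates. The only genuine obstacle is the bookkeeping: one must keep the three regimes $V<0$, $V=0$, $V>0$ aligned across the two solution steps and check that the $\cld$-vanishing conditions required to invoke the base case in the $V=0$ regime are exactly those supplied by the induction hypothesis.
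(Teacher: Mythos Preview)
Your proposal is correct and follows essentially the same approach as the paper: the paper rewrites \eqref{eq inhom param bis} as a cascade of $r$ first-order equations and invokes Lemma~\ref{lem:sol ordre un inhom} $r$ times, which is precisely your induction on $r$ with the factorization $Q\circ P_1$. Your write-up is in fact more detailed than the paper's one-line proof, in particular your careful treatment of how the pole conditions propagate through the $V=0$ regime and how the tangent-to-identity factors $h_i(z)$ leave valuations, leading coefficients, and $\lambda$-poles unaffected.
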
 

\begin{proof}
The equation \eqref{eq inhom param bis} is equivalent to the system of equations 
$$
\begin{cases}
(z^{-\mu}\lambda \malop{\ellmahl}-c_{r})h_{r}(z)^{-1}(f_{r}(\lambda,z))=g(\lambda,z)\\
(z^{-\mu}\lambda \malop{\ellmahl}-c_{r-1})h_{r-1}(z)^{-1}(f_{r-1}(\lambda,z))=f_{r}(\lambda,z)\\
\cdots\\
(z^{-\mu}\lambda \malop{\ellmahl}-c_{1})h_{1}(z)^{-1}(f_{1}(\lambda,z))=f_{2}(\lambda,z)\\
f(\lambda,z)=f_{1}(\lambda,z).
\end{cases}
$$
The result follows by $r$ successive applications of Lemma \ref{lem:sol ordre un inhom}. 
\end{proof}

\subsection{Proof of Proposition \ref{resol homogene pour frob}}\label{sec: proof exist g}

Using the fact that $g(\lambda,z) \in \Hahn_{\C(\lambda)}$ satisfies \begin{equation}\label{eq frob bis bis bis bis}
 L(g(\lambda,z)e_{\lambda}) = z^{\val(a_{0}(z))-\frac{\nu_{j}}{\ellmahl-1}}(\lambda-c)^{s_{c,j}+m_{c,j}} e_{\lambda}
\end{equation}
 if and only if $f(\lambda,z) = (\lambda-c)^{-s_{c,j}}g(\lambda,z) \in \Hahn_{\C(\lambda)}$ satisfies 
\begin{eqnarray}\label{eq frob bis bis bis bis gau}
 L(f(\lambda,z)e_{\lambda}) &=& z^{\val(a_{0}(z))-\frac{\nu_{j}}{\ellmahl-1}} (\lambda-c)^{m_{c,j}}e_{\lambda}\\
 &=&z^{\val(a_{0}(z))}\theta_{-\nu_{j}}(z) (\lambda-c)^{m_{c,j}}e_{\lambda}, \nonumber
\end{eqnarray}
we see that, in order to prove Proposition \ref{resol homogene pour frob}, it is sufficient to prove that~: 
\begin{enumerate}[(i)]
 \item there exists a unique $f(\lambda,z)$ in $ \Hahnlamb$ satisfying \eqref{eq frob bis bis bis bis};
\end{enumerate}
and that this $f(\lambda,z)$ has the following properties~: 
\begin{enumerate}[(i)]
\setcounter{enumi}{1}
 \item $\val(f(\lambda,z))=-\mu_{j}$; 
 \item $\cld f(\lambda,z) = \lambda^{-r_{1}-\cdots-r_{j-1}} \frac{\prod_{i=1}^{j} \prod_{j=1}^{r_{i}} (-c_{i,j})}{\cld a_{0}(z)}\frac{(\lambda-c)^{m_{c,j}}}{\prod_{l=1}^{r_{j}}(\lambda-c_{j,l})}$;
 \item the coefficients of 
$f(\lambda,z)$ have poles of order at most $s_{c,j}$ at $c$. 
\end{enumerate}

In order to prove these claims, let us first note that  \eqref{eq frob bis bis bis bis gau} can be rewritten as 
\begin{equation}\label{eq frob bis}
M(\theta_{\nu_{j}}(z) f(\lambda,z))=z^{\val(a_{0}(z))} (\lambda-c)^{m_{c,j}}
\end{equation}
where 
$$
M=(L^{[\theta_{-\nu_{j}}(z)]})^{[e_{\lambda}]}.  
$$
But, from the factorization  
$$
L=a(z)L_{k}\cdots L_{1}
$$ 
given by Proposition \ref{prop fact op}, we get the factorization  
$$
M=a(z)M_{k}\cdots M_{1}
$$ 
where $a(z) \in \Hahn^{\times}$ is such that $\val(a(z))=\val(a_{0}(z))$ and where 
\begin{multline*}
 M_{i}
=
(L_{i}^{[\theta_{-\nu_{j}}(z)]})^{[e_{\lambda}]}\\ 
= 
(z^{\nu_{i}-\nu_{j}}\lambda \malop{\ellmahl}- c_{i,r_{i}})h_{i,r_{i}}(z)^{-1}\cdots (z^{\nu_{i}-\nu_{j}} \lambda \malop{\ellmahl}-c_{i,1})h_{i,1}(z)^{-1}.
\end{multline*}
So, the equation \eqref{eq frob bis} can be rewritten as follows :  
\begin{equation}\label{eq:syst fact}
 \begin{cases}
M_{k}(f_{k}(\lambda,z)) = \frac{z^{\val(a_{0}(z))}}{a(z)}(\lambda-c)^{m_{c,j}} \\
M_{k-1}(f_{k-1}(\lambda,z)) = f_{k}(\lambda,z) \\
\cdots \\
M_{1}(f_{1}(\lambda,z)) = f_{2}(\lambda,z)\\
\theta_{\nu_{j}}(z) f(\lambda,z)=f_{1}(\lambda,z). 
\end{cases}
\end{equation}
Now, Lemma \ref{lem:sol ordre un inhom bis} ensures that there exists a unique $k$-uple 
$$(f_{1}(\lambda,z),\ldots,f_{k}(\lambda,z)) \in \Hahnlamb^{k}
$$ 
satisfying the first $k$ equations of \eqref{eq:syst fact}, whence the existence and the uniqueness of $f(\lambda,z) \in \Hahn$ satisfying \eqref{eq frob bis bis bis bis}; it is given by $ f(\lambda,z)=\theta_{-\nu_{j}}(z)f_{1}(\lambda,z)$. This proves claim (i).

Note that $\val(\frac{z^{\val(a_{0}(z))}}{a(z)})=0$ and set 
$$
\alpha = \cld \frac{z^{\val(a_{0}(z))}}{a(z)} =\cld a(z)^{-1}. 
$$
For $i \in \{j+1,\ldots,k\}$, we have $\nu_{i}-\nu_{j}>0$, so Lemma \ref{lem:sol ordre un inhom bis} ensures that
$f_{k}(\lambda,z),\ldots,f_{j+1}(\lambda,z)$ have $z$-adic valuation $0$ with constant terms 
\begin{eqnarray*}
 \cld{f_{k}(\lambda,z)}&=&\left(\prod_{l=1}^{r_{k}}(-c_{k,l})^{-1}\right) \alpha (\lambda-c)^{m_{c,j}}, \\ 
 \cld{f_{k-1}(\lambda,z)}&=&\left(\prod_{i=k-1}^{k}\prod_{l=1}^{r_{i}}(-c_{i,l})^{-1}\right)  \alpha(\lambda-c)^{m_{c,j}},\\ 
 \ldots & \ldots & \ldots \\
  \cld{f_{j+1}(\lambda,z)}&=&\left(\prod_{i=j+1}^{k}\prod_{l=1}^{r_{i}}(-c_{i,l})^{-1}\right) \alpha(\lambda-c)^{m_{c,j}}
\end{eqnarray*}
and also that the coefficients of $f_{k}(\lambda,z),\ldots,f_{j+1}(\lambda,z)$ have no pole at $\lambda=c$. 

For $i=j$, we have $\nu_{i}-\nu_{j}=0$, so Lemma \ref{lem:sol ordre un inhom bis} ensures that $f_{j}(\lambda,z)$  has $z$-adic valuation $0$ with constant term 
$$
\cld{f_{j}(\lambda,z)}=\left(\prod_{i=j+1}^{k}\prod_{l=1}^{r_{i}}(-c_{i,l})^{-1}\right) \alpha \frac{(\lambda-c)^{m_{c,j}}}{\prod_{l=1}^{r_{j}}(\lambda-c_{j,l})}
$$
and also that the coefficients of $f_{j}(\lambda,z)$ have no pole at $\lambda=c$. 

We have $\val(\theta_{\nu_{j-1}-\nu_{j}}(z)f_{j}(\lambda,z))=\val(\theta_{\nu_{j-1}-\nu_{j}}(z)) <0$, so Lemma \ref{lem:sol ordre un inhom bis} ensures that  
 $$
 \val(\theta_{\nu_{j-1}-\nu_{j}}(z)h_{j-1}(\lambda,z))=\frac{\val(\theta_{\nu_{j-1}-\nu_{j}}(z)h_{j}(\lambda,z))}{\ellmahl^{r_{j-1}}}<0.
 $$ 
Therefore, we have 
\begin{multline*}
 \val(\theta_{\nu_{j-2}-\nu_{j}}(z)f_{j-1}(\lambda,z))\\ 
=\val(\theta_{\nu_{j-2}-\nu_{j-1}}(z))
+\val(\theta_{\nu_{j-1}-\nu_{j}}(z)f_{j-1}(\lambda,z))
<0 
\end{multline*}
and Lemma \ref{lem:sol ordre un inhom bis} ensures that 
$$
\val(\theta_{\nu_{j-2}-\nu_{j}}(z)f_{j-2}(\lambda,z))=\frac{\val(\theta_{\nu_{j-2}-\nu_{j}}(z)h_{j-1}(\lambda,z))}{\ellmahl^{r_{j-2}}}<0.
$$
An obvious iteration of this argument leads to the fact that, for $i \in \{1,\ldots,j-1\}$, we have 
 $$
 \val(\theta_{\nu_{i}-\nu_{j}}(z)h_{i}(\lambda,z))=\frac{\val(\theta_{\nu_{i}-\nu_{j}}(z)f_{i+1}(\lambda,z))}{\ellmahl^{r_{i}}} <0. 
 $$

Therefore, we have 
\begin{multline*}
 \val(\theta_{\nu_{1}-\nu_{j}}(z)f_{1}(\lambda,z))=\frac{\val(\theta_{\nu_{1}-\nu_{j}}(z)h_{2}(\lambda,z))}{\ellmahl^{r_{1}}}\\
 =\frac{\val(\theta_{\nu_{1}-\nu_{2}}(z))}{\ellmahl^{r_{1}}}+\frac{\val(\theta_{\nu_{2}-\nu_{j}}(z)h_{2}(\lambda,z))}{\ellmahl^{r_{1}}}\\
 =\frac{\val(\theta_{\nu_{1}-\nu_{2}}(z))}{\ellmahl^{r_{1}}}+\frac{\val(\theta_{\nu_{2}-\nu_{j}}(z)h_{3}(\lambda,z))}{\ellmahl^{r_{1}+r_{2}}}\\
 =\cdots \\
 =\frac{\val(\theta_{\nu_{1}-\nu_{2}}(z))}{\ellmahl^{r_{1}}}
 +
 \frac{\val(\theta_{\nu_{2}-\nu_{3}}(z))}{\ellmahl^{r_{1}+r_{2}}}
 +\cdots + 
  \frac{\val(\theta_{\nu_{j-1}-\nu_{j}}(z))}{\ellmahl^{r_{1}+r_{2}+\cdots+r_{j-1}}}\\
  =  \frac{1}{\ellmahl-1}\left(\frac{{\nu_{1}-\nu_{2}}}{\ellmahl^{r_{1}}}
 +
 \frac{{\nu_{2}-\nu_{3}}}{\ellmahl^{r_{1}+r_{2}}}
 +\cdots + 
  \frac{{\nu_{j-1}-\nu_{j}}}{\ellmahl^{r_{1}+r_{2}+\cdots+r_{j-1}}}\right) \\
  =  \frac{\ellmahl^{r_{1}}(\mu_{1}-\mu_{2})}{\ellmahl^{r_{1}}}
 +
 \frac{\ellmahl^{r_{1}+r_{2}}(\mu_{2}-\mu_{3})}{\ellmahl^{r_{1}+r_{2}}}
 +\cdots + 
  \frac{\ellmahl^{r_{1}+r_{2}+\cdots+r_{j-1}}(\mu_{j-1}-\mu_{j})}{\ellmahl^{r_{1}+r_{2}+\cdots+r_{j-1}}}  \\
  = \mu_{1}-\mu_{j}
\end{multline*}
so 
$$
 \val(\theta_{-\nu_{j}}(z)f_{1}(\lambda,z)) = -\mu_{j}.
$$
This proves (ii).

It follows also from Lemma \ref{lem:sol ordre un inhom bis} that 
\begin{multline*}
 \cld(f_{1}(\lambda,z)) = \lambda^{-r_{1}-\cdots-r_{j-1}} \left(\prod_{i=j+1}^{k}\prod_{l=1}^{r_{i}}(-c_{i,l})^{-1} \right)\alpha\frac{(\lambda-c)^{m_{c,j}}}{\prod_{l=1}^{r_{j}}(\lambda-c_{j,l})}
\\
=\lambda^{-r_{1}-\cdots-r_{j-1}} \left(\prod_{i=j+1}^{k}\prod_{l=1}^{r_{i}}(-c_{i,l})^{-1} \right)\frac{\prod_{i=1}^{k} \prod_{j=1}^{r_{i}} (-c_{i,j})}{\cld a_{0}(z)}\frac{(\lambda-c)^{m_{c,j}}}{\prod_{l=1}^{r_{j}}(\lambda-c_{j,l})}
\\
=\lambda^{-r_{1}-\cdots-r_{j-1}} \frac{\prod_{i=1}^{j} \prod_{j=1}^{r_{i}} (-c_{i,j})}{\cld a_{0}(z)}\frac{(\lambda-c)^{m_{c,j}}}{\prod_{l=1}^{r_{j}}(\lambda-c_{j,l})}
\end{multline*}
(we have used the formula for $\cld a(z)$ given by Proposition \ref{prop fact op} for the second equality) and that the coefficients of $f_{j-1}(\lambda,z),\ldots,f_{1}(\lambda,z)$ have poles order at most $m_{c,j-1},m_{c,j-1}+m_{c,j-2},\ldots,m_{c,j-1}+m_{c,j-2}+\cdots+m_{c,1}$ at $c$ respectively. This prove (iii) and (iv).

\section{Frobenius method : last justifications}\label{sec: frob meth last just}

We use the notations introduced at the very beginning of Section \ref{sec:frob method}. We have seen in Proposition \ref{resol homogene pour frob} that, for any exponent $c$ associated to the slope $\mu_{j}$ of $L$, there exists a unique $g_{c,j}(\lambda,z)\in\Hahnlamb$ such that 
\begin{equation*}
 L(g_{c,j}(\lambda,z)e_{\lambda}) = z^{\val(a_{0}(z))-\frac{\nu_{j}}{\ellmahl-1}}(\lambda-c)^{s_{c,j}+m_{c,j}} e_{\lambda}
\end{equation*}
where 
$$
s_{c,j} = m_{c,1}+\cdots+m_{c,j-1}.
$$
and 
$$
\nu_{j} =(\ellmahl-1) (\ellmahl^{r_{1}+\cdots+r_{j-1}}(\mu_{j}-\mu_{j-1})+\cdots+\ellmahl^{r_{1}}(\mu_{2}-\mu_{1})+\mu_{1})
$$
and that the coefficients of $g_{c,j}(\lambda,z)$ have no pole at $\lambda=c$ and, hence,  $g_{c,j}(\lambda,z)e_{\lambda}$ belongs to 
$\Rspe_{\lambda,c}$. 

We first prove: 
\begin{prop}\label{prop: les ycjm sol}
 For any $m \in \{0,\ldots,m_{c,j}-1\}$,  
$$
y_{c,j,m}=\ev{c}(\partial_{\lambda}^{s_{c,j}+m} (g_{c,j}(\lambda,z)e_{\lambda}))
$$ 
is a solution of $L(y)=0$. 
\end{prop}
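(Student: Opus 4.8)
The plan is to imitate the classical Frobenius argument (Theorem~\ref{theo:theo frob intro}): differentiate the defining identity of $g_{c,j}$ with respect to $\lambda$ the required number of times inside $\Rla$, and \emph{only then} specialize at $\lambda=c$ via $\ev{c}$. Two commutation facts drive everything. First, $L$ commutes with $\partial_{\lambda}$ as operators on $\Rla$: writing $L=\sum_{i} a_{i}(z)\malop{\ellmahl}^{i}$ with $a_{i}(z)\in\Hahn$ independent of $\lambda$, and using $\partial_{\lambda}\circ\malop{\ellmahl}=\malop{\ellmahl}\circ\partial_{\lambda}$ on $\Rla$, the product rule gives $\partial_{\lambda}(a_{i}(z)\malop{\ellmahl}^{i}(x))=a_{i}(z)\malop{\ellmahl}^{i}(\partial_{\lambda}x)$ since $\partial_{\lambda}(a_{i}(z))=0$, whence $\partial_{\lambda}^{N}\circ L=L\circ\partial_{\lambda}^{N}$ for every $N$. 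Second, $\ev{c}$ commutes with $L$: as $\ev{c}$ is a morphism of difference rings fixing each $a_{i}(z)\in\Hahn$, for $x\in\Rspe_{\lambda,c}$ one has $\ev{c}(L(x))=\sum_{i}a_{i}(z)\malop{\ellmahl}^{i}(\ev{c}(x))=L(\ev{c}(x))$.

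First I would apply $\partial_{\lambda}^{s_{c,j}+m}$ to the identity
\[
L(g_{c,j}(\lambda,z)e_{\lambda}) = z^{\val(a_{0}(z))-\frac{\nu_{j}}{\ellmahl-1}}(\lambda-c)^{s_{c,j}+m_{c,j}} e_{\lambda}
\]
of Proposition~\ref{resol homogene pour frob}. By the first commutation this yields
\[
L\bigl(\partial_{\lambda}^{s_{c,j}+m}(g_{c,j}(\lambda,z)e_{\lambda})\bigr) = \partial_{\lambda}^{s_{c,j}+m}\Bigl(z^{\val(a_{0}(z))-\frac{\nu_{j}}{\ellmahl-1}}(\lambda-c)^{s_{c,j}+m_{c,j}} e_{\lambda}\Bigr).
\]
Because the coefficients of $g_{c,j}(\lambda,z)$ have no pole at $\lambda=c$, the element $g_{c,j}(\lambda,z)e_{\lambda}$ lies in $\Rspe_{\lambda,c}$, which is stable under $\partial_{\lambda}$; hence $\partial_{\lambda}^{s_{c,j}+m}(g_{c,j}(\lambda,z)e_{\lambda})\in\Rspe_{\lambda,c}$ and $\ev{c}$ may be applied. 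Doing so and invoking the second commutation turns the left-hand side into $L(y_{c,j,m})$, exactly the quantity we want to show vanishes.

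It thus remains to check that the right-hand side is killed by $\ev{c}$. Here I would expand by the Leibniz rule, pulling out the $\lambda$-free factor $w(z)=z^{\val(a_{0}(z))-\frac{\nu_{j}}{\ellmahl-1}}$:
\[
\partial_{\lambda}^{s_{c,j}+m}\bigl(w(z)(\lambda-c)^{s_{c,j}+m_{c,j}}e_{\lambda}\bigr)
= w(z)\sum_{p=0}^{s_{c,j}+m}\binom{s_{c,j}+m}{p}\,\partial_{\lambda}^{p}\bigl((\lambda-c)^{s_{c,j}+m_{c,j}}\bigr)\,\partial_{\lambda}^{s_{c,j}+m-p}(e_{\lambda}).
\]
Applying $\ev{c}$, each summand factors as $w(z)$ times $\binom{s_{c,j}+m}{p}$ times $\bigl[\partial_{\lambda}^{p}((\lambda-c)^{s_{c,j}+m_{c,j}})\bigr]_{\lambda=c}$ times $\ev{c}(\partial_{\lambda}^{s_{c,j}+m-p}(e_{\lambda}))$, the last factor being a multiple of some $\logm_{c,q}$. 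Since $m\le m_{c,j}-1$, every index satisfies $p\le s_{c,j}+m<s_{c,j}+m_{c,j}$, so $\partial_{\lambda}^{p}((\lambda-c)^{s_{c,j}+m_{c,j}})$ still carries a positive power of $(\lambda-c)$ and vanishes at $\lambda=c$. Every term therefore dies, giving $L(y_{c,j,m})=0$.

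The argument has no serious obstacle; the one point demanding care is that $\ev{c}$ is only a morphism of \emph{difference} rings and does \emph{not} commute with $\partial_{\lambda}$, so the order of operations is essential: one must differentiate fully inside $\Rla$ before specializing. The whole force of the exponent $s_{c,j}+m_{c,j}$ chosen in the right-hand side of Proposition~\ref{resol homogene pour frob} is precisely that it strictly exceeds $s_{c,j}+m$ for every admissible $m$, which guarantees the vanishing above.
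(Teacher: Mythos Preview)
Your proof is correct and follows essentially the same approach as the paper: commute $L$ past $\partial_{\lambda}$ and $\ev{c}$, substitute the defining identity of $g_{c,j}$, and use $s_{c,j}+m<s_{c,j}+m_{c,j}$ to see that the right-hand side dies upon specialization. You make explicit the Leibniz expansion and the two commutation facts that the paper only invokes in passing, but the substance is identical.
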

 
\begin{proof}
Using the fact that $\ev{c}$ and $\partial_{\lambda}$ are $\Hahn$-linear and commute with $\malop{\ellmahl}$, we see that, for any $m \in \{0,\ldots,m_{c,j}-1\}$, 
\begin{multline*}
 L(y_{c,j,m})\\ 
 =L(\ev{c}(\partial_{\lambda}^{s_{c,j}+m} (g_{c,j}(\lambda,z)e_{\lambda})))
 =\ev{c} \partial_{\lambda}^{s_{c,j}+m} (L (g_{c,j}(\lambda,z)e_{\lambda}))\\ 
 =z^{\val(a_{0}(z))-\frac{\nu_{j}}{\ellmahl-1}}\ev{c}\partial_{\lambda}^{s_{c,j}+m} ((\lambda-c)^{s_{c,j}+m_{c,j}} e_{\lambda})). 
\end{multline*}
Since $m \in \{0,\ldots,m_{c,j}-1\}$, we have $s_{c,j}+m<s_{c,j}+m_{c,j}$ and, hence, 
$$
\ev{c}\partial_{\lambda}^{s_{c,j}+m} ((\lambda-c)^{s_{c,j}+m_{c,j}} e_{\lambda}))=0. 
$$
This proves that $y_{c,j,m}$ is a solution of $L$ as claimed. 
\end{proof}

It remains to prove the following result. 

\begin{theo}\label{theo:theo pas du tout intro}
 We have attached to any slope $\mu_{j}$, to any exponent $c$ attached to the slope $\mu_{j}$ and to any $m \in \{0,\ldots,m_{c,j}-1\}$, a solution 
$$
y_{c,j,m}=\ev{c}(\partial_{\lambda}^{s_{c,j}+m} (g_{c,j}(\lambda,z)e_{\lambda}))
$$ 
of $L(y)=0$.   
These $n$ solutions are $\C$-linearly independent. 
\end{theo}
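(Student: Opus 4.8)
The plan is to prove only the linear independence, the fact that each $y_{c,j,m}$ solves $L(y)=0$ being already \mbox{Proposition \ref{prop: les ycjm sol}}. I would expand every solution in the symbols $\logm_{c,i}$ and then dismantle a hypothetical relation by a double triangularity: first in the exponent $c$, then, for fixed $c$, jointly in the slope $\mu_j$ (read off from the $z$-adic valuation) and in the index $m$ (read off from the degree in the $\logm_{c,\bullet}$). Concretely, applying the Leibniz rule to $\ev{c}(\partial_\lambda^{s_{c,j}+m}(g_{c,j}(\lambda,z)e_\lambda))$ and using $\partial_\lambda^i e_\lambda = i!\,\logm_{\lambda,i}$ together with $\ev{c}(\logm_{\lambda,i})=\logm_{c,i}$, I would write
\begin{equation*}
 y_{c,j,m}=\sum_{i=0}^{s_{c,j}+m}\binom{s_{c,j}+m}{i}\,i!\,(\partial_\lambda^{s_{c,j}+m-i}g_{c,j})(c,z)\,\logm_{c,i},
\end{equation*}
which is legitimate because the coefficients of $g_{c,j}(\lambda,z)$, hence of all its $\partial_\lambda$-derivatives, have no pole at $\lambda=c$. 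In particular $y_{c,j,m}\in\bigoplus_i\Hahn\,\logm_{c,i}$, with top index at most $s_{c,j}+m$.

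The decisive input is the behaviour of the leading $z$-coefficient. From \eqref{eq: form cld g}, $\cld g_{c,j}(\lambda,z)$ equals $\lambda^{-r_1-\cdots-r_{j-1}}$ times a nonzero constant times $(\lambda-c)^{s_{c,j}+m_{c,j}}/\prod_{l=1}^{r_j}(\lambda-c_{j,l})$; since exactly $m_{c,j}$ of the $c_{j,l}$ equal $c$ and $c\neq 0$, this rational function of $\lambda$ vanishes to order \emph{exactly} $s_{c,j}$ at $\lambda=c$. Reading off the coefficient of $z^{-\mu_j}$ (recall $\val g_{c,j}=-\mu_j$) in the expansion above, I find that the $z^{-\mu_j}$-part of $y_{c,j,m}$ is $\sum_{i=0}^{m}\delta_i\,\logm_{c,i}$ with $\delta_m\neq 0$: at its lowest $z$-valuation $-\mu_j$, the solution $y_{c,j,m}$ carries $\logm$-degree exactly $m$ with a nonzero leading coefficient $\delta_m$.

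With this in hand I would conclude as follows. Working in the standard realization $\Rspe=\Hahn[(X_c)_c,Y]$, in which the $\logm_{c,i}$ are $\Hahn$-linearly independent, any relation $\sum_{c,j,m}\kappa_{c,j,m}\,y_{c,j,m}=0$ with $\kappa_{c,j,m}\in\C$ splits into one relation per exponent $c$, since $y_{c,j,m}$ involves only the symbols $\logm_{c,\bullet}$. Fixing $c$ and assuming some coefficient is nonzero, I choose $j^*$ the largest slope and then $m^*$ the largest index with $\kappa_{c,j^*,m^*}\neq 0$, and extract the coefficient of $z^{-\mu_{j^*}}\logm_{c,m^*}$ from the relation. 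The terms with $j<j^*$ drop out because $\val y_{c,j,m}\ge-\mu_j>-\mu_{j^*}$; the terms with $j>j^*$ drop out by maximality of $j^*$; and among the terms with $j=j^*$ only $m=m^*$ survives, by the valuation–degree analysis above, contributing $\kappa_{c,j^*,m^*}\,\delta_{m^*}\neq 0$. This forces $\kappa_{c,j^*,m^*}=0$, a contradiction, so all the $\kappa_{c,j,m}$ vanish.

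Finally, the count $\sum_{c,j}m_{c,j}=\sum_j r_j=n$ confirms that this is a family of $n$ $\C$-linearly independent solutions. I expect the main obstacle to be exactly the nonvanishing of the leading coefficient $\delta_m$, that is, the assertion that $\cld g_{c,j}$ vanishes to order \emph{exactly} $s_{c,j}$ (and no more) at $\lambda=c$; this is what makes the explicit formula \eqref{eq: form cld g} indispensable, and it is also what lets me work only with the $z$-leading part and thereby avoid the subtler claim that $g_{c,j}(c,z)\neq 0$ in $\Hahn$.
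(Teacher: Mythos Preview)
Your argument is correct and follows essentially the same strategy as the paper: expand each $y_{c,j,m}$ in the $\logm_{c,i}$ via Leibniz, use \eqref{eq: form cld g} to see that at valuation $-\mu_j$ the expansion has $\logm$-degree exactly $m$, and then peel off coefficients by descending in the slope and then in $m$; the paper's Lemma on the $h_{c,j,m,u}$ (equation \eqref{eq:val des hcjlu}) records precisely your observation that $\delta_m\neq 0$ while $\delta_i=0$ for $i>m$. The one place you diverge is in invoking the standard realization $\Rspe=\Hahn[(X_c)_c,Y]$ to get the $\Hahn$-linear independence of the $\logm_{c,i}$, whereas the paper proves this for \emph{any} $\Rspe$ satisfying the axioms of Section~\ref{sec : def Rspe} (Lemma~\ref{les logm sont indep}); your shortcut is fine for that model, but the theorem as stated is for general $\Rspe$, so strictly speaking you should either cite that lemma or note that the argument transfers.
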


The proof is given in the following Section.   

\subsection{Proof of Theorem \ref{theo:theo pas du tout intro}}\label{subsec: preuve theo principal}

Using the Leibniz rule, we see that    
\begin{eqnarray*}
 y_{c,j,m} & \in & \operatorname{Span}_{\Hahn} (\ev{c}(\partial_{\lambda}^{0} (e_{\lambda})),\ev{c}(\partial_{\lambda}^{1} (e_{\lambda})),\ldots \\
&& \ \ \ \ \ \ \ \ \ \ \ \ \ \ \ \ \ \ \ \ \ \ \ \ \ \ \ \ \ \ \ \ \ \ \ \ \ \ \ \ \ldots ,\ev{c}(\partial_{\lambda}^{s_{c,j}+m_{c,j}-1} (e_{\lambda}))) \\ 
 &=& \operatorname{Span}_{\Hahn} (e_{c},\logm_{c,1},\ldots,\logm_{c,s_{c,j}+m_{c,j}-1}) \subset \operatorname{Span}_{\Hahn} (\logm_{c,j})_{j \geq 0}. 
\end{eqnarray*}
But, Lemma \ref{les logm sont indep} proven below guarantees that the family $(\logm_{c,j})_{c \in \C^{\times}, j \geq 0}$ is $\Hahn$-linearly independent. So, in order to prove Theorem \ref{theo:theo pas du tout intro}, it is sufficient to prove that, for any exponent $c$ of $L$, the family  $(y_{c,j,m})_{j \in \{1,\ldots,k\}, m \in \{0,\ldots,m_{c,j}-1\}}$ is $\C$-linearly independent. Let us prove this. Fix such a $c$ and consider a family $(a_{c,j,m})_{j \in \{1,\ldots,k\}, m \in \{0,\ldots,m_{c,j}-1\}}$ of complex numbers such that 
\begin{equation}\label{eq: lin comb ycjm}
 \sum_{j \in \{1,\ldots,k\}, m \in \{0,\ldots,m_{c,j}-1\}} a_{c,j,m} y_{c,j,m} =0. 
\end{equation}
We have to prove that the $a_{c,j,m}$ are all $0$. In this respect, we will use the following result. 

\begin{lem}
We have a decomposition of the form
\begin{multline}\label{eq:decomp ycjl}
 y_{c,j,m} = \sum_{u=0}^{s_{c,j}+m} h_{c,j,m,u}(z)\logm_{c,u} \\ 
 = \sum_{u=0}^{m-1} h_{c,j,m,u}(z) \logm_{c,u} + 
 h_{c,j,m,m} \logm_{c,m}
 +
 \sum_{u=m+1}^{s_{c,j}+m} h_{c,j,m,u}(z)\logm_{c,u} 
\end{multline}
for some $h_{c,j,m,u}(z) \in \Hahn$ such that 
\begin{equation}\label{eq:val des hcjlu}
\begin{cases}
 \val( h_{c,j,m,u}(z)) \geq -\mu_{j} \text{ for } u \in \{0,\ldots,m-1\},\\ 
  \val( h_{c,j,m,m}(z))=-\mu_{j},\\ 
 \val( h_{c,j,m,u}(z)) > -\mu_{j} \text{ for } u \in \{m+1\ldots,s_{c,j}+m\}. 
\end{cases}
\end{equation}
\end{lem}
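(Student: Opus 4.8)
The plan is to expand $\partial_{\lambda}^{s_{c,j}+m}(g_{c,j}(\lambda,z)e_{\lambda})$ by the Leibniz rule and then apply $\ev{c}$. Setting $N=s_{c,j}+m$ and recalling that $\partial_{\lambda}^{u}(e_{\lambda})=u!\,\logm_{\lambda,u}$ together with $\ev{c}(\logm_{\lambda,u})=\logm_{c,u}$, one obtains
$$y_{c,j,m} = \sum_{u=0}^{N} \binom{N}{u} u!\,\bigl[\partial_{\lambda}^{N-u}g_{c,j}(\lambda,z)\bigr]_{\lambda=c}\,\logm_{c,u}.$$
This expression is meaningful because Proposition \ref{resol homogene pour frob} guarantees that the coefficients of $g_{c,j}(\lambda,z)$ have no pole at $\lambda=c$, so that $g_{c,j}(\lambda,z)e_{\lambda}\in\Rspe_{\lambda,c}$ and all the specializations are defined. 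It identifies the coefficients as $h_{c,j,m,u}(z)=\binom{N}{u}u!\,[\partial_{\lambda}^{N-u}g_{c,j}(\lambda,z)]_{\lambda=c}$, and since the scalar $\binom{N}{u}u!$ is nonzero it is irrelevant for the valuation computations that follow.

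For the coarse bound I would note that $\partial_{\lambda}$ and $\ev{c}$ act coefficientwise on the $z$-expansion: they can only annihilate coefficients, never produce new ones at smaller exponents. Since $\val(g_{c,j}(\lambda,z))=-\mu_{j}$ by \eqref{eq: form val g}, the support of $g_{c,j}$, and hence that of each $h_{c,j,m,u}$, is contained in $\Q_{\geq -\mu_{j}}$. This yields at once $\val(h_{c,j,m,u}(z))\geq -\mu_{j}$ for every $u$, which already covers the required inequality for $u\in\{0,\ldots,m-1\}$.

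The refined statements for $u=m$ and $u>m$ rest on examining the single coefficient of $z^{-\mu_{j}}$ in $h_{c,j,m,u}$, which equals $\binom{N}{u}u!\,[\partial_{\lambda}^{N-u}\cld g_{c,j}(\lambda,z)]_{\lambda=c}$. The crux of the argument is to determine the exact order of vanishing of $\cld g_{c,j}(\lambda,z)$ at $\lambda=c$. Using the explicit formula \eqref{eq: form cld g}, in which $\cld g_{c,j}(\lambda,z)$ is a nonzero constant times $\lambda^{-r_{1}-\cdots-r_{j-1}}(\lambda-c)^{s_{c,j}+m_{c,j}}/\prod_{l=1}^{r_{j}}(\lambda-c_{j,l})$, I would observe that among the exponents $c_{j,1},\ldots,c_{j,r_{j}}$ attached to $\mu_{j}$ exactly $m_{c,j}$ equal $c$, so that the denominator contributes precisely a factor $(\lambda-c)^{m_{c,j}}$. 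As $\lambda^{-r_{1}-\cdots-r_{j-1}}$ and the remaining factors of the denominator are finite and nonzero at $\lambda=c$ (recall $c\neq 0$), the rational function $\cld g_{c,j}(\lambda,z)$ vanishes to order exactly $s_{c,j}+m_{c,j}-m_{c,j}=s_{c,j}$ at $\lambda=c$.

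It then suffices to invoke the elementary fact that a function vanishing to order exactly $s_{c,j}$ at $c$ has all derivatives of order $<s_{c,j}$ equal to zero at $c$ and a nonzero derivative of order $s_{c,j}$. Taking $N-u=s_{c,j}+m-u$ derivatives, the coefficient of $z^{-\mu_{j}}$ in $h_{c,j,m,u}$ vanishes when $N-u<s_{c,j}$, i.e. $u>m$, and is nonzero when $N-u=s_{c,j}$, i.e. $u=m$. Combined with the support bound of the second paragraph, this gives $\val(h_{c,j,m,m}(z))=-\mu_{j}$ and $\val(h_{c,j,m,u}(z))>-\mu_{j}$ for $u\in\{m+1,\ldots,s_{c,j}+m\}$, which completes the proof. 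The only genuine subtlety is the bookkeeping that matches the multiplicity of the factor $(\lambda-c)$ in the denominator of \eqref{eq: form cld g} with $m_{c,j}$; once the order of vanishing $s_{c,j}$ is pinned down, the rest is routine.
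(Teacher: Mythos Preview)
Your proof is correct and follows essentially the same approach as the paper's: both expand by Leibniz, identify $h_{c,j,m,u}(z)=u!\binom{s_{c,j}+m}{u}\,\ev{c}\partial_{\lambda}^{s_{c,j}+m-u}(g_{c,j}(\lambda,z))$, use \eqref{eq: form val g} for the coarse valuation bound, and then read off from \eqref{eq: form cld g} that $\cld g_{c,j}(\lambda,z)$ has $(\lambda-c)$-adic valuation exactly $s_{c,j}$ to settle the cases $u=m$ and $u>m$. Your explicit accounting of the cancellation between $(\lambda-c)^{s_{c,j}+m_{c,j}}$ in the numerator and the $m_{c,j}$ factors of $(\lambda-c)$ in $\prod_{l}(\lambda-c_{j,l})$ is exactly what the paper leaves implicit when it asserts that valuation.
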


\begin{proof}
 Using Leibniz rule, we obtain 
$$ 
{\partial_{\lambda}^{s_{c,j}+m} (g_{c,j}(\lambda,z)e_{\lambda})}
 =
\sum_{u=0}^{s_{c,j}+m} u!\binom{s_{c,j}+m}{u} {\partial_{\lambda}^{s_{c,j}+m-u} (g_{c,j}(\lambda,z)})\logm_{\lambda,u}
$$
and, hence, 
$$
y_{c,j,m} =\ev{c} {\partial_{\lambda}^{s_{c,j}+m} (g_{c,j}(\lambda,z)e_{\lambda})}
 = \sum_{u=0}^{s_{c,j}+m} h_{c,j,m,u} \logm_{c,u}.
$$
with 
$$
h_{c,j,m,u}(z)=u! \binom{s_{c,j}+m}{u} \ev{c}\partial_{\lambda}^{s_{c,j}+m-u} (g_{c,j}(\lambda,z)) \in \Hahn. 
$$

Accordingly to \eqref{eq: form val g}, we have  
$$
\val(g_{c,j}(\lambda,z)) = -\mu_{j}
$$
and, hence, 
$$
\val(h_{c,j,m,u}(z)) \geq \val(g_{c,j}(\lambda,z)) = -\mu_{j}. 
$$
Moreover, the latter inequality is an equality if and only if 
\begin{equation}\label{eq: equal evdelcld}
 \ev{c}\partial_{\lambda}^{s_{c,j}+m-u} (\cld g_{c,j}(\lambda,z)) \neq 0
\end{equation}
and it is a strict inequality if and only if 
\begin{equation}\label{eq: strict evdelcld}
 \ev{c}\partial_{\lambda}^{s_{c,j}+m-u} (\cld  g_{c,j}(\lambda,z)) = 0. 
\end{equation}
But, using \eqref{eq: form cld g}, we see that $\cld  g_{c,j}(\lambda,z)$ is a rational function in $\lambda$ with $(\lambda-c)$-adic valuation $s_{c,j}$. So, \eqref{eq: equal evdelcld} holds true if $s_{c,j}+m-u=s_{c,j}$ and that \eqref{eq: strict evdelcld} holds true if $s_{c,j}+m-u < s_{c,j}$. Whence the result. 
\end{proof}

Inserting \eqref{eq:decomp ycjl} in \eqref{eq: lin comb ycjm}, we get 
\begin{equation}\label{eq: lin comb hcjm}
   \sum_{j \in \{1,\ldots,k\}, m \in \{0,\ldots,m_{c,j}-1\}} \sum_{u=0}^{s_{c,j}+m} a_{c,j,m} h_{c,j,m,u}(z) \logm_{c,u}. 
  \end{equation}
Using the fact that the family $(\logm_{c,j})_{j \geq 0}$ is $\Hahn$-linearly independent (see Lemma \ref{les logm sont indep}  below), we get, for all $u \in \{0,\ldots,s_{c,k+1}-1\}$, 
 \begin{equation}\label{comb lin h}
 \sum_{\substack{j \in \{1,\ldots,k\}, m \in \{0,\ldots,m_{c,j}-1\} \\ \text{such that } u \in \{0,\ldots,s_{c,j}+m\}}} a_{c,j,m} h_{c,j,m,u}(z)=0. 
\end{equation}
But, using \eqref{eq:val des hcjlu}, we see that 
\begin{itemize}
 \item for $u=m_{c,k}-1$, all the terms in \eqref{comb lin h}, with the possible exception of  the term $a_{c,k,m_{c,k}-1}h_{c,k,m_{c,k}-1,m_{c,k}-1}(z)$ corresponding to $j=k$ and $m=m_{c,k}-1$, have $z$-adic valuation $>-\mu_{k}$; 
 \item  $\val(h_{c,k,m_{c,k}-1,m_{c,k}-1}(z))=-\mu_{k}$. 
 \end{itemize}
It follows that $a_{c,k,m_{k}-1}=0$.

Similarly, using \eqref{eq:val des hcjlu}, we see that  
\begin{itemize}
 \item for $u=m_{c,k}-2$, all the nonzero terms in \eqref{comb lin h}, with the possible exception of  the term $a_{c,k,m_{c,k}-2}h_{c,k,m_{c,k}-2,m_{c,k}-2}(z)$ corresponding to $j=k$ and $m=m_{c,k}-2$ have $z$-adic valuation $>-\mu_{k}$; 
 \item $\val(h_{c,k,m_{k}-2,m_{k}-2}(z))=-\mu_{k}$.
  \end{itemize}
It follows that $a_{c,k,m_{k}-2}=0$.

Iterating this procedure, we find $a_{c,k,m}=0$ for $m \in \{0,\ldots,m_{c,k}-1\}$. 

An obvious iteration of what precedes yields to $a_{c,j,m}  =0$ for all $j \in \{1,\ldots,k\}$ and all $m \in \{0,\ldots,m_{c,j}-1\}$, as expected. \\

In order to complete the proof, it remains to state and prove the following two lemmas used above. 

\begin{lem}\label{les logm sont indep}
The family $(\logm_{c,j})_{c \in \C^{\times}, j \geq 0}$ is $\Hahn$-linearly independent. 
\end{lem}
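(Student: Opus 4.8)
The plan is to show directly that no nontrivial finite $\Hahn$-linear relation $\sum_{(c,j)} f_{c,j}\logm_{c,j}=0$ (a sum over finitely many pairs, with every $f_{c,j}\in\Hahn\setminus\{0\}$) can hold in $\Rspe$. Everything rests on two elementary features of the difference field $(\Hahn,\malop{\ellmahl})$. First, since $\malop{\ellmahl}$ multiplies $z$-adic valuations by $\ellmahl$, any $f\in\Hahn\setminus\{0\}$ with $\malop{\ellmahl}(f)=\kappa f$ for some $\kappa\in\C^{\times}$ satisfies $\ellmahl\,\val(f)=\val(f)$, hence $\val(f)=0$, and comparing leading coefficients then forces $\kappa=1$. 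Second, for every $g\in\Hahn$ the element $\malop{\ellmahl}(g)-g$ has vanishing coefficient at $z^{0}$, because the only exponent $\gamma$ with $\ellmahl\gamma=0$ is $\gamma=0$. These two facts are the engine of the whole argument.

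I would argue by induction on the maximal index $d$ occurring in a putative relation. For the base case $d=0$ the relation reads $R=\sum_{c}f_{c}e_{c}=0$; taking one with the fewest terms, normalizing some $f_{c_{0}}=1$, and forming $\malop{\ellmahl}(R)-c_{0}R$ (which is again $0$ since $R=0$) kills the $e_{c_{0}}$-term and yields a relation with strictly fewer terms. By minimality its coefficients $c\,\malop{\ellmahl}(f_{c})-c_{0}f_{c}$ all vanish, so $\malop{\ellmahl}(f_{c})=(c_{0}/c)f_{c}$, and the first fact above gives $c=c_{0}$; thus only $c_{0}$ survives and $R=e_{c_{0}}\neq0$, a contradiction (this is the only place the hypothesis $\logm_{c,i}\neq0$ is used).

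For the inductive step I assume the statement for relations of maximal index $\le d-1$ and take a nontrivial relation $R=\sum f_{c,j}\logm_{c,j}=0$ of maximal index exactly $d\geq1$ having the fewest distinct exponents $c$ occurring at level $d$; write $C_{d}$ for this set of leading exponents, pick $c_{0}\in C_{d}$ and normalize $f_{c_{0},d}=1$. The key move is again $R'=\malop{\ellmahl}(R)-c_{0}R=0$. Using $\malop{\ellmahl}(\logm_{c,j})=c\logm_{c,j}+\logm_{c,j-1}$ one checks that the level-$d$ part of $R'$ is $\sum_{c\in C_{d}\setminus\{c_{0}\}}\bigl(c\,\malop{\ellmahl}(f_{c,d})-c_{0}f_{c,d}\bigr)\logm_{c,d}$, so the $c_{0}$-term at level $d$ has been eliminated. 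If some level-$d$ coefficient of $R'$ is nonzero, then $R'$ is a nontrivial relation of maximal index $d$ with strictly fewer leading exponents, contradicting minimality. Otherwise $R'$ has maximal index $\le d-1$, so by the inductive hypothesis all of its coefficients vanish; but the coefficient of $\logm_{c_{0},d-1}$ in $R'$ equals $c_{0}\bigl(\malop{\ellmahl}(f_{c_{0},d-1})-f_{c_{0},d-1}\bigr)+1$, the $+1$ arising from $\malop{\ellmahl}(f_{c_{0},d})=1$ through the shift, and its coefficient at $z^{0}$ is $1\neq0$ by the second fact above. Hence it cannot vanish, and either way we reach a contradiction, completing the induction.

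The main obstacle is conceptual rather than computational: because $\malop{\ellmahl}$ is only semilinear over $\Hahn$ (it moves the coefficients $f_{c,j}$ as well as the symbols), one cannot simply regard $\malop{\ellmahl}$ as a $\Hahn$-linear operator with the $\logm_{c,j}$ as a Jordan chain and invoke linear algebra. The combination $\malop{\ellmahl}(R)-c_{0}R$ is designed precisely to neutralize the constant part of this semilinearity, and the proof hinges on the two structural facts recorded at the outset, most crucially that $\malop{\ellmahl}(g)-g$ never has a nonzero constant term, which is exactly what prevents the Jordan chain from closing up. I also note that no integrality assumption on $\Rspe$ is needed anywhere.
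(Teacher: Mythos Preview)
Your argument is correct and rests on the same two ingredients as the paper's proof (Lemma~\ref{lem:pas de car et log dans hahn}): the operation $R\mapsto \malop{\ellmahl}(R)-c_{0}R$ to peel off a top term, together with the facts that $\malop{\ellmahl}(f)=\kappa f$ forces $\kappa=1$ for $f\neq 0$ and that $\malop{\ellmahl}(g)-g$ has vanishing constant term. The only difference is organizational: you run an outer induction on the maximal index $d$ and, inside, minimize the number of level-$d$ exponents, whereas the paper first minimizes the set $C$ of exponents and then, in the case $j\geq 1$, iteratively reduces $j$ down to $0$; both descents end in the same contradictions, and your packaging is arguably a bit cleaner.
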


\begin{proof}
Assume on the contrary that the family $(\logm_{c,j})_{c \in \C^{\times}, j \geq 0}$ is $\Hahn$-linearly dependent.  
Consider a $\Hahn$-linearly dependent family $(\logm_{c,j})_{c \in C,j \geq 0}$ with $C \subset \C^{\times}$ finite (nonempty) of minimal cardinality.  There exist $c\in C$ and $j \geq 0$ such that $\logm_{c,j}$ is a $\Hahn$-linear combinaison of the $\logm_{d,k}$ with $d \in C$ such that $d \neq c$ or ($d=c$ and $k<j$). 

Let us first assume that $j=0$. So, we have 
\begin{equation}\label{eq ec lin comb}
 e_{c} =  \sum_{d\in C \setminus \{c\},k \geq 0}^{} \alpha_{d,k}(z)\logm_{d,k}
\end{equation}
for some $\alpha_{d,k}(z) \in \Hahn$. 
Applying $\malop{\ellmahl}$ to this equality, we obtain: 
\begin{equation}\label{eq ec lin comb phi}
c e_{c}
= \sum_{d\in C \setminus \{c\},k \geq 0}^{} \malop{\ellmahl}(\alpha_{d,k}(z))(d\logm_{d,k}+\logm_{d,k-1}).
\end{equation}
Considering the linear combinaison $\eqref{eq ec lin comb phi}-c\eqref{eq ec lin comb}$, we find 
\begin{multline}\label{lin com eq ec}
 0 = 
 \sum_{d\in C \setminus \{c\},k \geq 0}^{} (d \malop{\ellmahl}(\alpha_{d,k}(z))- c \alpha_{d,k}(z))\logm_{d,k}
 +\sum_{d\in C \setminus \{c\},k \geq 0}^{}\malop{\ellmahl}(\alpha_{d,k}(z))\logm_{d,k-1}\\
 =
  \sum_{d\in C \setminus \{c\},k \geq 0}^{} (d \malop{\ellmahl}(\alpha_{d,k}(z))- c \alpha_{d,k}(z)+\malop{\ellmahl}(\alpha_{d,k+1}(z)))\logm_{d,k}.
\end{multline}
We claim that this $\Hahn$-linear relation is nontrivial. Indeed, assume at the contrary that, for all $d \in C \setminus \{c\}$ and all $k \geq 0$, we have 
$$
d \malop{\ellmahl}(\alpha_{d,k}(z))- c \alpha_{d,k}(z)+\malop{\ellmahl}(\alpha_{d,k+1}(z))=0. 
$$
For $k$ large enough, we have $\alpha_{d,k+1}(z)=0$. But, if $\alpha_{d,k+1}(z)=0$ then it follows from Lemma \ref{lem:pas de car et log dans hahn} below that $\alpha_{d,k}(z)=0$. Iterating this, we find that all the $\alpha_{d,k}(z)$ with $d \in C \setminus \{c\}$ and $k \geq 0$ are zero, whence a contradiction.  So, the linear combinaison \eqref{lin com eq ec} is non trivial; this contradicts the minimality of $C$.

We now assume that $j \geq 1$. We have  
\begin{equation}\label{eq:lcj}
 \logm_{c,j} = \sum_{k=0}^{j-1} \alpha_{c,k}(z) \logm_{c,k}
+ \sum_{d\neq c,k \geq 0}^{} \alpha_{d,k}(z)\logm_{d,k}
\end{equation}
for some $\alpha_{d,k}(z) \in \Hahn$. 
Applying $\malop{\ellmahl}$ to this equality, we find: 
\begin{multline}\label{eq:phi lcj}
c\logm_{c,j}  +  \logm_{c,j-1}
=
 \sum_{k=0}^{j-1} \malop{\ellmahl}(\alpha_{c,k}(z) )(c\logm_{c,k}+\logm_{c,k-1})\\
 + \sum_{d\neq c,k \geq 0}^{} \malop{\ellmahl}(\alpha_{d,k}(z))(d\logm_{d,k}+\logm_{d,k-1}).
\end{multline}
Considering the linear combination \eqref{eq:phi lcj}$-c$\eqref{eq:lcj}, we get 
\begin{multline}
  \logm_{c,j-1} = 
  \sum_{k=0}^{j-1} c (\malop{\ellmahl}(\alpha_{c,k}(z) ) - \alpha_{c,k}(z)) \logm_{c,k}+
  \sum_{k=0}^{j-1} \malop{\ellmahl}(\alpha_{c,k}(z) )\logm_{c,k-1}\\
 + \sum_{d\neq c,k\geq 0}^{} (d\malop{\ellmahl}(\alpha_{d,k}(z))-c\alpha_{d,k}(z))\logm_{d,k}
 +\sum_{d\neq c,k\geq 0}^{} \malop{\ellmahl}(\alpha_{d,k}(z)) \logm_{d,k-1}.
\end{multline}
The above equality can be rewritten as 
\begin{multline}
(1-c(\malop{\ellmahl}(\alpha_{c,j-1}(z) ) - \alpha_{c,j-1}(z))) \logm_{c,j-1}\\
=
  \sum_{k=0}^{j-2} c (\malop{\ellmahl}(\alpha_{c,k}(z) ) - \alpha_{c,k}(z)) \logm_{c,k}+
  \sum_{k=0}^{j-1} \malop{\ellmahl}(\alpha_{c,k} (z))\logm_{c,k-1}\\
 + \sum_{d\neq c,k\geq 0}^{} (d\malop{\ellmahl}(\alpha_{d,k}(z))-c\alpha_{d,k}(z))\logm_{d,k}
 +\sum_{d\neq c,k\geq 0}^{} \malop{\ellmahl}(\alpha_{d,k}(z)) \logm_{d,k-1}.
\end{multline}
But $1-c(\malop{\ellmahl}(\alpha_{c,j-1}(z) ) - \alpha_{c,j-1}(z)) \neq 0$ (follows from Lemma \ref{lem:pas de car et log dans hahn} below), so we obtain that $\logm_{c,j-1}$ is a $\Hahn$-linear combinaison of the $\logm_{d,k}$ with $d \in C$ such that $d \neq c$ or ($d=c$ and $k<j-1$). Iterating this, we get that $e_{c,j}$ is a $\Hahn$-linear combinaison of the $\logm_{d,k}$ with $d \in C \setminus \{c\}$ and, hence, we are reduced to the first case considered at the beginning of this proof.
\end{proof}

\begin{lem}\label{lem:pas de car et log dans hahn}
Let $R$ be a ring. If  $g(z) \in \Hahn_{R}$ has a nonzero constant term, then the equation 
$$
\malop{\ellmahl}(f(z))-f(z)=g(z)
$$ 
has no solution $f(z) \in \Hahn_{R}$. 

Assume that $R$ is an integral domain. If $c,d$ are distinct nonzero elements of $R$, then the equation 
$$
c \malop{\ellmahl}(f(z))-df(z)=0
$$ 
has no nonzero solution $f(z) \in \Hahn_{R}$. 
\end{lem}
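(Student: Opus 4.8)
The plan is to translate both functional equations into coefficient-wise relations and then exploit the well-orderedness of the supports. The only computational fact I need is that $\malop{\ellmahl}$ sends $z^{\gamma}$ to $z^{\ellmahl\gamma}$, so that for $f(z)=\sum_{\gamma\in\Q}f_{\gamma}z^{\gamma}\in\Hahn_{R}$ the coefficient of $z^{\gamma}$ in $\malop{\ellmahl}(f(z))$ equals $f_{\gamma/\ellmahl}$ (note $\gamma/\ellmahl\in\Q$). I would record this at the outset and then treat the two assertions separately.

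For the first assertion I would simply read off the constant coefficient. Since $\malop{\ellmahl}$ fixes $z^{0}$, the coefficient of $z^{0}$ in $\malop{\ellmahl}(f(z))-f(z)$ equals $f_{0}-f_{0}=0$ for every $f(z)\in\Hahn_{R}$. Hence if $\malop{\ellmahl}(f(z))-f(z)=g(z)$, then $g(z)$ must have zero constant term, contradicting the hypothesis. No integral domain assumption is needed, and this step is immediate.

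For the second assertion I would argue by contradiction. Suppose $f(z)\neq 0$ satisfies $c\malop{\ellmahl}(f(z))-df(z)=0$. Comparing coefficients of $z^{\gamma}$ yields $cf_{\gamma/\ellmahl}=df_{\gamma}$ for all $\gamma\in\Q$. Since $\supp(f(z))$ is nonempty and well-ordered, it has a least element $\gamma_{0}=\val(f(z))$ with $f_{\gamma_{0}}\neq 0$, and every $\gamma<\gamma_{0}$ has $f_{\gamma}=0$. I would then split into three cases according to the sign of $\gamma_{0}$. If $\gamma_{0}>0$, then $\gamma_{0}/\ellmahl<\gamma_{0}$ (here $\ellmahl\geq 2$), so the relation at $\gamma=\gamma_{0}$ reads $df_{\gamma_{0}}=cf_{\gamma_{0}/\ellmahl}=0$, forcing $f_{\gamma_{0}}=0$ since $d\neq 0$ and $R$ is a domain. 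If $\gamma_{0}<0$, then $\ellmahl\gamma_{0}<\gamma_{0}$, so the relation at $\gamma=\ellmahl\gamma_{0}$ reads $cf_{\gamma_{0}}=df_{\ellmahl\gamma_{0}}=0$, forcing $f_{\gamma_{0}}=0$ since $c\neq 0$. If $\gamma_{0}=0$, the relation at $\gamma=0$ gives $(c-d)f_{0}=0$, forcing $f_{0}=0$ since $c\neq d$. Each case contradicts $f_{\gamma_{0}}\neq 0$, so $f(z)=0$.

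I do not expect a genuine obstacle: the argument is structural rather than computational. The one point deserving care is that the three sign-cases invoke three different hypotheses, namely $d\neq 0$, $c\neq 0$, and $c\neq d$ respectively; in particular the distinctness $c\neq d$ is exactly what is required to dispose of the valuation-zero case, which is the only case where the leading-coefficient argument used in the other two cases does not by itself produce a contradiction.
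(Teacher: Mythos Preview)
Your proof is correct. The first assertion is handled exactly as in the paper, by observing that the constant coefficient of $\malop{\ellmahl}(f)-f$ vanishes identically.

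For the second assertion your argument is a mild variant of the paper's. The paper picks any $\gamma\in\Q^{\times}$ with $f_{\gamma}\neq 0$ and uses the recurrence $cf_{\gamma/\ellmahl}=df_{\gamma}$ (and its iterates) to conclude that the whole orbit $\ellmahl^{\Z}\gamma$ lies in $\supp(f(z))$, which is not well-ordered. You instead work directly with the minimal element $\gamma_{0}=\val(f(z))$ and split on its sign. Your version has the small advantage of treating the case $\gamma_{0}=0$ explicitly via $(c-d)f_{0}=0$, whereas the paper's phrasing ``there exists $\gamma\in\Q^{\times}$ with $f_{\gamma}\neq 0$'' implicitly presupposes this case has been handled; the paper's orbit argument, on the other hand, avoids the case split. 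Both are short and essentially equivalent.
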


\begin{proof}
The first assertion follows from the fact that, for any $f(z)=\sum _{{\gamma\in \Q }}f_{\gamma}z^{\gamma} \in \Hahn_{R}$, the constant term of $\malop{\ellmahl}(f(z))-f(z)=\sum _{{\gamma\in \Q }}(f_{\gamma/\ellmahl}-f_{\gamma})z^{\gamma}$ is equal to $f_{0/\ellmahl}-f_{0}=0$.

Let us prove the second assertion. Consider $c,d$ as in the statement of the lemma and  let $f(z)=\sum _{{\gamma\in \Q }}f_{\gamma}z^{\gamma} \in \Hahn_{R}$ be such that $c \malop{\ellmahl}(f)-df=0$, {\it i.e.}, such that, for all $\gamma \in \Q$, $c f_{\gamma/\ellmahl}-df_{\gamma} =0$. If $f(z) \neq 0$, then there exists $\gamma \in \Q^{\times}$ such that $f_{\gamma} \neq 0$ and the latter equation implies that $f_{\ellmahl^{k} \gamma} \neq 0$ for all $k \in \Z$, {\it i.e.}, that $\ellmahl^{\Z}\gamma \subset \supp(f(z))$. This contradicts the fact that  $\supp(f(z))$ is well-ordered. 
\end{proof}

\bibliographystyle{alpha}
\bibliography{biblio}
\end{document}